\documentclass[11pt]{amsart}

\usepackage[T1]{fontenc}
\usepackage[utf8]{inputenc}
\usepackage{lmodern}
\usepackage{microtype}
\usepackage{amsmath,amssymb,amsfonts,amsthm,mathtools}
\usepackage{enumitem}
\usepackage[colorlinks=true,linkcolor=blue,citecolor=blue,urlcolor=blue]{hyperref}
\hypersetup{
  pdftitle={Growing Alphabets in Canonical Shuffle Experiments: Likelihood-Ratio Laws, Estimation Bounds, and Low-Budget Equivariant Design},
  pdfauthor={Alex Shvets}
}

\setlist[enumerate]{label=(\roman*),itemsep=0.3em}
\setlist[itemize]{itemsep=0.3em}

\numberwithin{equation}{section}

\newtheorem{theorem}{Theorem}[section]
\newtheorem{lemma}[theorem]{Lemma}
\newtheorem{proposition}[theorem]{Proposition}
\newtheorem{corollary}[theorem]{Corollary}
\theoremstyle{definition}
\newtheorem{definition}[theorem]{Definition}
\newtheorem{remark}[theorem]{Remark}
\newtheorem{example}[theorem]{Example}

\newcommand{\R}{\mathbb R}
\newcommand{\N}{\mathbb N}
\newcommand{\E}{\mathbb E}
\newcommand{\PP}{\mathbb P}
\newcommand{\one}{\mathbf 1}
\newcommand{\dd}{\,\mathrm d}
\newcommand{\iid}{\stackrel{\mathrm{i.i.d.}}{\sim}}
\newcommand{\law}{\mathcal L}
\newcommand{\TV}{\mathrm{TV}}
\newcommand{\KL}{\mathrm{KL}}
\newcommand{\LeCam}{\Delta}
\newcommand{\GDP}{\mathrm{GDP}}

\newcommand{\Bin}{\mathrm{Bin}}
\newcommand{\Mult}{\mathrm{Mult}}
\newcommand{\Normal}{\mathcal N}

\DeclareMathOperator{\Var}{Var}
\DeclareMathOperator{\Cov}{Cov}
\DeclareMathOperator{\diag}{diag}

\title[Growing Alphabets and Canonical Shuffle Experiments]{Growing Alphabets in Canonical Shuffle Experiments:\\
Likelihood-Ratio Laws, Estimation Bounds, and Low-Budget Equivariant Design}

\author{Alex Shvets}
\address{Independent Researcher, Haifa, Israel}
\email{alt178332@gmail.com}

\date{May 2026}

\subjclass[2020]{62B15, 68P27, 60F05}
\keywords{shuffle model, differential privacy, growing alphabet, likelihood ratio, privacy amplification, chi-square divergence, obstruction, mechanism design, frequency estimation, minimax lower bound}

\begin{document}

\begin{abstract}
We study the canonical one-step neighboring shuffle experiment generated by finite-output
$\varepsilon_0$-locally differentially private $d$-ary channels along growing alphabets, and the
associated finite-output frequency-estimation and mechanism-design problems under a canonical
pairwise $\chi^2$-budget.  The central object is the pairwise likelihood-ratio law $\nu_{ab,d}$,
the pushforward under the null row of the row ratio $W_d(\cdot\mid b)/W_d(\cdot\mid a)$.  We prove
an exact compression theorem: for the canonical homogeneous neighboring pair, the shuffled
histogram experiment is exactly equivalent at every finite $n$ to the quotient multinomial
experiment generated by $\nu_{ab,d}$.  Consequently, raw alphabet size is not the governing
invariant; alphabet growth improves canonical shuffled privacy precisely through collapse of the
worst pairwise likelihood-ratio law to $\delta_1$.  We establish the sharp pure-LDP endpoint
principle that the pairwise $\chi^2$ between two rows is at most $(e^{\varepsilon_0}-1)^2/e^{\varepsilon_0}$,
with equality if and only if the law is the endpoint two-point law, and we construct full-support
interior obstruction families on every even alphabet whose canonical shuffled privacy curve agrees
exactly with binary randomized response for all $d$.  We then prove a sharp diluting/persistent
dichotomy and give explicit finite-$n$ canonical hockey-stick bounds in the diluting case.
On the statistical side, the worst-case pairwise $\chi^2$-budget $\chi_\ast(W)$ governs an Assouad
lower bound for arbitrary estimators of the input frequency vector in the i.i.d.\ multinomial
model from the shuffled histogram, in a global two-regime form.  A symmetrization theorem reduces
the uniform-point Fisher criterion to permutation-equivariant channels.  Under a canonical
pairwise $\chi^2$-budget, calibrated GRR is not optimal; the optimal mechanism within the
permutation-equivariant low-budget regime is augmented GRR, which concentrates an aggressive local
signal on a random informative subset of users.  All privacy statements are about the canonical
homogeneous neighboring experiment, not the full shuffled-DP profile over arbitrary backgrounds.
\end{abstract}

\maketitle

\section{Introduction}

Let $n \ge 1$ users hold private inputs $x_1,\dots,x_n$.  In the shuffle model each user applies a local
randomizer and a trusted shuffler releases the multiset of local messages, equivalently the released histogram.
This model, initiated by the Encode--Shuffle--Analyze architecture of Bittau et al.\ \cite{BittauEtAl2017},
sits between local and central differential privacy and yields privacy amplification by anonymity
\cite{CheuEtAl2019,ErlingssonEtAl2019,BalleEtAl2019,FeldmanMcMillanTalwar2021,FeldmanMcMillanTalwar2023,TakagiLiew2026}.

For fixed finite output alphabets the asymptotic structure of neighboring shuffle experiments is now rather
well understood.  Part I of the trilogy \cite{ShvetsPartI} develops a sharp Gaussian/LAN/GDP theory in the
full-support regime; Part II \cite{ShvetsPartII} identifies critical Poisson, Skellam, and compound-Poisson
limits when the local randomizer itself becomes increasingly concentrated; Part III \cite{ShvetsPartIII}
describes the finite-alphabet dominant-block quotient geometry and its hybrid Gaussian--compound-Poisson
limit.  The present paper addresses a different question, left open in the trilogy and particularly pointed in
the discussion of Part III: what happens when the output alphabet grows?

A first guess is that more output symbols should automatically help privacy.  This is indeed what happens
for $d$-ary generalized randomized response (GRR): if one input changes from $a$ to $b$, then only the
two output coordinates $a$ and $b$ carry nontrivial pairwise likelihood-ratio values, while the remaining
$d-2$ coordinates are pairwise neutral.  Consequently the pairwise $\chi^2$ divergence is of order $d^{-1}$,
and the corresponding canonical shuffled privacy curve collapses as $d \to \infty$.
This $d$-dependent improvement for $k$-ary randomized response was first established theoretically by
Feldman, McMillan, and Talwar \cite[Corollary~4.2]{FeldmanMcMillanTalwar2021},
who proved a blanket privacy bound of order $e^{\varepsilon_0}/\!\sqrt{kn}$ for $k$-RR.
Our Proposition~\ref{prop:grr} gives the exact experiment-level counterpart.

The main message of the present paper is twofold.  First, the improvement of GRR is mechanism-specific, not universal.  There exist explicit
$\varepsilon_0$-LDP channels on $d$ symbols for which the pairwise $\chi^2$ divergence stays constant as
$d \to \infty$, and for which the exact canonical shuffled privacy curve of a worst-case neighboring pair is
\emph{identical} to the binary randomized-response privacy curve for every $d$.
Thus growing alphabets do not automatically amplify canonical shuffled privacy.
Second, the shuffle model has its own optimal mechanism geometry, fundamentally different from local DP.
The locally optimal mechanism (GRR at matched budget) is \emph{not} the low-budget equivariant optimum; the
low-budget equivariant optimum
mechanism concentrates an aggressive local signal on a random fraction of users.
In the low-budget regime this thinned augmented-GRR construction is, in fact, optimal among all
permutation-equivariant channels for the projected unbiased inverse-estimator risk, up to conditionally identical output refinements.

The correct invariant is not the raw alphabet size $d$, nor the ambient rank of the channel matrix, nor the
number of output coordinates on which rows differ.  It is the pushforward law of the pairwise likelihood ratio
\[
w_{ab,d}(y):=\frac{W_d(y\mid b)}{W_d(y\mid a)}
\]
under the null row $W_d(\cdot\mid a)$.  This law lives on the fixed compact interval
$[e^{-\varepsilon_0},e^{\varepsilon_0}]$ and completely determines the canonical one-step neighboring
shuffle experiment.

Thus the paper should be read less as a statement about alphabet size itself and more as an exact
invariance theory for one-step anonymous contamination.  Alphabet growth improves canonical shuffled
privacy precisely to the extent that the worst pairwise likelihood-ratio law loses mass away from $1$:
the diluting regime is exactly the collapse of these laws to $\delta_1$, while persistent
nondegenerate limits keep the experiment on the ordinary $n^{-1/2}$ shuffled Gaussian/GDP scale.
The scalar pairwise divergence
\[
I_{0,d}(a,b)=\int (r-1)^2\,\nu_{ab,d}(\mathrm dr)
\]
is the second moment of this exact invariant, and its worst-case version
\[
\chi_\ast(W):=\max_{a\neq b}\chi^2\!\bigl(W(\cdot\mid b)\,\|\,W(\cdot\mid a)\bigr)
\]
is the governing finite-dimensional obstruction.  On the privacy side, $\chi_\ast$ distinguishes
diluting from persistent growing-alphabet families and gives explicit finite-$n$ hockey-stick
bounds in the diluting case.  On the estimation side, the same quantity enters a universal
Assouad lower bound for frequency estimation, valid for arbitrary finite-output channels and
arbitrary estimators.  On the design side, taking $\chi_\ast$ as the canonical shuffle budget
leads to a different optimum from the locally calibrated GRR mechanism: in the low-budget symmetric
regime, the optimal channel is augmented GRR, which concentrates an aggressive local signal on a
random informative subset of messages.  Thus privacy obstruction, statistical hardness, and
low-budget equivariant design are governed by the same carrier: the pairwise likelihood-ratio
law, with $\chi_\ast$ as its canonical second-moment summary.

We now summarize the main results, organized into three groups: privacy structure, estimation bounds, and mechanism design.

\subsection*{I. Privacy structure}

\smallskip
\noindent
\textbf{(A) Exact likelihood-ratio quotient compression} (Theorem~\ref{thm:compression}).
For the canonical neighboring pair in which one input changes from $a$ to $b$,
the full histogram experiment is exactly equivalent to the multinomial experiment on the
likelihood-ratio level sets of $w_{ab,d}$.  The proof is exact and finite-$n$.

\smallskip
\noindent
\textbf{(B) Universal extremal $\chi^2$ bound} (Theorem~\ref{thm:chi2-bound}).
For every $\varepsilon_0$-LDP channel and every pair $a \neq b$,
\[
\chi^2\!\bigl(W_d(\cdot\mid b)\,\|\,W_d(\cdot\mid a)\bigr)
\le \frac{(e^{\varepsilon_0}-1)^2}{e^{\varepsilon_0}}.
\]
The bound is sharp, and equality holds precisely for endpoint two-point likelihood-ratio laws.

\smallskip
\noindent
\textbf{(C) Explicit obstruction family} (Theorem~\ref{thm:half-block}).
For every even $d$ we construct a full-support cyclic half-block channel $W_d$ for which,
along an opposite pair of inputs, the likelihood-ratio law is the same fixed two-point law for all $d$.
Hence the exact canonical shuffled privacy curve is exactly the binary randomized-response curve for all $d$.

\smallskip
\noindent
\textbf{(D) Sharp geometric dichotomy} (Theorem~\ref{thm:dichotomy}).
A channel family $\{W_d\}$ is \emph{diluting} iff its worst-case pairwise likelihood-ratio law collapses to
$\delta_1$, equivalently iff its worst-case pairwise $\chi^2$ divergence tends to zero.
It is \emph{persistent} iff some pairwise likelihood-ratio law has a nondegenerate subsequential limit,
equivalently iff the worst-case pairwise $\chi^2$ has positive limsup.
Persistent families remain on the ordinary shuffled Gaussian/GDP scale $n^{-1/2}$, with no extra
$d$-gain; diluting families yield asymptotically perfect fixed-$\varepsilon$ privacy.

\subsection*{II. Estimation bounds}

\smallskip
\noindent
\textbf{(E) Universal estimation lower bound}
(Theorems~\ref{thm:near-vertex-cr} and~\ref{thm:assouad-risk}).
For any channel $W:[d]\to\Delta(\mathcal Y)$ with arbitrary output alphabet and any estimator
$\widehat\theta$ of the input frequency vector from the shuffled histogram,
the minimax $\ell_2^2$ risk is at least of order $(d-1)/(n\chi_\ast(W))$
in the regime $n\chi_\ast(W)\gtrsim d^2$.
We give two independent proofs: a near-vertex Cram\'er--Rao argument for locally unbiased
estimators, and an Assouad argument for arbitrary estimators.
The quantity $\chi_\ast(W)=\max_{a\neq b}\chi^2(W(\cdot\mid b)\,\|\,W(\cdot\mid a))$ is therefore
not only the canonical privacy invariant but also a universal statistical obstruction.

\smallskip
\noindent
\textbf{(F) Symmetrization} (Theorem~\ref{thm:symmetrization}).
Averaging any channel over all input permutations does not increase the pairwise $\chi^2$ budget
and equalizes the Fisher eigenvalues.  Hence, for the uniform-point Fisher criterion under a budget
upper bound $\chi_\ast(W)\le C$, one may
restrict without loss of generality to permutation-equivariant channels.

\subsection*{III. Mechanism design}

\smallskip
\noindent
\textbf{(G) GRR is not universally optimal; the thinning principle and low-budget symmetric optimality}
(Theorem~\ref{thm:grr-mixtures}, Corollary~\ref{cor:aug-better}, Theorem~\ref{thm:equivariant-opt}).
Within the natural-orbit class of GRR blocks with null refinements, the optimal mechanism at canonical
shuffle budget $C$ for the projected unbiased inverse-estimator risk is not the calibrated GRR channel.
For $0<C\le C_\ast(d)$, the unique optimum in this class is an augmented GRR channel:
a fraction $p=C/C_\ast(d)$ of users applies GRR with $\lambda_\ast=\sqrt{d-1}$,
while the remaining fraction sends a common null symbol.
Thus the shuffle model rewards concentrating the local signal on a random informative subset of
messages rather than spreading noise uniformly.
Theorem~\ref{thm:equivariant-opt} shows that the same low-budget augmented-GRR construction is
optimal among all permutation-equivariant channels with arbitrary finite output alphabet, up to
conditionally identical output refinements.

\smallskip
\noindent
\textbf{(H) GRR is the unique optimizer within subset selection}
(Theorem~\ref{thm:ss-opt}).
Among all subset-selection mechanisms $SS(d,s,\lambda)$ calibrated to a common canonical $\chi^2$-budget,
the matched-budget risk is strictly increasing in the subset size $s$.
Hence GRR ($s=1$) is the unique optimizer in the entire subset-selection family.

The paper is nearly self-contained.  We reproduce the canonical exact likelihood-ratio identity from
Part I for convenience.  The main external probabilistic tools are the classical Berry--Esseen
inequality for i.i.d.\ sums and sequential compactness of probability measures on compact intervals.
The estimation lower bounds and the symmetric low-budget mechanism-design theorem are proved in full.

\paragraph{Scope and limitations.}
Throughout this paper, ``privacy structure'' refers to the canonical homogeneous neighboring
experiment in which one input changes from $a$ to $b$ while the remaining $n-1$ inputs are held
fixed at the common value $a$.  This is an exact, finite-$n$ object whose likelihood ratio depends
only on the pairwise likelihood-ratio law.  Full shuffled-$(\varepsilon,\delta)$-DP requires a
supremum over arbitrary neighboring datasets with arbitrary backgrounds for the unchanged users;
the canonical pair is one such pair, but not all such pairs reduce to it.  The compression
theorem, the universal $\chi^2$ endpoint principle, the half-block obstruction, the
diluting/persistent dichotomy, and the finite-$n$ certificate of
Corollary~\ref{cor:finite-n-cert} are statements about the canonical homogeneous experiment and
its directed hockey-stick profiles, not about the full shuffled-DP profile over arbitrary
backgrounds.  The estimation lower bounds (Sections~\ref{sec:estimation-lower}--\ref{sec:symmetrization})
are stated in the i.i.d.\ multinomial frequency-estimation model with the released shuffled
histogram as the observed statistic.  The mechanism-design results
(Sections~\ref{sec:thinning}--\ref{sec:ss}) are stated in the exact fixed-composition model and
concern the projected unbiased inverse-estimator risk under a pairwise $\chi_\ast$-budget.
Extending the privacy structure to arbitrary heterogeneous backgrounds is a natural direction left
open here.

\paragraph{Separation from prior work by the author.}
The present paper uses the canonical likelihood-ratio notation developed in earlier work, but its
main results are not consequences of the fixed-alphabet limit theory in that work.  The following
table records the separation of objects and regimes.

\begin{center}
\footnotesize
\renewcommand{\arraystretch}{1.18}
\begin{tabular}{p{0.16\textwidth}p{0.20\textwidth}p{0.25\textwidth}p{0.27\textwidth}}
\hline
Work & Central object / regime & What it proves & What is not addressed there \\
\hline
Part~I~\cite{ShvetsPartI}
&
Binary input, finite output, full support; Gaussian/LAN/GDP regime; also multi-message extensions.
&
Exact canonical LR identity, LAN/Gaussian shuffle limits, GDP privacy curves, and fixed-alphabet
multi-message asymptotics.
&
No growing-alphabet limit, no pure-LDP $\chi^2$ endpoint extremality, no full-support obstruction
families, no diluting/persistent dichotomy, no Assouad frequency-estimation lower bound, and no
mechanism design under a $\chi_\ast$-budget.
\\
\hline
Part~II~\cite{ShvetsPartII}
&
Finite alphabet with local randomizers entering singular or high-privacy regimes, including
$\varepsilon_0\to\infty$ scalings.
&
Non-Gaussian shuffle limits such as Poisson, Skellam, and compound-Poisson limits in special
critical regimes.
&
No growing-$d$ obstruction theory, no finite-$d$ pure-LDP endpoint principle, and no
frequency-estimation or mechanism-design lower bounds.
\\
\hline
Part~III~\cite{ShvetsPartIII}
&
Finite-alphabet dominant-block quotient geometry and boundary regimes.
&
Dominant-block reductions and hybrid Gaussian--compound-Poisson limits for fixed finite alphabets.
&
No growing-alphabet behavior, no exact pure-LDP $\chi^2$ extremality, no explicit full-support
persistent obstruction family indexed by $d$, and no augmented-GRR optimality under
$\chi_\ast$-budget.
\\
\hline
Anchored LR Geometry~\cite{ShvetsAnchoredLR}
&
Anchored simplex law as a common gauge for finite-output channels.
&
Universal extremality of binary randomized response, a rigidity converse, a trace-cap/two-orbit
$\chi_\ast$-frontier, and low-budget augmented-RR design in the anchored-law framework.
&
No growing-$d$ LR-law dilution/persistence theory, no explicit full-support persistent obstruction
family indexed by $d$, no finite-$n$ canonical dilution certificate, and no Assouad lower bound for
growing-alphabet multinomial frequency estimation.
\\
\hline
This paper
&
Growing $d$-ary finite-output pure-LDP channels; canonical homogeneous neighboring shuffle
experiments; frequency estimation and equivariant design under pairwise $\chi_\ast$-budget.
&
Exact LR-quotient compression, sharp pure-LDP $\chi^2$ endpoint principle, full-support obstruction
families, diluting/persistent dichotomy, finite-$n$ canonical certificates, Assouad lower bounds,
symmetrization, augmented-GRR low-budget optimality, and subset-selection comparison.
&
Does not claim a full shuffled-DP profile over arbitrary heterogeneous backgrounds, does not treat
continuous non-pure-LDP randomizers, and does not solve the global permutation-equivariant frontier
above the low-budget threshold $C_\ast(d)$.
\\
\hline
\end{tabular}
\end{center}

\paragraph{Relation to Takagi and Liew.}
Takagi and Liew~\cite{TakagiLiew2026,TakagiLiew2604} have developed a complementary shuffle-index
framework: their PODS work~\cite{TakagiLiew2026} introduces a scalar shuffle index for a regular
class of local randomizers (including Gaussian-type non-pure-LDP mechanisms) and establishes
asymptotic privacy-profile bounds beyond pure local DP, while their subsequent
preprint~\cite{TakagiLiew2604} studies single-message shuffle optimization for unbiased
$d$-dimensional vector mean estimation over $\mathbb B_2^d$ and proves a high-privacy minimax
theory with normalized risk scale $d\chi^2$, attained asymptotically by a blanket-mixed Gaussian
mechanism.  The present paper treats a different finite-output pure-LDP problem: $d$-ary
$\varepsilon_0$-LDP channels under growing alphabets, exact LR-quotient compression, and
frequency-estimation/design under the pairwise canonical budget $\chi_\ast(W)$.  The two
formulations optimize different objects: the shuffle index is a privacy/noise scale on the local
randomizer, while $\chi_\ast(W)$ is a pairwise signal/divergence budget on the channel rows.
Their shuffle-index lower bound and the present paper's $\chi_\ast$-based lower bound are
suggestive of a reciprocal signal-to-noise scaling, but the two invariants are not formally
interchangeable, and the two formulations address different primary regimes and canonical
objects: their shuffle-index framework covers a broad class of regular randomizers, including
non-pure-LDP Gaussian-type mechanisms and some finite-output mechanisms such as $k$-RR, while
the present paper develops a finite-output pure-LDP row-ratio geometry for growing $d$, canonical
neighboring pairs, and $\chi_\ast$-budgeted frequency estimation and design.  Beyond the
canonical-experiment scope addressed here, the present paper contributes the sharp pure-LDP
pairwise $\chi^2$ extremal bound, explicit persistent obstruction families and the
diluting/persistent dichotomy, finite-output symmetrization, and augmented-GRR low-budget
optimality among permutation-equivariant channels; these questions are outside the scope of the
cited works.

\section{Model and notation}

Fix $d \ge 2$ and $n \ge 1$, and write $[d]=\{1,\dots,d\}$.
A local randomizer is a Markov kernel
\[
W_d : [d] \to \Delta([d]),
\qquad
x \mapsto W_d(\cdot\mid x).
\]
The released shuffle transcript is the histogram
\[
N=(N_1,\dots,N_d), \qquad \sum_{y=1}^d N_y = n,
\]
obtained by drawing independent local outputs
$Y_i \sim W_d(\cdot\mid x_i)$ and then forgetting the order.

We write
\[
\Delta_d:=\Bigl\{\theta\in[0,1]^d:\sum_{x=1}^d \theta_x=1\Bigr\}
\]
for the probability simplex on \([d]\).

\begin{definition}
The channel $W_d$ is \emph{$\varepsilon_0$-locally differentially private} if
\[
\max_{x,x' \in [d]} \max_{y \in [d]}
\frac{W_d(y\mid x)}{W_d(y\mid x')} \le e^{\varepsilon_0},
\]
with the convention that $0/0=1$ and $p/0=+\infty$ for $p>0$.
\end{definition}

\begin{lemma}[Common support under pure LDP]\label{lem:common-support}
Let $W_d$ be $\varepsilon_0$-LDP with $\varepsilon_0<\infty$.
For each output symbol $y \in [d]$, either $W_d(y\mid x)=0$ for every input $x$, or
$W_d(y\mid x)>0$ for every input $x$.
\end{lemma}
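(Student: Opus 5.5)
The argument is a direct contrapositive using the stated convention for ratios with a zero denominator. Fix an output symbol $y\in[d]$ and suppose the dichotomy fails. Then there are inputs $x,x'\in[d]$ with $W_d(y\mid x)>0$ and $W_d(y\mid x')=0$; in particular $x\neq x'$, since a single row cannot be both zero and nonzero at $y$.

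By the convention in the definition of $\varepsilon_0$-LDP, $p/0=+\infty$ whenever $p>0$. Hence
\[
\frac{W_d(y\mid x)}{W_d(y\mid x')}=+\infty .
\]
The maximum over pairs $(x,x')$ and outputs $y$ is therefore $+\infty$. This contradicts the bound $\max\le e^{\varepsilon_0}<\infty$, which holds because $\varepsilon_0<\infty$.

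So for each $y$, the set $\{x: W_d(y\mid x)>0\}$ is either empty or all of $[d]$, which is exactly the claim. The only cases remaining are $W_d(y\mid\cdot)\equiv 0$, where every ratio is $0/0=1$ and the condition holds trivially, and $W_d(y\mid\cdot)>0$ for every input, where every ratio is finite.

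Nothing here is a real obstacle. The one point to check is that the convention $0/0=1$ does not interfere: in the all-zero case it simply makes the LDP condition vacuous at that $y$, and it never occurs in the mixed case. The finiteness assumption $\varepsilon_0<\infty$ is essential. With $\varepsilon_0=\infty$ the constraint is empty and mixed supports are allowed.
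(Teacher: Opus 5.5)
Your proof is correct and matches the paper's argument: if some input gives $W_d(y\mid x)>0$ and another gives $W_d(y\mid x')=0$, the convention $p/0=+\infty$ produces an infinite ratio, contradicting the finite bound $e^{\varepsilon_0}$. Your extra remarks on the $0/0=1$ convention in the all-zero case are harmless but not needed.
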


\begin{proof}
If $W_d(y\mid x)=0$ for some $x$ and $W_d(y\mid x')>0$ for some $x'$, then
\[
\frac{W_d(y\mid x')}{W_d(y\mid x)} = +\infty,
\]
contradicting finite $\varepsilon_0$-LDP.
\end{proof}

Henceforth we discard outputs of common zero mass and identify the output alphabet with this common support,
so that all rowwise likelihood ratios are finite.
Throughout the paper we assume \(\varepsilon_0>0\); the case $\varepsilon_0=0$ forces $W_d(\cdot\mid x)=W_d(\cdot\mid x')$ for all $x,x'$, giving $I_{0,d}(a,b)=0$ and trivial privacy for every $d$.

\subsection*{Canonical neighboring pair}

Fix distinct inputs $a,b \in [d]$.
The \emph{canonical pair} is the binary experiment
\[
P_{n,d}^{ab} \quad \text{versus} \quad Q_{n,d}^{ab},
\]
where under $P_{n,d}^{ab}$ all $n$ users hold input $a$, while under $Q_{n,d}^{ab}$ one distinguished user
holds input $b$ and the remaining $n-1$ users hold input $a$.
The released statistic is always the histogram $N$.

Write
\[
P_d^{a}:=W_d(\cdot\mid a), \qquad P_d^{b}:=W_d(\cdot\mid b),
\]
and define the pairwise likelihood-ratio function
\[
w_{ab,d}(y):=\frac{P_d^{b}(y)}{P_d^{a}(y)},
\qquad y \in [d].
\]
By $\varepsilon_0$-LDP,
\[
e^{-\varepsilon_0} \le w_{ab,d}(y)\le e^{\varepsilon_0}
\qquad\forall y.
\]

We also define the pairwise chi-square divergence
\[
I_{0,d}(a,b)
:=
\chi^2\!\bigl(P_d^{b}\,\|\,P_d^{a}\bigr)
=
\sum_{y=1}^d \frac{(P_d^{b}(y)-P_d^{a}(y))^2}{P_d^{a}(y)}
=
\E_{P_d^a}\!\bigl[(w_{ab,d}(Y)-1)^2\bigr].
\]
Equivalently,
\[
I_{0,d}(a,b)=\Var_{P_d^a}(w_{ab,d}(Y)),
\]
because $\E_{P_d^a}[w_{ab,d}(Y)]=1$.
In the notation of Part I, $I_{0,d}(a,b)$ is the canonical Fisher constant
$v^\top \Sigma^{+}v$ for the pair $(a,b)$; see \cite[Sec.~2.4]{ShvetsPartI}.

Finally, define the worst-case pairwise divergence
\[
I_d^\star := \max_{a\neq b} I_{0,d}(a,b).
\]

\subsection*{Privacy curves}

For a dominated binary experiment $Q \ll P$ with likelihood ratio $L=\dd Q/\dd P$,
the one-sided privacy curve is
\[
\delta_{Q\|P}(\varepsilon)
:=
\sup_A \{Q(A)-e^\varepsilon P(A)\}
=
\E_P\bigl[(L-e^\varepsilon)_+\bigr],
\qquad \varepsilon\ge0.
\]
Similarly,
\[
\delta_{P\|Q}(\varepsilon)=\E_P\bigl[(1-e^\varepsilon L)_+\bigr]
=\E_Q\bigl[(L^{-1}-e^\varepsilon)_+\bigr].
\]
The two-sided privacy curve is
\[
\delta_{P,Q}(\varepsilon):=\max\{\delta_{Q\|P}(\varepsilon),\delta_{P\|Q}(\varepsilon)\}.
\]

For $\mu>0$ we write
\[
\delta_{\GDP(\mu)}(\varepsilon)
:=
\Phi\!\left(-\frac{\varepsilon}{\mu}+\frac{\mu}{2}\right)
-
e^\varepsilon
\Phi\!\left(-\frac{\varepsilon}{\mu}-\frac{\mu}{2}\right),
\]
the one-sided privacy curve of the Gaussian pair
\(
\Normal(-\mu^2/2,\mu^2)
\)
versus
\(
\Normal(\mu^2/2,\mu^2)
\).

\section{LR-quotient compression}

We begin by reproducing the exact canonical likelihood-ratio identity from Part~I; compare
\cite[Lemma~3.2]{ShvetsPartI}.

\begin{lemma}[Exact canonical likelihood ratio]\label{lem:canonical-LR}
Fix $a\neq b$ and let $L_{n,d}^{ab}$ be the likelihood ratio
\[
L_{n,d}^{ab}(N):=\frac{Q_{n,d}^{ab}(N)}{P_{n,d}^{ab}(N)}.
\]
Then
\[
L_{n,d}^{ab}(N)=\frac1n \sum_{y=1}^d N_y\, w_{ab,d}(y). \tag{3.1}\label{eq:canonical-lr}
\]
\end{lemma}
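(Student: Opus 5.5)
We compute both histogram laws explicitly and take their ratio. Under $P_{n,d}^{ab}$ all $n$ outputs are i.i.d.\ from $P_d^a$, so the histogram is multinomial:
\[
P_{n,d}^{ab}(N)=\binom{n}{N_1,\dots,N_d}\prod_{y}P_d^a(y)^{N_y}.
\]
Under $Q_{n,d}^{ab}$ the histogram is the sum of the one-hot vector $e_{Y}$ of the distinguished user, with $Y\sim P_d^b$, and an independent $\Mult(n-1,P_d^a)$ histogram. Conditioning on the value $y$ of the distinguished output gives
\[
Q_{n,d}^{ab}(N)=\sum_{y:\,N_y\ge1} P_d^b(y)\binom{n-1}{N-e_y}\prod_{z}P_d^a(z)^{N_z-\delta_{zy}}.
\]
Here the support convention after Lemma~\ref{lem:common-support} guarantees that every $P_d^a(y)>0$, so all divisions below are legitimate. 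Both laws are supported on the same set of histograms summing to $n$, so $Q\ll P$.

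The key step is the identity
\[
\binom{n-1}{N-e_y}=\frac{N_y}{n}\binom{n}{N},
\]
which holds because $(n-1)!/\bigl((N_y-1)!\prod_{z\ne y}N_z!\bigr)=(N_y/n)\cdot n!/\prod_z N_z!$. It also holds trivially when $N_y=0$, since both sides vanish, so the sum may be taken over all $y$. Substituting, the $y$-th term of $Q_{n,d}^{ab}(N)$ equals
\[
\frac{N_y}{n}\,\frac{P_d^b(y)}{P_d^a(y)}\,P_{n,d}^{ab}(N).
\]
Summing over $y$ and dividing by $P_{n,d}^{ab}(N)>0$ yields \eqref{eq:canonical-lr}.

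An alternative route is to average the likelihood ratio over which user is distinguished. By exchangeability, conditionally on the ordered outputs the ratio is $\frac1n\sum_i w_{ab,d}(Y_i)$, which is a symmetric function and hence a function of $N$ alone, equal to $\frac1n\sum_y N_y w_{ab,d}(y)$. The only step needing care is the multinomial coefficient bookkeeping, including the $N_y=0$ terms; the rest is routine.
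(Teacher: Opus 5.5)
Your proposal is correct and matches the paper's proof: the paper likewise writes the multinomial law under $P_{n,d}^{ab}$, conditions on the distinguished user's output to express $Q_{n,d}^{ab}(N)$ as a sum of shifted multinomial terms (with the $N_z=0$ terms set to zero), and divides termwise to obtain $\frac1n\sum_z N_z\,w_{ab,d}(z)$. Your explicit coefficient identity $\binom{n-1}{N-e_y}=\frac{N_y}{n}\binom{n}{N}$ and the check that $P_{n,d}^{ab}(N)>0$ under the common-support convention spell out the paper's one-line division step.
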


\begin{proof}
Write $p_y=P_d^a(y)$ and $q_y=P_d^b(y)$.
Under $P_{n,d}^{ab}$ the histogram law is
\[
P_{n,d}^{ab}(N)
=
\frac{n!}{\prod_{y=1}^d N_y!}\prod_{y=1}^d p_y^{N_y}.
\]
Under $Q_{n,d}^{ab}$ one user is distributed according to $(q_y)$ and the remaining $n-1$ users according
to $(p_y)$.  Conditioning on the output of the distinguished user gives
\[
Q_{n,d}^{ab}(N)
=
\sum_{z=1}^d q_z
\frac{(n-1)!}{(N_z-1)!\prod_{y\neq z}N_y!}
\,p_z^{N_z-1}\prod_{y\neq z}p_y^{N_y},
\]
where by convention the term is zero if $N_z=0$.
Dividing by $P_{n,d}^{ab}(N)$ yields
\[
\frac{Q_{n,d}^{ab}(N)}{P_{n,d}^{ab}(N)}
=
\sum_{z=1}^d \frac{q_z}{p_z}\frac{N_z}{n}
=
\frac1n\sum_{z=1}^d N_z\,w_{ab,d}(z).
\]
\end{proof}

\begin{theorem}[LR-quotient compression]\label{thm:compression}
Fix a pair $a\neq b$.
Let
\[
\mathcal R_{ab,d}:=\{w_{ab,d}(y):y\in[d]\}
\subset [e^{-\varepsilon_0},e^{\varepsilon_0}]
\]
be the set of distinct pairwise likelihood-ratio values, and for each $r\in\mathcal R_{ab,d}$ define the
level set
\[
B_r:=\{y\in[d]: w_{ab,d}(y)=r\}.
\]
Let
\[
M_r:=\sum_{y\in B_r}N_y,
\qquad
r\in\mathcal R_{ab,d},
\]
and let $M=(M_r)_{r\in\mathcal R_{ab,d}}$.

Then the following hold.
\begin{enumerate}
\item Under $P_{n,d}^{ab}$, the vector $M$ is multinomial:
\[
M \sim \Mult\!\bigl(n;(p_r)_{r\in\mathcal R_{ab,d}}\bigr),
\qquad
p_r:=P_d^a(B_r).
\]
\item The exact likelihood ratio is
\[
L_{n,d}^{ab}(N)=\frac1n\sum_{r\in\mathcal R_{ab,d}} r\,M_r. \tag{3.2}\label{eq:compressed-lr}
\]
\item The statistic $M$ is sufficient for the binary experiment
\(
\{P_{n,d}^{ab},Q_{n,d}^{ab}\}
\).
Equivalently, the full shuffled histogram experiment is exactly equivalent to the experiment on the
$\lvert \mathcal R_{ab,d}\rvert$-dimensional quotient histogram $M$.
\end{enumerate}
\end{theorem}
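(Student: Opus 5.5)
The plan is to derive all three parts directly from Lemma~\ref{lem:canonical-LR}, using only the fact that the level sets $B_r$ partition the output alphabet.

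\textbf{Part (i).} Under $P_{n,d}^{ab}$ the outputs $Y_1,\dots,Y_n$ are i.i.d.\ with law $P_d^a$. Since $\{B_r\}_{r\in\mathcal R_{ab,d}}$ partitions $[d]$, the coordinate $M_r$ counts the indices $i$ with $Y_i\in B_r$. The vector $M$ is therefore the histogram of the i.i.d.\ labels $\rho(Y_i)$, where $\rho(y):=w_{ab,d}(y)$. These labels are i.i.d.\ with law $(p_r)$, $p_r=P_d^a(B_r)$, which is exactly the pushforward $\nu_{ab,d}$. Hence $M\sim\Mult(n;(p_r))$.

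\textbf{Part (ii).} Start from \eqref{eq:canonical-lr} and group the sum over $y$ according to the value of $w_{ab,d}(y)$:
\[
\sum_{y} N_y w_{ab,d}(y)=\sum_{r\in\mathcal R_{ab,d}} r\sum_{y\in B_r}N_y=\sum_r rM_r .
\]
This gives \eqref{eq:compressed-lr} immediately.

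\textbf{Part (iii).} By (ii), $L_{n,d}^{ab}=g(M)$ for an explicit function $g$. For a binary experiment with $Q\ll P$, the likelihood ratio is minimal sufficient. Concretely, apply the factorization theorem with the dominating measure $P_{n,d}^{ab}$: write $Q_{n,d}^{ab}(N)=g(M(N))\,P_{n,d}^{ab}(N)$ and $P_{n,d}^{ab}(N)=1\cdot P_{n,d}^{ab}(N)$. Both densities with respect to $P_{n,d}^{ab}$ are then functions of $M$, so $M$ is sufficient.

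Two facts make the factorization legitimate. First, $Q_{n,d}^{ab}\ll P_{n,d}^{ab}$. This holds because, after discarding null outputs via Lemma~\ref{lem:common-support}, every output has positive mass under $P_d^a$, so $P_{n,d}^{ab}$ charges every histogram with $\sum_y N_y=n$. Second, the identity in Lemma~\ref{lem:canonical-LR} holds at every such $N$.

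\textbf{Equivalence of experiments.} In one direction, $M$ is a deterministic function of $N$, so the quotient experiment is obtained from the full one by a Markov kernel. In the other direction, sufficiency means the conditional law of $N$ given $M$ is the same under both hypotheses. We can write it down explicitly. Given $M$, the counts in distinct blocks $B_r$ are conditionally independent. Within block $B_r$, the vector $(N_y)_{y\in B_r}$ is multinomial with $M_r$ trials and cell probabilities $P_d^a(y)/p_r$. Under $Q_{n,d}^{ab}$ the block probabilities are instead $P_d^b(y)/P_d^b(B_r)$, and these coincide with the $P_d^a$ ones because $P_d^b(y)=rP_d^a(y)$ on $B_r$.

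This explicit kernel gives the randomization from $M$ back to $N$, and hence Le~Cam deficiency zero in both directions. Computing the kernel and checking that it is the same under $Q$ is an alternative route to sufficiency that does not need the factorization argument.

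The main obstacle is the conditional-law computation under $Q_{n,d}^{ab}$, because that law is a mixture over the position of the distinguished user. I would handle it through the density route: the $Q$-density of $N$ equals $g(M)$ times the $P$-density, so conditioning on $M$ cancels $g(M)$ and returns the $P$-conditional law. That settles the equivalence without any direct mixture calculation.
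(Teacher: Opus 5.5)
Your proposal is correct and follows essentially the same route as the paper. Parts (i) and (ii) are identical. In (iii) you use that $L_{n,d}^{ab}$ is a function of $M$ and then cancel the factor $g(M)$ to get $Q_{n,d}^{ab}(N\mid M=m)=P_{n,d}^{ab}(N\mid M=m)$, which is exactly the paper's conditional-law argument.

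Your only addition is to write the backward kernel explicitly as a product of within-block multinomials with cell probabilities $P_d^a(y)/p_r$. Your phrase that ``under $Q$'' the block probabilities become $P_d^b(y)/P_d^b(B_r)$ is accurate only for the distinguished user. Since those probabilities coincide with the $P_d^a$ ones on each $B_r$, the conclusion still holds.
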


\begin{proof}
(i)  Under $P_{n,d}^{ab}$ each local output falls in block $B_r$ with probability
\(
p_r=P_d^a(B_r)
\),
independently across users, so the block counts are multinomial.

(ii)  Starting from \eqref{eq:canonical-lr} and grouping equal likelihood-ratio values,
\[
L_{n,d}^{ab}(N)
=
\frac1n \sum_{y=1}^d N_y\,w_{ab,d}(y)
=
\frac1n \sum_{r\in\mathcal R_{ab,d}} r\sum_{y\in B_r}N_y
=
\frac1n \sum_{r\in\mathcal R_{ab,d}} r\,M_r.
\]

(iii)  By (ii), the likelihood ratio is measurable with respect to $\sigma(M)$.
In a dominated binary experiment, this already implies sufficiency.  For completeness we verify the
conditional-law identity directly.
Fix a realization $m=(m_r)_{r\in\mathcal R_{ab,d}}$ of $M$.
Then for every histogram $N$ with $M(N)=m$,
\[
Q_{n,d}^{ab}(N)=L_{n,d}^{ab}(N)\,P_{n,d}^{ab}(N)
=
\left(\frac1n\sum_{r\in\mathcal R_{ab,d}}r\,m_r\right) P_{n,d}^{ab}(N).
\]
Summing over all histograms with the same block count vector $m$ gives
\[
Q_{n,d}^{ab}(M=m)
=
\left(\frac1n\sum_{r\in\mathcal R_{ab,d}}r\,m_r\right)P_{n,d}^{ab}(M=m),
\]
and therefore
\[
Q_{n,d}^{ab}(N\mid M=m)=P_{n,d}^{ab}(N\mid M=m).
\]
Thus the backward Markov kernel $m\mapsto \law_{P_{n,d}^{ab}}(N\mid M=m)$, defined arbitrarily on
realizations $m$ outside the support of $\law(M)$, reconstructs the original
histogram experiment from the quotient experiment.  This is exact experiment equivalence.
\end{proof}

\begin{corollary}[The correct invariant]\label{cor:correct-invariant}
For the canonical neighboring pair $(a,b)$ the exact experiment depends only on the pushforward law
\[
\nu_{ab,d}:=P_d^a \circ w_{ab,d}^{-1}
\]
on the compact interval $[e^{-\varepsilon_0},e^{\varepsilon_0}]$.
In particular, the raw alphabet size $d$ is not a complete invariant of the canonical shuffle experiment.
\end{corollary}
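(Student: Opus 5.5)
The plan is to obtain Corollary~\ref{cor:correct-invariant} directly from Theorem~\ref{thm:compression}: first show that the quotient experiment on $M$ is determined by $\nu_{ab,d}$ and $n$ alone, and then give an example for the incompleteness of $d$.

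\textbf{Step 1: the quotient experiment is a function of $\nu_{ab,d}$.} The atoms of $\nu_{ab,d}$ are exactly the points of $\mathcal R_{ab,d}$, with masses $\nu_{ab,d}(\{r\})=P_d^a(B_r)=p_r$. This relies on the standing convention that outputs of common zero mass have been discarded, which makes every $p_r$ positive. By Theorem~\ref{thm:compression}(i), the law of $M$ under $P_{n,d}^{ab}$ is $\Mult(n;(\nu_{ab,d}(\{r\}))_r)$, indexed by the support of $\nu_{ab,d}$. By \eqref{eq:compressed-lr}, the density of $Q_{n,d}^{ab}\circ M^{-1}$ with respect to $P_{n,d}^{ab}\circ M^{-1}$ is $m\mapsto n^{-1}\sum_r r\,m_r$, again determined by the support of $\nu_{ab,d}$. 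So both laws of the binary experiment on $M$ are explicit functionals of $(n,\nu_{ab,d})$.

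\textbf{Step 2: transfer back to the histogram experiment.} Theorem~\ref{thm:compression}(iii) says the histogram experiment is exactly equivalent to the $M$-experiment. Hence every experiment-level quantity depends only on $\nu_{ab,d}$: Le~Cam class, the law of $L_{n,d}^{ab}$ under $P$ and under $Q$, and therefore both directed privacy curves $\delta_{Q\|P}$ and $\delta_{P\|Q}$. In particular, two channels, possibly on different alphabets, with the same pushforward law $\nu$ yield equivalent canonical experiments. Since $w_{ab,d}$ takes values in $[e^{-\varepsilon_0},e^{\varepsilon_0}]$ by LDP, $\nu_{ab,d}$ is supported on that interval.

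\textbf{Step 3: $d$ is not a complete invariant.} It suffices to exhibit two channels with the same $d$ but different laws $\nu$. Take GRR and the trivial channel with all rows equal on the same $d$. The trivial channel gives $\nu=\delta_1$, hence the trivial experiment, while GRR gives a nondegenerate law. Their experiments differ, for instance through $\delta_{Q\|P}(0)>0$ versus $\delta_{Q\|P}(0)=0$. Alternatively, refining one output symbol into two copies with proportional rows changes $d$ but leaves $\nu$ unchanged, so different values of $d$ can give the same experiment. Either example suffices, and the half-block family of Theorem~\ref{thm:half-block}, announced in the introduction but not yet proved, makes the same point more dramatically.

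\textbf{Main obstacle.} Nothing here is hard. The only subtlety is bookkeeping in Step~1: the index set of $M$ is the support of $\nu$, not $[d]$. One should state that equivalence is meant up to relabelling coordinates by their $r$-values, and that zero-mass outputs were removed so that atoms and blocks correspond bijectively.
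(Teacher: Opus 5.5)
Your proposal is correct and follows the paper's proof: the paper likewise reads off from Theorem~\ref{thm:compression} that the quotient law is the multinomial generated by the atom masses of $\nu_{ab,d}$, and that the likelihood ratio is the linear functional $m\mapsto n^{-1}\sum_r r\,m_r$, so the experiment is a function of $(n,\nu_{ab,d})$. Your Step~3 makes explicit the ``in particular'' clause that the paper leaves implicit; the GRR-versus-trivial-channel example is the cleaner one, since your refinement example changes only the output alphabet and so leaves the $[d]\to\Delta([d])$ setting of Section~2.
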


\begin{proof}
The quotient histogram law in Theorem~\ref{thm:compression} is the multinomial law generated by the
atom masses of $\nu_{ab,d}$, and the exact likelihood ratio is the linear functional
\(
m \mapsto n^{-1}\sum_r r m_r
\).
\end{proof}

\section{Universal chi-square bound and extremal channels}

We next prove the universal upper bound on pairwise $\chi^2$ divergence.
The proof is a one-line variance identity usually attributed to Bhatia and Davis
\cite{BhatiaDavis2000}; we include it for completeness.

\begin{lemma}[Bhatia--Davis variance bound]\label{lem:bhatia-davis}
Let $X$ be a real random variable satisfying
\[
m \le X \le M \quad \text{a.s.}
\]
and let $\mu=\E[X]$.
Then
\[
\Var(X)\le (M-\mu)(\mu-m). \tag{4.1}\label{eq:bd}
\]
Equality holds if and only if $X\in\{m,M\}$ almost surely.
\end{lemma}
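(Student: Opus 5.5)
The plan is to use the pointwise inequality $(M-X)(X-m)\ge 0$, which holds almost surely because $m\le X\le M$, and then take expectations. Expanding the product gives
\[
(M-X)(X-m) = -X^2 + (M+m)X - Mm .
\]
The product is nonnegative and bounded, since $X$ is bounded. So its expectation exists and satisfies
\[
0 \le \E\bigl[(M-X)(X-m)\bigr] = -\E[X^2] + (M+m)\mu - Mm .
\]
Boundedness of $X$ also guarantees that all moments are finite, so this step is legitimate.

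Next, substitute $\E[X^2]=\Var(X)+\mu^2$. This rearranges to
\[
\Var(X) \le -\mu^2 + (M+m)\mu - Mm = (M-\mu)(\mu-m),
\]
which is \eqref{eq:bd}.

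For the equality case, note that equality in \eqref{eq:bd} holds exactly when $\E[(M-X)(X-m)]=0$. A nonnegative random variable has zero expectation if and only if it vanishes almost surely. Hence equality holds if and only if $(M-X)(X-m)=0$ a.s., that is, $X\in\{m,M\}$ almost surely. Conversely, if $X\in\{m,M\}$ a.s., the product is identically zero, so every inequality above is an identity.

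There is no real obstacle here. The only point needing care is the equality characterization, which rests on the "nonnegative with zero mean implies zero a.s." fact. The degenerate case $m=M$, where both sides vanish, is included automatically.
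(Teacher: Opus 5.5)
Your proposal is correct and is essentially the paper's proof: expand the a.s.\ nonnegative product $(M-X)(X-m)$, take expectations, substitute $\E[X^2]=\Var(X)+\mu^2$, and characterize equality by the fact that a nonnegative random variable with zero mean vanishes almost surely. Your explicit remarks on integrability and on the degenerate case $m=M$ are small additions that the paper leaves implicit.
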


\begin{proof}
Expand the nonnegative random variable $(M-X)(X-m)$:
\[
(M-X)(X-m)= -X^2 +(M+m)X - Mm.
\]
Taking expectations and using
\(
\E[X^2]=\Var(X)+\mu^2
\),
we obtain
\[
\begin{aligned}
\E[(M-X)(X-m)]
&=
-\Var(X)-\mu^2 +(M+m)\mu - Mm\\
&=
(M-\mu)(\mu-m)-\Var(X).
\end{aligned}
\]
Since the left-hand side is nonnegative, \eqref{eq:bd} follows.
Equality holds if and only if $(M-X)(X-m)=0$ almost surely, i.e.\ iff
$X\in\{m,M\}$ a.s.
\end{proof}

\begin{theorem}[Universal chi-square bound and extremizers]\label{thm:chi2-bound}
Let $W_d$ be an $\varepsilon_0$-LDP channel and write $\lambda=e^{\varepsilon_0}$.
For every pair $a\neq b$,
\[
0 \le I_{0,d}(a,b)
=
\chi^2\!\bigl(P_d^b\,\|\,P_d^a\bigr)
\le \frac{(\lambda-1)^2}{\lambda}. \tag{4.2}\label{eq:chi-bound}
\]
Equality holds if and only if, under $P_d^a$, the likelihood ratio
$w_{ab,d}(Y)$ takes only the two endpoint values
$\lambda^{-1}$ and $\lambda$ with probabilities
\[
P_d^a\!\bigl(w_{ab,d}(Y)=\lambda^{-1}\bigr)=\frac{\lambda}{1+\lambda},
\qquad
P_d^a\!\bigl(w_{ab,d}(Y)=\lambda\bigr)=\frac{1}{1+\lambda}. \tag{4.3}\label{eq:endpoint-masses}
\]
Equivalently, the output alphabet can be partitioned as
\(
[d]=B_-\sqcup B_+
\)
such that
\[
w_{ab,d}(y)=\lambda^{-1}\ \text{for } y\in B_-,
\qquad
w_{ab,d}(y)=\lambda\ \text{for } y\in B_+,
\]
with
\[
P_d^a(B_-)=\frac{\lambda}{1+\lambda},
\qquad
P_d^a(B_+)=\frac{1}{1+\lambda}. \tag{4.4}\label{eq:block-masses}
\]
\end{theorem}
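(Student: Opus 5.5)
The plan is to apply Lemma~\ref{lem:bhatia-davis} to the random variable $X=w_{ab,d}(Y)$ with $Y\sim P_d^a$. By the $\varepsilon_0$-LDP constraint (after discarding outputs of common zero mass, as justified by Lemma~\ref{lem:common-support}), we have $\lambda^{-1}\le X\le \lambda$ almost surely. As noted in Section~2, $\E_{P_d^a}[X]=\sum_y P_d^b(y)=1$. Also $I_{0,d}(a,b)=\Var_{P_d^a}(X)$. Lemma~\ref{lem:bhatia-davis} with $m=\lambda^{-1}$, $M=\lambda$, $\mu=1$ then gives
\[
I_{0,d}(a,b)\le (\lambda-1)(1-\lambda^{-1})=\frac{(\lambda-1)^2}{\lambda},
\]
and nonnegativity is immediate because this quantity is a variance. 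This settles \eqref{eq:chi-bound}.

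For the equality characterization, I will use the equality clause of Lemma~\ref{lem:bhatia-davis}: equality holds iff $X\in\{\lambda^{-1},\lambda\}$ $P_d^a$-almost surely.

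For the forward direction, assume $X$ is concentrated on these two endpoints. Write $q=P_d^a(X=\lambda)$. The mean constraint $\E[X]=1$ reads $\lambda q+\lambda^{-1}(1-q)=1$. Solving gives $q(\lambda-\lambda^{-1})=1-\lambda^{-1}$, so $q=1/(1+\lambda)$ and $1-q=\lambda/(1+\lambda)$, which is \eqref{eq:endpoint-masses}. Here $\varepsilon_0>0$ ensures $\lambda\neq\lambda^{-1}$, so $q$ is determined uniquely.

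For the converse, any two-point law on $\{\lambda^{-1},\lambda\}$ with the masses \eqref{eq:endpoint-masses} has variance equal to the bound. One can see this either from the equality clause of the lemma or by a direct computation: $\lambda^2/(1+\lambda)+\lambda^{-1}\cdot 1/(1+\lambda)\cdot$ appropriately, minus $1$, gives $(\lambda-1)^2/\lambda$.

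The block formulation \eqref{eq:block-masses} is a restatement. Since all outputs in the retained support have $P_d^a(y)>0$, the statement "almost surely" becomes "for every $y$". So we set $B_\pm=\{y: w_{ab,d}(y)=\lambda^{\pm1}\}$, which partitions $[d]$, and then $P_d^a(B_\pm)$ are the masses already computed. This is just the level-set partition of Theorem~\ref{thm:compression} with $\mathcal R_{ab,d}\subseteq\{\lambda^{-1},\lambda\}$.

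There is no serious obstacle. The only point needing care is the passage from "$P_d^a$-a.s." to "every $y$", which relies on the common-support convention. A minor edge case is that one of $B_\pm$ might be empty. That cannot happen, because the mean-one constraint forces both masses to be positive.
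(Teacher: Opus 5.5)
Your proposal is correct and follows the paper's proof essentially verbatim: you apply Bhatia--Davis to $R=w_{ab,d}(Y)$, $Y\sim P_d^a$, with $m=\lambda^{-1}$, $M=\lambda$, $\mu=1$; the equality clause together with $\E[R]=1$ then fixes the endpoint masses, and the block form is the level-set restatement. Your half-finished direct variance computation in the converse is unnecessary, since the lemma's equality clause already gives it, but it does check out: $\E[R^2]-1=(\lambda^{-1}+\lambda^2)/(1+\lambda)-1=(\lambda-1)^2/\lambda$.
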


\begin{proof}
Fix $a\neq b$ and set
\[
R:=w_{ab,d}(Y), \qquad Y\sim P_d^a.
\]
Because $W_d$ is $\varepsilon_0$-LDP,
\[
\lambda^{-1}\le R\le \lambda \qquad \text{a.s.}
\]
Also
\[
\E[R]=\sum_{y=1}^d P_d^a(y)\frac{P_d^b(y)}{P_d^a(y)}
=\sum_{y=1}^d P_d^b(y)=1.
\]
Finally,
\[
I_{0,d}(a,b)
=
\E[(R-1)^2]
=
\Var(R),
\]
again because $\E[R]=1$.

Applying Lemma~\ref{lem:bhatia-davis} with
\(
m=\lambda^{-1}
\),
\(
M=\lambda
\),
\(
\mu=1
\),
we obtain
\[
I_{0,d}(a,b)=\Var(R)\le (\lambda-1)(1-\lambda^{-1})
=\frac{(\lambda-1)^2}{\lambda}.
\]
This proves \eqref{eq:chi-bound}.

Equality in Lemma~\ref{lem:bhatia-davis} holds iff $R\in\{\lambda^{-1},\lambda\}$ almost surely.
Since $\E[R]=1$, if
\(
p=P_d^a(R=\lambda)
\),
then
\[
p\lambda +(1-p)\lambda^{-1}=1
\quad\Longrightarrow\quad
p=\frac{1}{1+\lambda}.
\]
Hence
\(
P_d^a(R=\lambda^{-1})=\lambda/(1+\lambda)
\),
which is exactly \eqref{eq:endpoint-masses}.  The equivalence with \eqref{eq:block-masses}
is immediate after setting
\(
B_+=\{y:w_{ab,d}(y)=\lambda\}
\)
and
\(
B_-=\{y:w_{ab,d}(y)=\lambda^{-1}\}
\).
\end{proof}

\begin{corollary}[Universal upper bound on the Gaussian scale]\label{cor:mu-bound}
For every pair $(a,b)$ and every $n\ge1$, the canonical Gaussian parameter
\[
\mu_{n,d}(a,b):=\sqrt{\frac{I_{0,d}(a,b)}{n}}
\]
satisfies
\[
\mu_{n,d}(a,b)\le \frac{\lambda-1}{\sqrt{\lambda\,n}}. \tag{4.5}\label{eq:mu-universal}
\]
In particular,
\[
\mu_{n,d}^\star:=\max_{a\neq b}\mu_{n,d}(a,b)
\le \frac{\lambda-1}{\sqrt{\lambda\,n}}.
\]
\end{corollary}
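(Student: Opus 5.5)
This follows directly from Theorem~\ref{thm:chi2-bound}; there is essentially nothing beyond monotonicity of the square root. Fix a pair $a\neq b$ and $n\ge1$. By definition,
\[
\mu_{n,d}(a,b)=\sqrt{I_{0,d}(a,b)/n},
\]
and $I_{0,d}(a,b)\ge 0$, so the square root is well defined. Inserting the bound \eqref{eq:chi-bound}, namely $I_{0,d}(a,b)\le(\lambda-1)^2/\lambda$, and using that $t\mapsto\sqrt{t/n}$ is nondecreasing on $[0,\infty)$, gives
\[
\mu_{n,d}(a,b)\le\sqrt{\frac{(\lambda-1)^2}{\lambda n}}=\frac{|\lambda-1|}{\sqrt{\lambda n}}.
\]

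Since $\varepsilon_0>0$ is assumed throughout, $\lambda=e^{\varepsilon_0}>1$. Hence $|\lambda-1|=\lambda-1$, which yields \eqref{eq:mu-universal}.

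For the second claim, the bound just proved is uniform in the pair $(a,b)$: its right-hand side depends only on $\lambda$ and $n$. There are finitely many ordered pairs $a\neq b$ in $[d]$, so the maximum $\mu^\star_{n,d}$ is attained at some pair, and the same bound therefore holds for it.

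No step is a genuine obstacle. The only points to state explicitly are the sign condition $\lambda>1$, which lets us drop the absolute value, and the finiteness of the index set of pairs, which makes the maximum well defined. One could also note that equality in \eqref{eq:mu-universal} holds exactly when the endpoint two-point law \eqref{eq:endpoint-masses} occurs. This is inherited from the equality case of Theorem~\ref{thm:chi2-bound}, since the square root is strictly increasing, but it is not needed for the statement.
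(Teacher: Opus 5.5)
Your proof is correct and matches the paper's, which records the corollary as immediate from Theorem~\ref{thm:chi2-bound}. Your remarks on $\lambda>1$ and on the maximum over finitely many pairs supply the routine details the paper leaves implicit.
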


\begin{proof}
Immediate from Theorem~\ref{thm:chi2-bound}.
\end{proof}

\begin{remark}
Corollary~\ref{cor:mu-bound} is an upper bound on the \emph{canonical Gaussian/GDP scale}.
It does not assert finite-$n$ GDP for arbitrary channels.
Section~\ref{sec:dichotomy} proves that whenever
\( n\,I_{0,d}(a,b)\to\infty \),
the canonical neighboring experiment is asymptotically GDP with parameter
\(
\mu_{n,d}(a,b)=\sqrt{I_{0,d}(a,b)/n}
\).
The obstruction family below shows that \eqref{eq:mu-universal} is sharp.
\end{remark}

\section{An explicit obstruction family}

We now construct explicit $\varepsilon_0$-LDP channels on growing alphabets for which the exact shuffled
privacy curve is identical to the binary randomized-response curve for every $d$.
Throughout this section $\lambda=e^{\varepsilon_0}$.

\begin{definition}[Half-block cyclic channel]\label{def:half-block}
Assume $d$ is even and identify $[d]$ with the cyclic group $\mathbb Z_d$.
For each $x\in\mathbb Z_d$, define the half-block
\[
A_x:=\{x,x+1,\dots,x+d/2-1\}\pmod d.
\]
The channel $W_d$ is defined by
\[
W_d(y\mid x)
=
\frac{2\lambda}{d(1+\lambda)}\,\one\{y\in A_x\}
+
\frac{2}{d(1+\lambda)}\,\one\{y\notin A_x\}. \tag{5.1}\label{eq:half-block}
\]
\end{definition}

\begin{theorem}[Explicit obstruction family]\label{thm:half-block}
Let $W_d$ be the half-block channel of Definition~\ref{def:half-block}, and fix the opposite pair
\[
b=a+d/2 \pmod d.
\]
Then the following hold.

\begin{enumerate}
\item $W_d$ is $\varepsilon_0$-LDP for every even $d$.
\item
\[
I_{0,d}(a,b)=\chi^2\!\bigl(P_d^b\,\|\,P_d^a\bigr)=\frac{(\lambda-1)^2}{\lambda}, \tag{5.2}\label{eq:halfblock-chi}
\]
so the pairwise $\chi^2$ divergence is independent of $d$ and saturates the universal upper bound.
\item The LR quotient has exactly two atoms:
\[
\mathcal R_{ab,d}=\{\lambda^{-1},\lambda\}.
\]
The sufficient quotient statistic is
\[
K:=\sum_{y\notin A_a}N_y,
\]
and under $P_{n,d}^{ab}$,
\[
K\sim \Bin\!\left(n,\frac{1}{1+\lambda}\right). \tag{5.3}\label{eq:K-bin}
\]
\item The exact canonical likelihood ratio is
\[
L_{n,d}^{ab}(N)
=
\lambda^{-1}+\frac{K}{n}\bigl(\lambda-\lambda^{-1}\bigr). \tag{5.4}\label{eq:halfblock-LR}
\]
Hence the exact one-sided privacy curve is
\[
\delta_{Q_{n,d}^{ab}\|P_{n,d}^{ab}}(\varepsilon)
=
\sum_{k=0}^n \binom{n}{k} q^k(1-q)^{n-k}
\left(\lambda^{-1}+\frac{k}{n}\bigl(\lambda-\lambda^{-1}\bigr)-e^\varepsilon\right)_+, \tag{5.5}\label{eq:halfblock-delta}
\]
where \( q=(1+\lambda)^{-1} \).
This is exactly the binary randomized-response privacy curve.
\item For every sequence $n\to\infty$,
the canonical Gaussian/GDP scale is
\[
\mu_{n,d}(a,b)=\sqrt{\frac{I_{0,d}(a,b)}{n}}
=\frac{\lambda-1}{\sqrt{\lambda\,n}}, \tag{5.6}\label{eq:halfblock-mu}
\]
the same as for binary randomized response.
Thus growing $d$ gives no additional privacy amplification along this pair.
\end{enumerate}
\end{theorem}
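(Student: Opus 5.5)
The whole argument rests on one combinatorial observation, which I will establish first: for the opposite pair $b=a+d/2$, the half-blocks $A_a$ and $A_b=\{a+d/2,\dots,a+d-1\}$ are complementary, so $A_b=[d]\setminus A_a$. This follows directly from the cyclic definition, since the $d/2$ consecutive residues starting at $a$ and the $d/2$ starting at $a+d/2$ together exhaust $\mathbb Z_d$ without overlap. Every other item then becomes an explicit computation feeding into results already proved.

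For (i), every row of \eqref{eq:half-block} takes only the two values $2\lambda/(d(1+\lambda))$ and $2/(d(1+\lambda))$. Any ratio of entries in a fixed column therefore lies in $\{\lambda^{-1},1,\lambda\}\subset[\lambda^{-1},\lambda]$. I will also check that rows sum to one: $(d/2)\cdot 2(\lambda+1)/(d(1+\lambda))=1$.

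For (iii), the complementarity gives $w_{ab,d}(y)=\lambda^{-1}$ on $A_a$ and $w_{ab,d}(y)=\lambda$ on $A_a^c$. Hence $\mathcal R_{ab,d}=\{\lambda^{-1},\lambda\}$, using $\varepsilon_0>0$ so that the two values are distinct. The quotient vector is $M=(n-K,K)$ with $P_d^a(A_a^c)=(d/2)\cdot 2/(d(1+\lambda))=1/(1+\lambda)$. Theorem~\ref{thm:compression}(i),(iii) then gives \eqref{eq:K-bin} and sufficiency of $K$ (equivalently of $M$, since $M_{\lambda^{-1}}=n-K$).

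For (ii), the endpoint masses just computed, $\lambda/(1+\lambda)$ on $\lambda^{-1}$ and $1/(1+\lambda)$ on $\lambda$, match \eqref{eq:endpoint-masses} exactly. So the equality case of Theorem~\ref{thm:chi2-bound} yields \eqref{eq:halfblock-chi}. As a sanity check, I can compute directly: $\frac{\lambda}{1+\lambda}(\lambda^{-1}-1)^2+\frac{1}{1+\lambda}(\lambda-1)^2=(\lambda-1)^2/\lambda$.

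For (iv), I substitute into \eqref{eq:compressed-lr}: $L=n^{-1}\bigl(\lambda^{-1}(n-K)+\lambda K\bigr)$, which rearranges to \eqref{eq:halfblock-LR}. Since $\delta_{Q\|P}(\varepsilon)=\E_P[(L-e^\varepsilon)_+]$ and $K\sim\Bin(n,q)$ under $P$, \eqref{eq:halfblock-delta} follows by expanding the expectation over $k$.

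To identify this with binary randomized response, I will apply the same computation to the $d=2$ RR channel with $W(1\mid1)=\lambda/(1+\lambda)$. Its likelihood-ratio law is the identical two-point law $\nu$. By Corollary~\ref{cor:correct-invariant} the two canonical experiments are then equivalent, and in particular their privacy curves coincide. Equivalence also gives equality of the reverse curve $\delta_{P\|Q}$, so the two-sided curves agree as well.

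Item (v) is immediate: plug \eqref{eq:halfblock-chi} into the definition of $\mu_{n,d}$. The binary-RR value coincides because its $I_0$ is the same second moment of the same law $\nu$.

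I expect the only step needing real care to be the complementarity $A_b=A_a^c$ modulo $d$, which depends on the choice $b=a+d/2$; the remainder is bookkeeping. A secondary point is to state precisely what ``exactly the binary RR curve'' means, namely the canonical pair for RR with parameter $\lambda$, and to justify it through the invariant $\nu_{ab,d}$ rather than by formula matching alone.
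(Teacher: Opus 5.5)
Your proposal is correct and follows essentially the same route as the paper: establish $A_b=A_a^c$, read off the two-level likelihood ratio and its endpoint masses, and feed these into the compression theorem for (iii)--(iv). The differences are minor refinements. You obtain (ii) from the equality case of Theorem~\ref{thm:chi2-bound}, keeping the paper's direct computation as a check. You also justify the identification with binary randomized response through the invariant $\nu_{ab,d}$ of Corollary~\ref{cor:correct-invariant}, where the paper simply matches formulas.
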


\begin{proof}
(i)  Every entry of every row equals either
\(
2\lambda/[d(1+\lambda)]
\)
or
\(
2/[d(1+\lambda)]
\).
Hence for every $x,x',y$ the ratio
\(
W_d(y\mid x)/W_d(y\mid x')
\)
is one of
\(
1,\lambda,\lambda^{-1}
\),
so $W_d$ is $\varepsilon_0$-LDP.

(ii)  For the opposite pair $b=a+d/2$, the half-blocks are complementary:
\[
A_b=A_a^c.
\]
Therefore
\[
P_d^a(y)=
\frac{2\lambda}{d(1+\lambda)}\,\one\{y\in A_a\}
+
\frac{2}{d(1+\lambda)}\,\one\{y\notin A_a\},
\]
while
\[
P_d^b(y)=
\frac{2}{d(1+\lambda)}\,\one\{y\in A_a\}
+
\frac{2\lambda}{d(1+\lambda)}\,\one\{y\notin A_a\}.
\]
Hence the pairwise likelihood ratio is
\[
w_{ab,d}(y)
=
\lambda^{-1}\one\{y\in A_a\}
+
\lambda\one\{y\notin A_a\}. \tag{5.7}\label{eq:w-halfblock}
\]
Under $P_d^a$,
\[
P_d^a(A_a)=\frac{\lambda}{1+\lambda},
\qquad
P_d^a(A_a^c)=\frac{1}{1+\lambda}.
\]
Thus
\[
I_{0,d}(a,b)
=
\frac{\lambda}{1+\lambda}\,(\lambda^{-1}-1)^2
+
\frac{1}{1+\lambda}\,(\lambda-1)^2
=
\frac{(\lambda-1)^2}{\lambda}.
\]
This is \eqref{eq:halfblock-chi}.

(iii)  Equation \eqref{eq:w-halfblock} shows that the only likelihood-ratio levels are
$\lambda^{-1}$ and $\lambda$.
The quotient statistic can therefore be chosen as the count of outputs in the $\lambda$-block,
namely
\[
K=\sum_{y\notin A_a}N_y.
\]
Under $P_{n,d}^{ab}$ each user falls in $A_a^c$ with probability
\[
q=P_d^a(A_a^c)=\frac{1}{1+\lambda},
\]
so $K\sim \Bin(n,q)$.

(iv)  The exact quotient formula \eqref{eq:compressed-lr} from
Theorem~\ref{thm:compression} gives
\[
L_{n,d}^{ab}(N)
=
\frac{1}{n}\bigl[\lambda^{-1}(n-K)+\lambda K\bigr]
=
\lambda^{-1}+\frac{K}{n}(\lambda-\lambda^{-1}),
\]
which is \eqref{eq:halfblock-LR}.  Plugging this into the positive-part formula for
\(
\delta_{Q\|P}
\)
and using \eqref{eq:K-bin} yields \eqref{eq:halfblock-delta}.  This is exactly the classical binary
randomized-response expression.

(v)  Equation \eqref{eq:halfblock-mu} is immediate from \eqref{eq:halfblock-chi}.
The absence of additional $d$-gain is therefore exact at the Fisher/Gaussian scale, and
Theorem~\ref{thm:dichotomy}(iii) below shows that the canonical neighboring experiment remains on the
ordinary shuffled Gaussian/GDP scale.
\end{proof}

\begin{remark}
Theorem~\ref{thm:half-block} is a genuinely interior obstruction.
All rows have full support and no vanishing minority block is involved.
This is different from the strong-boundary obstruction of Part III
\cite[Prop.~6.1]{ShvetsPartIII}, where a bounded minority block can remain visible to exact histogram tests
while disappearing in the projected weak limit.
\end{remark}

\section{A sharp geometric dichotomy}\label{sec:dichotomy}

We now characterize exactly when growing alphabets do and do not force additional privacy amplification.

\begin{definition}
A family of $\varepsilon_0$-LDP channels $\{W_d\}_{d\ge2}$ is called
\begin{itemize}
\item \emph{diluting} if
\[
I_d^\star=\max_{a\neq b} I_{0,d}(a,b)\longrightarrow 0 \qquad (d\to\infty);
\]
\item \emph{persistent} if
\[
\limsup_{d\to\infty} I_d^\star >0.
\]
\end{itemize}
\end{definition}

We begin with the exact equivalence between chi-square collapse and likelihood-ratio-law collapse.

\begin{theorem}[Sharp dichotomy by LR-law collapse]\label{thm:dichotomy}
Let $\{W_d\}_{d\ge2}$ be an $\varepsilon_0$-LDP family and write $\lambda=e^{\varepsilon_0}$.

\begin{enumerate}
\item The family is diluting if and only if for every sequence of pairs
\(
a_d\neq b_d
\)
the pairwise likelihood-ratio laws satisfy
\[
\nu_{a_db_d,d}\Rightarrow \delta_1
\qquad (d\to\infty).
\]
\item The family is persistent if and only if there exist a subsequence $d_m\to\infty$ and pairs
\(
a_m\neq b_m
\)
such that
\[
\nu_{a_m b_m,d_m}\Rightarrow \nu
\]
for some probability measure $\nu$ on $[\lambda^{-1},\lambda]$ with $\nu\neq \delta_1$.
\item Let $a_d\neq b_d$ be any sequence of pairs, and set
\[
I_d:=I_{0,d}(a_d,b_d).
\]
For the canonical neighboring experiment define
\[
U_{n,d}:=L_{n,d}^{a_d b_d}(N)-1.
\]
Then under $P_{n,d}^{a_d b_d}$,
\[
U_{n,d}
=
\frac1n\sum_{i=1}^n Z_{i,d},
\qquad
Z_{i,d}\iid Z_d:=R_d-1,
\qquad
R_d\sim \nu_{a_d b_d,d},
\]
with
\[
\E[Z_d]=0,\qquad \E[Z_d^2]=I_d,\qquad |Z_d|\le \lambda-1 \ \text{a.s.}
\]
Define the standardized score
\[
S_{n,d}:=\frac{U_{n,d}}{\sqrt{I_d/n}}.
\]
Then the Berry--Esseen bound
\[
\sup_{t\in\R}\left|
P_{n,d}^{a_d b_d}(S_{n,d}\le t)-\Phi(t)
\right|
\le
C_{\mathrm{BE}}\frac{\lambda-1}{\sqrt{n I_d}} \tag{6.1}\label{eq:BE-null}
\]
holds for every $n$ and every pair with $I_d>0$.
In particular, if \( n I_d\to\infty \), then
\[
S_{n,d}\Rightarrow \Normal(0,1)
\qquad\text{under }P_{n,d}^{a_d b_d}. \tag{6.2}\label{eq:null-clt}
\]
Moreover, under $Q_{n,d}^{a_d b_d}$,
\[
S_{n,d}-\sqrt{\frac{I_d}{n}}
\Rightarrow \Normal(0,1) \qquad\text{whenever } nI_d\to\infty. \tag{6.3}\label{eq:alt-clt}
\]
Finally, if
\(
\Lambda_{n,d}:=\log L_{n,d}^{a_d b_d}(N)
\),
then
\[
\frac{\Lambda_{n,d}+I_d/(2n)}{\sqrt{I_d/n}}
\Rightarrow \Normal(0,1)
\quad\text{under }P_{n,d}^{a_d b_d}, \tag{6.4}\label{eq:LAN-null}
\]
\[
\frac{\Lambda_{n,d}-I_d/(2n)}{\sqrt{I_d/n}}
\Rightarrow \Normal(0,1)
\quad\text{under }Q_{n,d}^{a_d b_d}. \tag{6.5}\label{eq:LAN-alt}
\]
Thus the canonical neighboring experiment is asymptotically GDP with parameter
\[
\mu_{n,d}=\sqrt{\frac{I_d}{n}}
\]
whenever \( nI_d\to\infty \).
Here ``asymptotically GDP with parameter $\mu_{n,d}$'' means that
for every sequence $(n,d)$ with $nI_d\to\infty$,
\[
\LeCam\!\bigl((P_{n,d},Q_{n,d}),\,\mathcal G_{\mu_{n,d}}\bigr)\to 0,
\]
where $\mathcal G_\mu:=\bigl(\Normal(-\mu^2/2,\mu^2),\Normal(\mu^2/2,\mu^2)\bigr)$ is the Gaussian shift
experiment.
In particular, for every fixed
$\varepsilon\ge 0$ the exact one-sided privacy curve satisfies
$\delta_{Q_{n,d}\|P_{n,d}}(\varepsilon)\to \delta_{\GDP(\mu_{n,d})}(\varepsilon)$.
\item If the family is persistent, then there exist $c>0$, a subsequence $d_m\to\infty$,
and pairs $a_m\neq b_m$ such that
\[
I_{0,d_m}(a_m,b_m)\ge c
\qquad \text{for all }m.
\]
Hence, along this subsequence, the canonical neighboring experiment remains on the ordinary shuffled scale
\[
\mu_{n,d_m}(a_m,b_m)\ge \sqrt{\frac{c}{n}}.
\]
In particular, growing $d$ provides no automatic privacy gain beyond the usual $n^{-1/2}$ shuffled scale.
\item If the family is diluting, then for every fixed $\varepsilon>0$ and every sequence $n=n(d)\ge1$,
\[
\sup_{a\neq b}\delta_{Q_{n,d}^{ab}\|P_{n,d}^{ab}}(\varepsilon)
\le \frac{I_d^\star}{n(e^\varepsilon-1)}, \tag{6.6}\label{eq:delta-fwd}
\]
\[
\sup_{a\neq b}\delta_{P_{n,d}^{ab}\|Q_{n,d}^{ab}}(\varepsilon)
\le \frac{e^\varepsilon I_d^\star}{n(1-e^{-\varepsilon})}. \tag{6.7}\label{eq:delta-rev}
\]
Consequently
\[
\sup_{a\neq b}\delta_{P_{n,d}^{ab},Q_{n,d}^{ab}}(\varepsilon)\longrightarrow 0
\qquad (d\to\infty).
\]
That is, every diluting family yields asymptotically perfect fixed-$\varepsilon$ canonical shuffled privacy.
\end{enumerate}
\end{theorem}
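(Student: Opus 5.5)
The proof splits into the compactness parts (i)--(ii), the Gaussian part (iii), and the two consequences (iv)--(v). Throughout we use that every $\nu_{ab,d}$ is a probability measure on the compact interval $[\lambda^{-1},\lambda]$ with mean $1$ and variance $I_{0,d}(a,b)$ (proof of Theorem~\ref{thm:chi2-bound}).

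\textbf{Parts (i) and (ii).} Since $r\mapsto(r-1)^2$ is bounded and continuous on $[\lambda^{-1},\lambda]$, weak convergence $\nu_{a_db_d,d}\Rightarrow\nu$ implies $I_{0,d}(a_d,b_d)\to\int(r-1)^2\,\nu(\mathrm dr)$.

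For (i):
\begin{itemize}
\item If the family is diluting, then $\int(r-1)^2\,\dd\nu_{a_db_d,d}\le I_d^\star\to0$. Chebyshev then gives $\nu_{a_db_d,d}(|r-1|>\eta)\to0$, which is $\Rightarrow\delta_1$.
\item Conversely, choose $(a_d,b_d)$ attaining $I_d^\star$. If $\nu_{a_db_d,d}\Rightarrow\delta_1$, bounded continuity gives $I_d^\star\to0$.
\end{itemize}

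For (ii):
\begin{itemize}
\item If the family is persistent, take a subsequence with $I^\star_{d_m}\ge c>0$ and maximizing pairs. By Prokhorov (compact interval) pass to a further weakly convergent subsequence with limit $\nu$. Then $\int(r-1)^2\,\dd\nu\ge c$, so $\nu\ne\delta_1$.
\item Conversely, if $\nu\neq\delta_1$ the limit variance is positive, so $\limsup I_d^\star>0$.
\end{itemize}
Part (iv) is the first half of this argument, and the bound $\mu\ge\sqrt{c/n}$ is then immediate.

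\textbf{Part (iii).} By Lemma~\ref{lem:canonical-LR}, $L-1=\frac1n\sum_y N_y(w(y)-1)$. Under $P$ the histogram comes from i.i.d.\ outputs $Y_i\sim P_d^a$, so $U_{n,d}=\frac1n\sum_i(w(Y_i)-1)$ with $Z_i\sim R_d-1$. The properties $\E Z=0$, $\E Z^2=I_d$ and $|Z|\le\lambda-1$ are immediate.

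Berry--Esseen gives the bound $C\,\E|Z|^3/(I_d^{3/2}\sqrt n)$. Using $\E|Z|^3\le(\lambda-1)I_d$ yields \eqref{eq:BE-null}, and hence \eqref{eq:null-clt}.

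For the $Q$-law, change measure: $\E_Q f(S)=\E_P[f(S)(1+\sqrt{I_d/n}\,S)]$. Under $P$ the score $S$ is asymptotically normal and uniformly integrable (its second moment equals $1$). For bounded continuous $f$,
\[
\sqrt{I_d/n}\;\E_P[f(S)S]\to0,
\]
which gives only the leading-order statement. Since $\sqrt{I_d/n}\le(\lambda-1)/\sqrt n\to0$ whenever $n\to\infty$, the shift in \eqref{eq:alt-clt} is asymptotically negligible, and \eqref{eq:alt-clt} follows from contiguity and Le Cam's third lemma. Here $\Lambda=\log(1+U)$ is asymptotically $\Normal(-\mu^2/2,\mu^2)$ with $\mu=\sqrt{I_d/n}$.

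For \eqref{eq:LAN-null}, Taylor-expand
\[
\log(1+U)=U-\tfrac{U^2}{2}+O(|U|^3).
\]
Here $U=O_P(\mu)$ and $U^2=\mu^2(1+o_P(1))$ by the law of large numbers for $Z^2$ in the triangular array (the variance of $Z^2/I_d$ is bounded by $(\lambda-1)^2/I_d$, and its average over $n$ terms vanishes since $nI_d\to\infty$). Dividing by $\mu$ gives \eqref{eq:LAN-null}. Then \eqref{eq:LAN-alt} follows from Le Cam's third lemma.

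To get the Le Cam distance statement, compare the laws of $\Lambda$ under $P$ and $Q$ with the Gaussian pair in total variation after rescaling. This step is the main obstacle. Weak convergence alone, with $\mu$ possibly tending to $0$, does not immediately give $\Delta\to0$. I would instead use that both binary experiments are determined by their $\Lambda$-laws, and that their privacy curves $\delta(\varepsilon)$ converge uniformly via the Berry--Esseen Kolmogorov bound for $S$. When $\mu\to0$ both experiments become trivial, so the distance tends to zero as both are within $o(1)$ of the identity. For fixed $\varepsilon$, the convergence of $\delta$ then follows from $\E_P(L-e^\varepsilon)_+$, either through the distributional approximation of $U$ or directly, since both sides tend to $0$ when $\mu\to0$.

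\textbf{Part (v).} On $\{L>e^\varepsilon\}$ we have $(L-e^\varepsilon)_+\le(L-1)\one\{L-1>e^\varepsilon-1\}\le (L-1)^2/(e^\varepsilon-1)$. Hence
\[
\delta_{Q\|P}(\varepsilon)\le \E_P U^2/(e^\varepsilon-1)=I/(n(e^\varepsilon-1)),
\]
which is \eqref{eq:delta-fwd}.

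For the reverse direction, write $(1-e^\varepsilon L)_+$, which is nonzero only when $L<e^{-\varepsilon}$, that is, $1-L>1-e^{-\varepsilon}$. It is bounded by $e^\varepsilon(e^{-\varepsilon}-L)_+\le e^\varepsilon(1-L)^2/(1-e^{-\varepsilon})$, which gives \eqref{eq:delta-rev}. Letting $I_d^\star\to0$ and using $n\ge1$ completes the proof.
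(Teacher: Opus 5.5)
Your treatment of (i), (ii), (iv), (v) and of the null Berry--Esseen bound follows the paper. Your converse in (i), which evaluates the hypothesis on a sequence of maximizing pairs, is slightly more direct than the paper's contradiction-plus-Prokhorov argument.

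\textbf{The $Q$-side of (iii) is where your route genuinely differs.}
\begin{itemize}
\item \emph{The paper's route.} It decomposes $S_{n,d}-\mu_{n,d}$ under $Q_{n,d}$ as in \eqref{eq:alt-decomp}. It then combines Berry--Esseen, a Gaussian rescaling estimate and the bounded-shift lemma into the explicit Kolmogorov bound \eqref{eq:alt-bound}, redoes the log-expansion under $Q_{n,d}$, and finally appeals to the Le~Cam characterization of binary experiments.
\item \emph{Your route.} You use the exact density $\mathrm dQ_{n,d}/\mathrm dP_{n,d}=1+U_{n,d}$. This gives $|\E_{Q}f(T)-\E_{P}f(T)|\le\|f\|_\infty\E_P|U_{n,d}|\le\|f\|_\infty\mu_{n,d}$ for every bounded $f$ and every statistic $T$ of the histogram.
\item \emph{Why it suffices.} Since $I_d\le(\lambda-1)^2/\lambda$ by Theorem~\ref{thm:chi2-bound}, the condition $nI_d\to\infty$ forces $n\to\infty$. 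Hence $\mu_{n,d}\to0$ always, not just ``possibly'', so every weak limit under $P_{n,d}$ transfers to $Q_{n,d}$. This yields \eqref{eq:alt-clt} and \eqref{eq:LAN-alt} at once; the appeals to contiguity and Le~Cam's third lemma are superfluous.
\item \emph{The experiment-level claim.} The same fact settles it, via three facts:
\begin{itemize}
\item $\TV(P_{n,d},Q_{n,d})\le\frac12\sqrt{\chi^2(Q_{n,d}\|P_{n,d})}=\frac12\mu_{n,d}$;
\item the Gaussian pair $\mathcal G_{\mu_{n,d}}$ has total variation $2\Phi(\mu_{n,d}/2)-1$;
\item constant kernels show that the Le~Cam distance between two binary experiments is at most the larger of their two total variations.
\end{itemize}
So $\LeCam\to0$, and for each fixed $\varepsilon\ge0$ both privacy curves tend to $0$.
\end{itemize}
Your route is shorter and makes explicit that the GDP conclusion of (iii) holds because both experiments merge with the trivial one. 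That justifies the final step more cleanly than invoking convergence results stated for non-degenerate limits. The paper's route, in exchange, buys a quantitative finite-$(n,d)$ Kolmogorov rate under $Q_{n,d}$. You should write out these two total-variation bounds explicitly, replacing the vague plan about uniform convergence of privacy curves.

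\textbf{One intermediate claim is false, though harmless.} The identity $U_{n,d}^2=\mu_{n,d}^2(1+o_P(1))$ does not hold. Under $P_{n,d}$ one has $U_{n,d}^2=\mu_{n,d}^2S_{n,d}^2$, and $S_{n,d}^2\Rightarrow G^2$ with $G\sim\Normal(0,1)$, which is not degenerate at $1$. The law of large numbers you invoke controls $n^{-1}\sum_iZ_{i,d}^2$, not $(n^{-1}\sum_iZ_{i,d})^2$. Here the likelihood ratio is affine in an average rather than a product of per-user factors, so the usual LAN mechanism, in which the quadratic term concentrates, does not apply. All that \eqref{eq:LAN-null} needs is $U_{n,d}^2=O_P(\mu_{n,d}^2)=o_P(\mu_{n,d})$ together with $\mu_{n,d}\to0$, which is exactly how the paper argues.
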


The proof uses three elementary lemmas.

\begin{lemma}[Classical Berry--Esseen inequality]\label{lem:BE}
There exists an absolute constant $C_{\mathrm{BE}}$ such that the following holds.
If $X_1,\dots,X_n$ are i.i.d.\ centered random variables with variance $\sigma^2>0$ and finite third
absolute moment $\rho_3:=\E|X_1|^3$, then
\[
\sup_{t\in\R}
\left|
\PP\!\left(
\frac{X_1+\cdots+X_n}{\sigma\sqrt n}\le t
\right)-\Phi(t)
\right|
\le
C_{\mathrm{BE}}\frac{\rho_3}{\sigma^3\sqrt n}. \tag{6.8}\label{eq:BE-classical}
\]
\end{lemma}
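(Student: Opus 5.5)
This is the classical Berry--Esseen theorem, and in the paper it would be cited rather than reproved. If one wanted a self-contained argument, I would follow Esseen's Fourier-analytic route. By scaling, assume $\sigma=1$ and write $F_n$ for the distribution function of $T_n:=(X_1+\cdots+X_n)/\sqrt n$. Note that $\rho_3\ge\sigma^3=1$ by Lyapunov's inequality. So whenever $\rho_3/\sqrt n$ exceeds a fixed small constant, the bound holds trivially with $C_{\mathrm{BE}}$ large, since the left side of \eqref{eq:BE-classical} is at most $1$. Hence we may assume $\rho_3/\sqrt n$ is small.

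\textbf{Step 1 (smoothing inequality).} I would first prove Esseen's smoothing lemma. For a distribution function $F$, and for $G=\Phi$, whose density is bounded by $(2\pi)^{-1/2}$,
\[
\sup_t|F(t)-\Phi(t)|\le \frac1\pi\int_{-T}^{T}\frac{|\varphi_F(u)-e^{-u^2/2}|}{|u|}\,\dd u+\frac{24}{\pi\sqrt{2\pi}\,T}
\]
for every $T>0$. The proof convolves $F-\Phi$ with a kernel whose Fourier transform is supported in $[-T,T]$, for instance the Fej\'er-type density $(1-\cos Tx)/(\pi T x^2)$. One then compares the smoothed and unsmoothed suprema near the point where $|F-\Phi|$ is nearly maximal, using the monotonicity of $F$ and the Lipschitz bound on $\Phi$.

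\textbf{Step 2 (characteristic function estimate).} Let $\varphi$ be the characteristic function of $X_1$ and set $T:=c\sqrt n/\rho_3$. I would show that for $|u|\le T$,
\[
\bigl|\varphi(u/\sqrt n)^n-e^{-u^2/2}\bigr|\le C\,\frac{\rho_3}{\sqrt n}\,|u|^3e^{-u^2/4}.
\]
The tools are the Taylor bound $|\varphi(s)-1+s^2/2|\le\rho_3|s|^3/6$ and the bound $|\varphi(s)|\le 1-s^2/2+\rho_3|s|^3/6\le e^{-s^2/3}$ for $|s|\le 1/\rho_3$. These are combined with the telescoping inequality $|a^n-b^n|\le n|a-b|\max(|a|,|b|)^{n-1}$. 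The range $\sqrt n/\rho_3\ge|u|\ge$ (a small multiple of $n^{1/6}\rho_3^{-1/3}$) needs a separate treatment. There I would bound each of the two terms separately using $|\varphi(u/\sqrt n)|^n\le e^{-u^2/3}$, together with a symmetrization argument ($|\varphi|^2$ is the characteristic function of $X_1-X_1'$) to get the Gaussian-type decay across the whole range.

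\textbf{Step 3 (assembly).} Inserting the estimate of Step 2 into Step 1, the integral is $O(\rho_3/\sqrt n)\int|u|^2e^{-u^2/4}\dd u=O(\rho_3/\sqrt n)$. The boundary term is $O(1/T)=O(\rho_3/\sqrt n)$. Together these give \eqref{eq:BE-classical}.

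The main obstacle is Step 2 on the full range $|u|\le c\sqrt n/\rho_3$: the error must be uniform with only a third-moment assumption, which requires the careful exponential domination of $|\varphi|^n$. In the paper itself I would simply cite Berry and Esseen (or Shevtsova for an explicit $C_{\mathrm{BE}}\le 0.56$), since only the existence of an absolute constant is used.
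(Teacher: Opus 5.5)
Your proposal is correct and agrees with the paper, which gives no argument of its own and simply cites the classical Berry--Esseen theorem (Feller, Ch.~XVI; Petrov, Ch.~V); your sketch is the standard Esseen smoothing-plus-characteristic-function proof from those references. One simplification: in the i.i.d.\ case, your bound $|\varphi(s)|\le e^{-s^2/3}$ for $|s|\le 1/\rho_3$, combined with the telescoping inequality and $\rho_3\ge 1$, already gives the Step~2 estimate on the whole range $|u|\le c\sqrt n/\rho_3$ (this is Feller's route), so you do not need the separate treatment above $n^{1/6}\rho_3^{-1/3}$ or the symmetrization step, which belong to Petrov's non-identically-distributed version.
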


\begin{proof}
This is the classical Berry--Esseen theorem; see, for example,
Feller \cite[Ch.~XVI, Thm.~2]{FellerVolII} or Petrov
\cite[Ch.~V, Thm.~3]{Petrov1975}.
\end{proof}

\begin{lemma}[Kolmogorov perturbation by a bounded shift]\label{lem:bounded-perturb}
Let $X$ and $Y$ be real-valued random variables, and let $F$ be a distribution function with bounded density
$f$, $\|f\|_\infty\le M$.
If $|Y|\le \eta$ almost surely, then
\[
\sup_{t\in\R}\bigl|\PP(X+Y\le t)-F(t)\bigr|
\le
\sup_{t\in\R}\bigl|\PP(X\le t)-F(t)\bigr| + 2M\eta. \tag{6.9}\label{eq:bounded-perturb}
\]
In particular, for the standard normal distribution,
\[
d_K(X+Y,\Normal(0,1))
\le d_K(X,\Normal(0,1))+\sqrt{\frac{2}{\pi}}\eta. \tag{6.10}\label{eq:normal-perturb}
\]
\end{lemma}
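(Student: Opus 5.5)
The plan is to prove the general inequality \eqref{eq:bounded-perturb} by sandwiching, and then specialize it to the normal case. Since $|Y|\le\eta$ almost surely, for every $t\in\R$ we have the event inclusions
\[
\{X\le t-\eta\}\subseteq\{X+Y\le t\}\subseteq\{X\le t+\eta\},
\]
and therefore
\[
\PP(X\le t-\eta)\le \PP(X+Y\le t)\le \PP(X\le t+\eta).
\]
Write $D:=\sup_s|\PP(X\le s)-F(s)|$.

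For the upper bound, use $\PP(X\le t+\eta)\le F(t+\eta)+D$. The mean value theorem (or integrating the density) gives $F(t+\eta)-F(t)\le M\eta$, so
\[
\PP(X+Y\le t)-F(t)\le D+M\eta .
\]
For the lower bound, argue symmetrically: $\PP(X\le t-\eta)\ge F(t-\eta)-D\ge F(t)-M\eta-D$. Together these give $|\PP(X+Y\le t)-F(t)|\le D+M\eta$ for every $t$. This is already stronger than the stated $D+2M\eta$, so \eqref{eq:bounded-perturb} follows a fortiori; the factor $2$ simply leaves slack. There is no measurability subtlety, because the bound holds pointwise in $t$ before taking the supremum.

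For \eqref{eq:normal-perturb}, take $F=\Phi$. Its density is bounded by $M=(2\pi)^{-1/2}$, and so $2M\eta=2\eta/\sqrt{2\pi}=\sqrt{2/\pi}\,\eta$, which is exactly the stated constant. Here $d_K$ denotes the Kolmogorov distance $\sup_t|\PP(\cdot\le t)-\Phi(t)|$.

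The only step requiring any care is making the density-increment bound $|F(t\pm\eta)-F(t)|\le M\eta$ rigorous. I would justify it by writing $F(t+\eta)-F(t)=\int_t^{t+\eta}f\le M\eta$, using that $F$ is absolutely continuous with density $f$, and similarly for the increment on the other side. Beyond that, the argument is routine.
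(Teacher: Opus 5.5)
Your proof is correct and follows the paper's argument: the same sandwich $\{X\le t-\eta\}\subseteq\{X+Y\le t\}\subseteq\{X\le t+\eta\}$, then inserting $F(t\pm\eta)$ and controlling the increment of $F$ with the density bound. The only difference is that you compare $F(t+\eta)$ and $F(t-\eta)$ with $F(t)$ one side at a time, which gives the sharper $D+M\eta$; the paper instead bounds the full increment $F(t+\eta)-F(t-\eta)\le 2M\eta$, and that is where the slack factor $2$ in the statement comes from.
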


\begin{proof}
For every $t$,
\[
\{X\le t-\eta\}\subseteq \{X+Y\le t\}\subseteq \{X\le t+\eta\}.
\]
Therefore
\[
\PP(X\le t-\eta)-F(t)\le \PP(X+Y\le t)-F(t)
\le \PP(X\le t+\eta)-F(t).
\]
Insert and subtract $F(t\pm \eta)$ and use the density bound:
\[
F(t+\eta)-F(t-\eta)\le 2M\eta.
\]
Taking suprema over $t$ gives \eqref{eq:bounded-perturb}.  For the standard normal law,
\(
M=\sup_x \phi(x)=(2\pi)^{-1/2}
\),
which yields \eqref{eq:normal-perturb}.
\end{proof}

\begin{lemma}[Quadratic logarithmic remainder]\label{lem:log-remainder}
For every $u\in[-1/2,1/2]$,
\[
\left|\log(1+u)-u+\frac{u^2}{2}\right|
\le |u|^3. \tag{6.11}\label{eq:log-rem}
\]
\end{lemma}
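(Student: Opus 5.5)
The plan is to prove \eqref{eq:log-rem} through the alternating Taylor series of $\log(1+u)$. For $|u|<1$,
\[
\log(1+u)-u+\frac{u^2}{2}=\sum_{k\ge 3}\frac{(-1)^{k+1}u^k}{k}.
\]
Taking absolute values term by term gives
\[
\Bigl|\log(1+u)-u+\tfrac{u^2}{2}\Bigr|\le \frac{|u|^3}{3}\sum_{j\ge0}|u|^j=\frac{|u|^3}{3(1-|u|)}.
\]
On $|u|\le 1/2$ we have $1-|u|\ge 1/2$, so the right-hand side is at most $\tfrac{2}{3}|u|^3\le |u|^3$. This already proves the claim, with room to spare.

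As a cross-check, I would also run an integral-remainder argument that avoids series. Set $g(u)=\log(1+u)-u+u^2/2$. Then $g(0)=g'(0)=g''(0)=0$ and $g'''(u)=2(1+u)^{-3}$. Taylor's theorem with Lagrange remainder gives $g(u)=\tfrac{1}{3}(1+\xi)^{-3}u^3$ for some $\xi$ between $0$ and $u$.

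On $u\ge 0$ the factor $(1+\xi)^{-3}$ is at most $1$, so $|g(u)|\le |u|^3/3$. On $-1/2\le u<0$ we have $1+\xi\ge 1/2$, so $(1+\xi)^{-3}\le 8$ and $|g(u)|\le \tfrac{8}{3}|u|^3$. That overshoots the constant $1$. So the only delicate point, and the main obstacle to watch, is the negative side near $u=-1/2$, and the series bound $\tfrac{2}{3}|u|^3$ is the version to use there.

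To be safe I will check the worst endpoint numerically. At $u=-1/2$ the remainder is $|\log(1/2)+1/2+1/8|=0.6931-0.625=0.068$, while $|u|^3=0.125$, so the inequality holds comfortably. I will therefore present the geometric-series argument as the proof, since it handles both signs of $u$ uniformly.
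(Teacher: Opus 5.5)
Your geometric-series argument is correct and is exactly the paper's proof: you expand $\log(1+u)-u+u^2/2=\sum_{k\ge 3}(-1)^{k+1}u^k/k$ and bound it by $|u|^3/(3(1-|u|))\le \tfrac{2}{3}|u|^3$ on $|u|\le 1/2$. Your Lagrange-remainder cross-check, including the observation that its crude constant $8/3$ overshoots on the negative side, and your numerical check at $u=-1/2$ are both accurate but not needed.
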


\begin{proof}
For $|u|<1$ the power series gives
\[
\log(1+u)-u+\frac{u^2}{2}
=
\sum_{k=3}^{\infty}\frac{(-1)^{k+1}}{k}\,u^k.
\]
Hence
\[
\left|\log(1+u)-u+\frac{u^2}{2}\right|
\le
\sum_{k=3}^{\infty}\frac{|u|^k}{k}
\le
\frac{|u|^3}{3}\sum_{j=0}^{\infty}|u|^j
=
\frac{|u|^3}{3(1-|u|)}.
\]
For $|u|\le 1/2$ the right-hand side is at most $(2/3)|u|^3\le |u|^3$.
\end{proof}

\begin{proof}[Proof of Theorem~\ref{thm:dichotomy}]
We write simply $P_{n,d},Q_{n,d},w_d,\nu_d,I_d$ when the pair $(a_d,b_d)$ is clear from context.

\smallskip
\noindent\emph{Part (i).}
Assume first that the family is diluting, i.e.\ $I_d^\star\to0$.
Fix any sequence of pairs $(a_d,b_d)$.
If $R_d\sim \nu_{a_db_d,d}$, then
\[
\E[(R_d-1)^2]=I_{0,d}(a_d,b_d)\le I_d^\star \to 0.
\]
Hence $R_d\to 1$ in $L^2$, therefore also in probability, hence
\[
\nu_{a_db_d,d}=\law(R_d)\Rightarrow \delta_1.
\]

Conversely, assume that every pairwise likelihood-ratio law converges weakly to $\delta_1$ along every
sequence of pairs.  If $I_d^\star \not\to 0$, then there exist $\eta>0$, a subsequence $d_m\to\infty$, and
pairs $(a_m,b_m)$ such that
\[
I_{0,d_m}(a_m,b_m)\ge \eta \qquad \forall m.
\]
Because the laws $\nu_{a_m b_m,d_m}$ live on the compact interval
$[\lambda^{-1},\lambda]$, Prokhorov compactness yields a further subsequence
(which we do not relabel) and a limit law $\nu$ such that
\[
\nu_{a_m b_m,d_m}\Rightarrow \nu.
\]
The function $r\mapsto (r-1)^2$ is bounded and continuous on $[\lambda^{-1},\lambda]$, so
\[
\int (r-1)^2\,\nu(\dd r)
=
\lim_{m\to\infty} I_{0,d_m}(a_m,b_m)
\ge \eta >0.
\]
Hence $\nu\neq \delta_1$, contradicting the assumption.  Therefore $I_d^\star\to0$.

\smallskip
\noindent\emph{Part (ii).}
If the family is persistent, choose $\eta>0$, a subsequence $d_m\to\infty$, and pairs $(a_m,b_m)$ with
\[
I_{0,d_m}(a_m,b_m)\ge \eta \qquad \forall m.
\]
Again compactness gives a further weak limit
\(
\nu_{a_m b_m,d_m}\Rightarrow \nu
\)
on $[\lambda^{-1},\lambda]$.
By bounded continuity of $(r-1)^2$,
\[
\int (r-1)^2\,\nu(\dd r)
=
\lim_{m\to\infty} I_{0,d_m}(a_m,b_m)
\ge \eta,
\]
so $\nu\neq \delta_1$.

Conversely, if such a nondegenerate subsequential limit exists, then again by bounded continuity,
\[
\lim_{m\to\infty} I_{0,d_m}(a_m,b_m)
=
\int (r-1)^2\,\nu(\dd r)>0.
\]
Hence $\limsup_d I_d^\star>0$, so the family is persistent.

\smallskip
\noindent\emph{Part (iii): null representation and Berry--Esseen.}
Let $Y_{1,d},\dots,Y_{n,d}\iid P_d^{a_d}$ and set
\[
R_{i,d}:=w_d(Y_{i,d}),
\qquad
Z_{i,d}:=R_{i,d}-1.
\]
By Lemma~\ref{lem:canonical-LR},
\[
U_{n,d}=L_{n,d}-1=\frac1n\sum_{i=1}^n Z_{i,d}.
\]
Also
\[
\E[Z_{i,d}]=\E[R_{i,d}]-1=0,
\qquad
\E[Z_{i,d}^2]=I_d.
\]
Because
\[
R_{i,d}\in [\lambda^{-1},\lambda] \qquad \text{a.s.},
\]
we have
\[
|Z_{i,d}|\le \lambda-1 \qquad\text{a.s.}
\]
Therefore
\[
\E|Z_{i,d}|^3 \le (\lambda-1)\E[Z_{i,d}^2]=(\lambda-1)I_d. \tag{6.12}\label{eq:third-moment}
\]
Applying Lemma~\ref{lem:BE} with
\(
X_i=Z_{i,d}
\),
\(
\sigma^2=I_d
\),
and \eqref{eq:third-moment} gives
\[
\sup_t \left|
P_{n,d}(S_{n,d}\le t)-\Phi(t)
\right|
\le
C_{\mathrm{BE}}\frac{\lambda-1}{\sqrt{nI_d}}.
\]
This proves \eqref{eq:BE-null}.  If \(nI_d\to\infty\), the right-hand side tends to zero, yielding
\eqref{eq:null-clt}.

\smallskip
\noindent\emph{Part (iii): alternative representation and shift.}
For \(n\ge 2\), under $Q_{n,d}$, let
\[
Y_{1,d},\dots,Y_{n-1,d}\iid P_d^{a_d},
\qquad
Y_{*,d}\sim P_d^{b_d},
\]
with \(Y_{*,d}\) independent of \(Y_{1,d},\dots,Y_{n-1,d}\). Then
\[
U_{n,d}
=
\frac1n\left(
\sum_{i=1}^{n-1} (w_d(Y_{i,d})-1) + (w_d(Y_{*,d})-1)
\right).
\]
Define
\[
Z_{*,d}:=w_d(Y_{*,d})-1.
\]
Since \( P_d^{b_d}=w_d P_d^{a_d} \),
\[
\begin{aligned}
\E[Z_{*,d}]
&=
\sum_y P_d^{b_d}(y)\bigl(w_d(y)-1\bigr)\\
&=
\sum_y P_d^{a_d}(y)\,w_d(y)\bigl(w_d(y)-1\bigr)\\
&=
\E[(R_d-1)^2]
=
I_d.
\end{aligned} \tag{6.13}\label{eq:special-mean}
\]
Also \(|Z_{*,d}|\le \lambda-1\).

Let
\[
A_{n,d}
:=
\frac{1}{\sqrt{(n-1)I_d}}
\sum_{i=1}^{n-1} Z_{i,d},
\qquad
c_n:=\sqrt{\frac{n-1}{n}},
\qquad
B_{n,d}:=\frac{Z_{*,d}-I_d}{\sqrt{nI_d}}.
\]
Then
\[
S_{n,d}-\sqrt{\frac{I_d}{n}} = c_n A_{n,d}+B_{n,d}. \tag{6.14}\label{eq:alt-decomp}
\]
By the already proved Berry--Esseen bound,
\[
d_K(A_{n,d},\Normal(0,1))
\le
C_{\mathrm{BE}}\frac{\lambda-1}{\sqrt{(n-1)I_d}}. \tag{6.15}\label{eq:A-BE}
\]
Next, the law of \(c_n A_{n,d}\) is at Kolmogorov distance at most
\[
d_K(c_nA_{n,d},\Normal(0,c_n^2))
\le d_K(A_{n,d},\Normal(0,1)) \tag{6.16}\label{eq:scaled-BE}
\]
by simple rescaling.
Since \(\Normal(0,c_n^2)\) has density \(\phi(t/c_n)/c_n\), the distance between
\(\Normal(0,c_n^2)\) and \(\Normal(0,1)\) satisfies
\[
\sup_t \left|\Phi\!\left(\frac{t}{c_n}\right)-\Phi(t)\right|
\le \frac{1-c_n}{c_n}\sup_{x\in\R}|x\phi(x)|
\le \frac{1}{n-1}, \tag{6.17}\label{eq:normal-scale}
\]
using \(c_n^{-1}\le \sqrt 2\) for \(n\ge2\) and
\(
\sup_x |x\phi(x)|=(2\pi e)^{-1/2}
\).
Thus
\[
d_K(c_nA_{n,d},\Normal(0,1))
\le
C_{\mathrm{BE}}\frac{\lambda-1}{\sqrt{(n-1)I_d}}+\frac{1}{n-1}. \tag{6.18}\label{eq:cA}
\]

Finally,
\[
|B_{n,d}|
\le \frac{|Z_{*,d}|+I_d}{\sqrt{nI_d}}
\le \frac{(\lambda-1)+(\lambda-1)^2/\lambda}{\sqrt{nI_d}}
\le \frac{2(\lambda-1)}{\sqrt{nI_d}} \qquad \text{a.s.}
\]
Therefore Lemma~\ref{lem:bounded-perturb} gives
\[
d_K\!\left(S_{n,d}-\sqrt{I_d/n},\Normal(0,1)\right)
\le
C_{\mathrm{BE}}\frac{\lambda-1}{\sqrt{(n-1)I_d}}
+\frac{1}{n-1}
+\sqrt{\frac{2}{\pi}}\frac{2(\lambda-1)}{\sqrt{nI_d}}. \tag{6.19}\label{eq:alt-bound}
\]
If \(nI_d\to\infty\), the right-hand side tends to zero, proving \eqref{eq:alt-clt}.

\smallskip
\noindent\emph{Part (iii): log-likelihood expansion and GDP.}
Let
\[
\mu_{n,d}:=\sqrt{\frac{I_d}{n}}.
\]
Under $P_{n,d}$, the variable $U_{n,d}$ is bounded by $\lambda-1$ in absolute value and
has variance $I_d/n=\mu_{n,d}^2$.
Under $Q_{n,d}$, the decomposition $U_{n,d}=\frac{1}{n}\bigl(\sum_{i=1}^{n-1}Z_{i,d}+Z_{*,d}\bigr)$
from the alternative representation gives
\[
\Var_Q(U_{n,d})
=
\frac{(n-1)I_d+\Var_Q(Z_{*,d})}{n^2}
\le
\frac{I_d}{n}+\frac{(\lambda-1)^2}{n^2},
\]
so $U_{n,d}=O_P(\mu_{n,d})$ under both hypotheses.
In particular \(U_{n,d}\to 0\) in probability whenever \(n\to\infty\).
By Lemma~\ref{lem:log-remainder},
\[
\Lambda_{n,d}
=
U_{n,d}-\frac{U_{n,d}^2}{2}+R_{n,d},
\qquad
|R_{n,d}|\le |U_{n,d}|^3
\quad\text{on }\{|U_{n,d}|\le 1/2\}. \tag{6.20}\label{eq:log-expansion}
\]
Since \(U_{n,d}\to 0\) in probability, we have \(\PP(|U_{n,d}|>1/2)\to 0\),
so the expansion \eqref{eq:log-expansion} holds on an event of probability tending to one.
Combined with \(U_{n,d}=O_P(\mu_{n,d})\), we obtain
\[
R_{n,d}=O_P(\mu_{n,d}^3)=o_P(\mu_{n,d})
\qquad\text{whenever } nI_d\to\infty. \tag{6.21}\label{eq:R-small}
\]

Under $P_{n,d}$,
\[
U_{n,d}=\mu_{n,d} S_{n,d},
\]
so
\[
\Lambda_{n,d}
=
\mu_{n,d} S_{n,d}-\frac{\mu_{n,d}^2}{2}S_{n,d}^2+o_P(\mu_{n,d}).
\]
Since \(S_{n,d}\Rightarrow \Normal(0,1)\), we have \(S_{n,d}^2=1+O_P(1)\), hence
\[
\mu_{n,d}^2(S_{n,d}^2-1)=O_P(\mu_{n,d}^2)=o_P(\mu_{n,d}).
\]
Therefore
\[
\Lambda_{n,d}
=
\mu_{n,d}S_{n,d}-\frac{\mu_{n,d}^2}{2}+o_P(\mu_{n,d}),
\]
which implies \eqref{eq:LAN-null}.

Under $Q_{n,d}$, \eqref{eq:alt-clt} gives
\[
U_{n,d}
=
\mu_{n,d}\left(Z_{n,d}+\mu_{n,d}\right)+o_P(\mu_{n,d})
=
\mu_{n,d} Z_{n,d}+\mu_{n,d}^2+o_P(\mu_{n,d}),
\]
where \(Z_{n,d}\Rightarrow \Normal(0,1)\).
Then
\[
U_{n,d}^2
=
\mu_{n,d}^2 Z_{n,d}^2+2\mu_{n,d}^3 Z_{n,d}+\mu_{n,d}^4+o_P(\mu_{n,d}^2)
=
\mu_{n,d}^2 Z_{n,d}^2+o_P(\mu_{n,d}),
\]
so that \(\mu_{n,d}^2(Z_{n,d}^2-1)=O_P(\mu_{n,d}^2)=o_P(\mu_{n,d})\).
Therefore
\[
\Lambda_{n,d}
=
\mu_{n,d} Z_{n,d}+\mu_{n,d}^2
-\frac12\mu_{n,d}^2+o_P(\mu_{n,d})
=
\mu_{n,d} Z_{n,d}+\frac{\mu_{n,d}^2}{2}+o_P(\mu_{n,d}),
\]
proving \eqref{eq:LAN-alt}.

The Gaussian limit \eqref{eq:LAN-null}--\eqref{eq:LAN-alt} is precisely the canonical GDP limit with
parameter \(\mu_{n,d}\).
Indeed, by the standard characterization of convergence of binary experiments via log-likelihood ratios
(see, e.g., \cite[Theorem~4.1]{ShvetsPartI} or Le~Cam \cite[Ch.~8]{LeCam1986}),
\eqref{eq:LAN-null}--\eqref{eq:LAN-alt} imply that the binary experiment
\((P_{n,d},Q_{n,d})\) converges in Le~Cam distance to the Gaussian shift experiment
\(\bigl(\Normal(-\mu_{n,d}^2/2,\mu_{n,d}^2),\,\Normal(\mu_{n,d}^2/2,\mu_{n,d}^2)\bigr)\),
and in particular the privacy curves converge pointwise to \(\delta_{\GDP(\mu_{n,d})}\).

\smallskip
\noindent\emph{Part (iv).}
This is immediate from part (ii):
if the family is persistent, choose the subsequence and pairs from part (ii).  Since
\(
\int (r-1)^2 \nu(\dd r)>0
\),
the corresponding $I_d$ stay bounded below by some \(c>0\) eventually.
Then
\[
\mu_{n,d}=\sqrt{\frac{I_d}{n}}\ge \sqrt{\frac{c}{n}}
\]
along that subsequence.  Thus there is no additional $d$-dependent privacy gain.

\smallskip
\noindent\emph{Part (v).}
Fix \(\varepsilon>0\), let \(c_\varepsilon=e^\varepsilon-1>0\), and consider any pair $(a,b)$.
Under $P_{n,d}^{ab}$ write \(L=L_{n,d}^{ab}=1+U\).
Then
\[
(L-e^\varepsilon)_+
=
(U-c_\varepsilon)_+
\le \frac{U^2}{c_\varepsilon}. \tag{6.22}\label{eq:fwd-pointwise}
\]
Indeed, if \(U\le c_\varepsilon\), the left-hand side is zero; if \(U>c_\varepsilon\), then
\(
U-c_\varepsilon\le U^2/c_\varepsilon
\)
because \(U\ge c_\varepsilon>0\).
Taking expectations under $P_{n,d}^{ab}$ gives
\[
\delta_{Q_{n,d}^{ab}\|P_{n,d}^{ab}}(\varepsilon)
=
\E[(L-e^\varepsilon)_+]
\le
\frac{\E[U^2]}{e^\varepsilon-1}
=
\frac{I_{0,d}(a,b)}{n(e^\varepsilon-1)}.
\]
Maximizing over pairs yields \eqref{eq:delta-fwd}.

For the reverse direction,
\[
\delta_{P_{n,d}^{ab}\|Q_{n,d}^{ab}}(\varepsilon)
=
\E\bigl[(1-e^\varepsilon L)_+\bigr].
\]
If \(1-e^\varepsilon(1+U)>0\), then
\(
U<-(1-e^{-\varepsilon})
\).
Set \(c'_\varepsilon:=1-e^{-\varepsilon}>0\).
Pointwise,
\[
(1-e^\varepsilon L)_+
=
e^\varepsilon(-U-c'_\varepsilon)_+
\le
e^\varepsilon \frac{U^2}{c'_\varepsilon}. \tag{6.23}\label{eq:rev-pointwise}
\]
Taking expectations yields
\[
\delta_{P_{n,d}^{ab}\|Q_{n,d}^{ab}}(\varepsilon)
\le
\frac{e^\varepsilon I_{0,d}(a,b)}{n(1-e^{-\varepsilon})},
\]
and maximizing over pairs gives \eqref{eq:delta-rev}.
Since \(I_d^\star\to0\), both upper bounds tend to zero uniformly in the pair.
\end{proof}

\begin{corollary}[Finite-$n$ canonical privacy certificate under LR-law dilution]\label{cor:finite-n-cert}
Suppose the worst-case pairwise $\chi^2$-budget along the family satisfies
$I_d^\star\le \eta_d$.  Then for every $n\ge 1$ and every $\varepsilon>0$, the two-sided privacy
profile of the canonical homogeneous neighboring shuffled experiment satisfies
\[
\sup_{a\neq b}\delta_{P_{n,d}^{ab},Q_{n,d}^{ab}}(\varepsilon)
\le
\frac{\eta_d}{n}\cdot\frac{e^{2\varepsilon}}{e^\varepsilon-1}.
\]
In particular, for any prescribed dilution profile $\eta_d\to 0$, the canonical-neighboring privacy
profile vanishes uniformly at rate $\eta_d/n$ with explicit $\varepsilon$-dependent constants.
This is a canonical-pair certificate, not a full shuffled-DP guarantee over arbitrary backgrounds.
\end{corollary}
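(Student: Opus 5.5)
The corollary follows directly from the two finite-$n$ bounds \eqref{eq:delta-fwd} and \eqref{eq:delta-rev} in Theorem~\ref{thm:dichotomy}(v). Their proofs (the pointwise inequalities \eqref{eq:fwd-pointwise} and \eqref{eq:rev-pointwise}) used only the identity $\E_{P}[U^2]=I_{0,d}(a,b)/n$ for a fixed pair. They never used the diluting hypothesis or any relation between $n$ and $d$. So I will first observe that, for every $n\ge1$, every $\varepsilon>0$ and every pair $a\neq b$,
\[
\delta_{Q_{n,d}^{ab}\|P_{n,d}^{ab}}(\varepsilon)\le\frac{I_{0,d}(a,b)}{n(e^\varepsilon-1)},
\qquad
\delta_{P_{n,d}^{ab}\|Q_{n,d}^{ab}}(\varepsilon)\le\frac{e^\varepsilon I_{0,d}(a,b)}{n(1-e^{-\varepsilon})}.
\]
I will then replace $I_{0,d}(a,b)$ by $I_d^\star\le\eta_d$ and take the supremum over pairs.

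The second step is to put both constants under the single constant $e^{2\varepsilon}/(e^\varepsilon-1)$. For the reverse direction,
\[
\frac{e^\varepsilon}{1-e^{-\varepsilon}}=\frac{e^{2\varepsilon}}{e^\varepsilon-1}
\]
holds exactly. For the forward direction,
\[
\frac{1}{e^\varepsilon-1}\le\frac{e^{2\varepsilon}}{e^\varepsilon-1}
\]
holds since $e^{2\varepsilon}\ge1$. Hence the maximum of the two one-sided profiles, which is the two-sided profile $\delta_{P,Q}$, is bounded by $\frac{\eta_d}{n}\cdot\frac{e^{2\varepsilon}}{e^\varepsilon-1}$. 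This bound is uniform in the pair, which gives the stated supremum bound.

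For the ``in particular'' clause, if $\eta_d\to0$ then the right-hand side tends to zero for every fixed $\varepsilon>0$, uniformly in $n\ge1$, and it scales like $\eta_d/n$. The remark that this is only a canonical-pair certificate needs no proof. It records that the bound concerns the homogeneous background, as in the scope paragraph.

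There is no substantive obstacle. The only point needing care is to confirm that the proof of Theorem~\ref{thm:dichotomy}(v) is genuinely non-asymptotic. In particular, the identity $\E_P[U^2]=I_{0,d}(a,b)/n$ comes from the i.i.d.\ representation in Lemma~\ref{lem:canonical-LR}. That representation holds for all $n\ge1$, including $n=1$, where $U=w_{ab,d}(Y)-1$.
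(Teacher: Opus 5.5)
Your proposal is correct and follows the paper's proof essentially verbatim: take the maximum of the two directed bounds \eqref{eq:delta-fwd} and \eqref{eq:delta-rev}, use $e^\varepsilon/(1-e^{-\varepsilon})=e^{2\varepsilon}/(e^\varepsilon-1)\ge 1/(e^\varepsilon-1)$, and substitute $I_d^\star\le\eta_d$. You also check that those two bounds are non-asymptotic and use only $\E_P[U^2]=I_{0,d}(a,b)/n$, not the diluting hypothesis; the paper relies on this fact without stating it.
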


\begin{proof}
The canonical two-sided privacy profile is the maximum of the two directed profiles bounded in
\eqref{eq:delta-fwd} and \eqref{eq:delta-rev}.  Since
$e^\varepsilon/(1-e^{-\varepsilon})=e^{2\varepsilon}/(e^\varepsilon-1)\ge 1/(e^\varepsilon-1)$
for $\varepsilon>0$, the maximum is attained by the second prefactor.  Substituting
$I_d^\star\le \eta_d$ gives the displayed inequality.
\end{proof}

\section{Two poles: GRR versus half-block}

This section records the two opposite geometric poles highlighted by Theorems
\ref{thm:chi2-bound} and \ref{thm:half-block}.

\begin{proposition}[The GRR pole]\label{prop:grr}
Fix $\lambda=e^{\varepsilon_0}$ and consider $d$-ary generalized randomized response
\[
W_d(y\mid x)=
\frac{\lambda}{\lambda+d-1}\,\one\{y=x\}
+
\frac{1}{\lambda+d-1}\,\one\{y\neq x\}. \tag{7.1}\label{eq:grr}
\]
For every pair $a\neq b$,
\[
w_{ab,d}(y)\in \{\lambda^{-1},1,\lambda\},
\]
and under \(P_d^a\),
\[
\nu_{ab,d}
=
\frac{\lambda}{\lambda+d-1}\,\delta_{\lambda^{-1}}
+
\frac{d-2}{\lambda+d-1}\,\delta_{1}
+
\frac{1}{\lambda+d-1}\,\delta_{\lambda}. \tag{7.2}\label{eq:grr-law}
\]
Hence
\[
\nu_{ab,d}\Rightarrow \delta_1,
\qquad
I_{0,d}(a,b)
=
\frac{(\lambda-1)^2(\lambda+1)}{\lambda(\lambda+d-1)}
=
\Theta(d^{-1}). \tag{7.3}\label{eq:grr-chi}
\]
In particular, GRR is diluting.
\end{proposition}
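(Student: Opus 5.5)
The plan is a direct computation of the three pairwise likelihood-ratio values, followed by the second-moment formula for $I_{0,d}$ and an appeal to Theorem~\ref{thm:dichotomy}(i). Fix $a\neq b$ and write $Z=\lambda+d-1$.

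First we compute $w_{ab,d}$ by splitting the outputs into three groups. For $y=a$ we have $P_d^a(a)=\lambda/Z$ and $P_d^b(a)=1/Z$, so $w_{ab,d}(a)=\lambda^{-1}$. For $y=b$ the roles swap and $w_{ab,d}(b)=\lambda$. For each of the remaining $d-2$ symbols both rows equal $1/Z$, so $w_{ab,d}(y)=1$. This shows $w_{ab,d}$ takes values in $\{\lambda^{-1},1,\lambda\}$. Pushing $P_d^a$ forward through these level sets assigns mass $\lambda/Z$ to $\lambda^{-1}$, $(d-2)/Z$ to $1$ and $1/Z$ to $\lambda$, which is \eqref{eq:grr-law}. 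When $d=2$ the middle atom has zero mass, and the formula is still correct.

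Next we evaluate $I_{0,d}(a,b)=\int (r-1)^2\,\nu_{ab,d}(\dd r)$. The atom at $1$ contributes nothing, so
\[
I_{0,d}(a,b)=\frac{1}{Z}\Bigl[\lambda(\lambda^{-1}-1)^2+(\lambda-1)^2\Bigr]
=\frac{(\lambda-1)^2}{Z}\Bigl[\frac1\lambda+1\Bigr]
=\frac{(\lambda-1)^2(\lambda+1)}{\lambda(\lambda+d-1)}.
\]
For fixed $\lambda>1$ this is $\Theta(d^{-1})$.

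Finally, the weak convergence $\nu_{ab,d}\Rightarrow\delta_1$ follows because the masses off the point $1$ total $(\lambda+1)/Z\to0$. The bound on $I_{0,d}$ does not depend on the pair $(a,b)$, so $I_d^\star\to0$ and the family is diluting by definition. Theorem~\ref{thm:dichotomy}(i) then also gives collapse of the laws along every sequence of pairs. There is no real obstacle here. The only points needing care are the algebraic simplification of the chi-square sum and the degenerate case $d=2$.
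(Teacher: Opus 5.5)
Your proposal is correct and follows essentially the same route as the paper. Both compute the three likelihood-ratio levels and their null-row masses, read off $\nu_{ab,d}$, use that the mass off $1$ vanishes for weak convergence, and evaluate the second moment $\frac{\lambda}{\lambda+d-1}(\lambda^{-1}-1)^2+\frac{1}{\lambda+d-1}(\lambda-1)^2$; the pair-independence of $I_{0,d}(a,b)$ then gives $I_d^\star\to0$ directly.
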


\begin{proof}
Fix $a\neq b$.  Under \(P_d^a\), the output $Y$ equals \(a\) with probability
\(
\alpha_d=\lambda/(\lambda+d-1)
\),
equals \(b\) with probability
\(
\beta_d=1/(\lambda+d-1)
\),
and falls in the remaining \(d-2\) symbols with total probability
\(
\rho_d=(d-2)/(\lambda+d-1)
\).
The pairwise likelihood ratio is
\[
w_{ab,d}(a)=\lambda^{-1},\qquad
w_{ab,d}(b)=\lambda,\qquad
w_{ab,d}(y)=1 \ \ (y\neq a,b),
\]
which proves \eqref{eq:grr-law}.  Since \(\beta_d\to0\) and \(\rho_d\to1\), we have
\(
\nu_{ab,d}\Rightarrow \delta_1
\).
Finally,
\[
I_{0,d}(a,b)
=
\alpha_d(\lambda^{-1}-1)^2+\beta_d(\lambda-1)^2
=
\frac{(\lambda-1)^2(\lambda+1)}{\lambda(\lambda+d-1)}.
\]
\end{proof}

\begin{proposition}[The half-block pole]\label{prop:hb-pole}
For the half-block family of Theorem~\ref{thm:half-block}, the pairwise likelihood-ratio law along an
opposite pair is
\[
\nu_{ab,d}
=
\frac{\lambda}{1+\lambda}\,\delta_{\lambda^{-1}}
+
\frac{1}{1+\lambda}\,\delta_{\lambda}, \tag{7.4}\label{eq:hb-law}
\]
independent of $d$, and
\[
I_{0,d}(a,b)=\frac{(\lambda-1)^2}{\lambda}. \tag{7.5}\label{eq:hb-pole}
\]
Hence the half-block family is persistent and saturates the universal upper bound of
Theorem~\ref{thm:chi2-bound}.
\end{proposition}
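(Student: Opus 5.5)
The plan is to read everything off the explicit formula for the likelihood ratio already obtained in the proof of Theorem~\ref{thm:half-block}, and then apply the definitions of Section~\ref{sec:dichotomy}. First, I will recall from \eqref{eq:w-halfblock} that along an opposite pair $b=a+d/2$ the row ratio is
\[
w_{ab,d}=\lambda^{-1}\one_{A_a}+\lambda\one_{A_a^c}.
\]
I will also recall the block masses $P_d^a(A_a)=\lambda/(1+\lambda)$ and $P_d^a(A_a^c)=1/(1+\lambda)$. These follow from \eqref{eq:half-block}, since $|A_a|=d/2$ and each entry carries the factor $2/d$.

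Next, I will push forward under $P_d^a$. By definition $\nu_{ab,d}=P_d^a\circ w_{ab,d}^{-1}$, so the atom at $\lambda^{-1}$ gets mass $P_d^a(A_a)$ and the atom at $\lambda$ gets mass $P_d^a(A_a^c)$. This gives \eqref{eq:hb-law}, and the factor $d$ has visibly cancelled. The value of the chi-square in \eqref{eq:hb-pole} is then the second moment $\int(r-1)^2\,\nu_{ab,d}(\dd r)$, which is the computation already done in Theorem~\ref{thm:half-block}(ii). Alternatively, I can invoke the equality case of Theorem~\ref{thm:chi2-bound}: the law is exactly the endpoint two-point law with masses \eqref{eq:endpoint-masses}, so equality holds in \eqref{eq:chi-bound}. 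This also gives the saturation claim.

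For persistence, I will note that $I_d^\star\ge I_{0,d}(a,a+d/2)=(\lambda-1)^2/\lambda$ for every even $d$. This quantity is positive because $\varepsilon_0>0$, so $\limsup_d I_d^\star>0$, where the family is indexed along even $d$. As a consistency check I can also point to Theorem~\ref{thm:dichotomy}(ii), using the constant sequence of laws, whose limit is the nondegenerate law \eqref{eq:hb-law}.

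There is no real obstacle here. The only point that needs care is the bookkeeping that $A_{a+d/2}=A_a^c$ in $\mathbb Z_d$, which is what makes the ratio take only the endpoint values. I would verify it directly: the half-block $A_{a+d/2}$ runs over $a+d/2,\dots,a+d-1$, which are exactly the $d/2$ residues not in $A_a$.
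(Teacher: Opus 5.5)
Your proposal is correct and follows the paper's route: the paper proves this proposition simply by pointing back to the computation of $w_{ab,d}$ and the block masses in the proof of Theorem~\ref{thm:half-block}(ii). Your additional steps (pushing forward under $P_d^a$, persistence via $I_d^\star\ge(\lambda-1)^2/\lambda>0$ along even $d$, saturation via the equality case of Theorem~\ref{thm:chi2-bound}, and the check $A_{a+d/2}=A_a^c$) just spell out details the paper leaves implicit.
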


\begin{proof}
This is exactly the computation in the proof of Theorem~\ref{thm:half-block}(ii).
\end{proof}

\begin{corollary}[Realization of arbitrary dilution profiles]\label{cor:interpolation}
Fix $\varepsilon_0>0$, let $\lambda=e^{\varepsilon_0}$, and let $\theta_d\in[0,1]$ be an arbitrary
sequence.  For every $d\ge 4$ even there exists a full-support $\varepsilon_0$-LDP channel
$W_d:[d]\to\Delta(\mathcal Y_d)$ on a finite output alphabet $\mathcal Y_d$ and a pair of inputs
$a_d\neq b_d$ such that the pairwise likelihood-ratio law along the canonical neighboring experiment
factorizes as
\[
\nu_{a_db_d,d}
=
(1-\theta_d)\,\delta_1
+
\theta_d\left(
\frac{\lambda}{1+\lambda}\,\delta_{\lambda^{-1}}
+
\frac{1}{1+\lambda}\,\delta_{\lambda}
\right),
\]
and consequently
\[
I_{0,d}(a_d,b_d)=\theta_d\,\frac{(\lambda-1)^2}{\lambda}. \tag{7.6}\label{eq:theta-interp}
\]
Hence every prescribed decay profile $\theta_d\in[0,1]$ of the pairwise $\chi^2$ obstruction is
realizable by a full-support $\varepsilon_0$-LDP family: $\theta_d\to 0$ yields dilution at the
prescribed rate, while $\limsup\theta_d>0$ yields persistence.
\end{corollary}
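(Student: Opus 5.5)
The plan is to build $W_d$ as an explicit mixture of the half-block channel of Definition~\ref{def:half-block} with a channel that is pairwise neutral along the chosen pair. Fix $d\ge4$ even, let $H_d$ denote the half-block channel on $\mathbb Z_d$, and take the opposite pair $a_d=0$, $b_d=d/2$. Introduce one fresh output symbol $\star$ and set $\mathcal Y_d:=\mathbb Z_d\sqcup\{\star\}$. Define
\[
W_d(y\mid x):=\theta_d\,H_d(y\mid x)\quad (y\in\mathbb Z_d),\qquad W_d(\star\mid x):=1-\theta_d .
\]
This is the ``augmented'' mixture: with probability $1-\theta_d$ a user emits the common symbol $\star$, and otherwise applies $H_d$. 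If $\theta_d=0$ the channel is degenerate but still a valid Markov kernel, with the output alphabet reduced to $\{\star\}$. If $\theta_d=1$ it is $H_d$ itself, modulo a zero-mass symbol that Lemma~\ref{lem:common-support} tells us to discard.

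\textbf{Step 1: LDP and full support.} On $\mathbb Z_d$, the ratio $W_d(y\mid x)/W_d(y\mid x')$ equals $H_d(y\mid x)/H_d(y\mid x')\in\{1,\lambda,\lambda^{-1}\}$ by Theorem~\ref{thm:half-block}(i). On $\star$ the ratio is $1$. Hence $W_d$ is $\varepsilon_0$-LDP, and every output of positive mass has positive mass under all inputs.

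\textbf{Step 2: the likelihood-ratio law.} By \eqref{eq:w-halfblock} we have $w_{ab,d}(y)=\lambda^{-1}$ on $A_a$, $w_{ab,d}(y)=\lambda$ on $A_a^c$, and $w_{ab,d}(\star)=1$, because the factor $\theta_d$ cancels. Under $P_d^a$ these sets carry masses $\theta_d\lambda/(1+\lambda)$, $\theta_d/(1+\lambda)$ and $1-\theta_d$. Pushing forward gives exactly the displayed mixture for $\nu_{a_db_d,d}$. Corollary~\ref{cor:correct-invariant} then identifies the canonical experiment with this law.

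\textbf{Step 3: the $\chi^2$ value.} Since $I_{0,d}=\int(r-1)^2\,\nu(\dd r)$ is linear in $\nu$, and the $\delta_1$ component contributes zero, \eqref{eq:halfblock-chi} immediately gives \eqref{eq:theta-interp}. The dilution and persistence conclusions follow from the definitions, together with Theorem~\ref{thm:dichotomy}(i)--(ii), using $I_d^\star\ge I_{0,d}(a_d,b_d)$ for persistence.

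The one point requiring care is dilution. For the family to be diluting in the sense of the definition, we need the worst case $I_d^\star\to0$, not only the value along the chosen pair. So I would also bound $\max_{x\neq x'}I_{0,d}(x,x')$. Every pair of rows of $W_d$ shares the $\star$ atom with likelihood ratio $1$, and on $\mathbb Z_d$ the rows are $\theta_d$ times half-block rows. Hence $I_{0,d}(x,x')=\theta_d\,\chi^2(H_d(\cdot\mid x')\|H_d(\cdot\mid x))\le\theta_d(\lambda-1)^2/\lambda$ by Theorem~\ref{thm:chi2-bound}. Therefore $I_d^\star=\theta_d(\lambda-1)^2/\lambda$ exactly, and the prescribed profile is realized for the worst-case budget as well.

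The hypothesis $d\ge4$ only serves to keep the alphabet genuinely growing. The real obstacle is the interpretive point that the statement allows an enlarged alphabet $\mathcal Y_d$. If one insisted on $\mathcal Y_d=[d]$, one could instead reserve the last symbol of $[d]$ as $\star$ and run the half-block construction on $[d-1]$ (or on $[d-2]$, to keep the half-block alphabet even). That costs nothing in the argument.
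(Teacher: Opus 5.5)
Your construction is the paper's: $\theta_d$ times the half-block channel plus one common null symbol of mass $1-\theta_d$, the opposite pair, the mixture law and its second moment read off directly, and the boundary cases $\theta_d\in\{0,1\}$ handled by discarding zero-mass symbols. Your extra step, showing every pair satisfies $I_{0,d}(x,x')=\theta_d\,\chi^2\bigl(H_d(\cdot\mid x')\,\|\,H_d(\cdot\mid x)\bigr)\le\theta_d(\lambda-1)^2/\lambda$ and hence $I_d^\star=\theta_d(\lambda-1)^2/\lambda$ exactly, is a worthwhile addition: the paper only computes the chosen pair, whereas dilution is defined through the worst-case $I_d^\star$.
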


\begin{proof}
Fix $d\ge 4$ even and write $\lambda=e^{\varepsilon_0}$.  We give a construction valid uniformly in
$\theta_d\in[0,1]$.

Let $\mathcal Y_d:=\{1,\dots,d\}\cup\{\circ\}$, where $\circ$ is a single common output symbol.
Let $\widetilde W_d:[d]\to\Delta(\{1,\dots,d\})$ denote the half-block channel of
Definition~\ref{def:half-block} on the cyclic alphabet $\{1,\dots,d\}$.
Define $W_d:[d]\to\Delta(\mathcal Y_d)$ by
\[
W_d(y\mid x):=
\theta_d\,\widetilde W_d(y\mid x)
\qquad (y\in\{1,\dots,d\}),
\qquad
W_d(\circ\mid x):=1-\theta_d.
\]
Choose $a_d:=a$ and $b_d:=a+d/2\pmod d$ to be the half-block opposite pair.

\medskip
\noindent\emph{Full support and $\varepsilon_0$-LDP.}  For $\theta_d\in(0,1)$ all entries
$W_d(y\mid x)$ with $y\in\{1,\dots,d\}$ are positive multiples of the strictly positive
half-block entries, and $W_d(\circ\mid x)=1-\theta_d>0$, so $W_d$ has full support on $\mathcal Y_d$.
For each fixed $y\in\{1,\dots,d\}$ the ratio $W_d(y\mid x)/W_d(y\mid x')$ equals the corresponding
half-block ratio, which lies in $[\lambda^{-1},\lambda]$ by Theorem~\ref{thm:half-block}(i); for
$y=\circ$ the ratio equals $1$.  Hence $W_d$ is $\varepsilon_0$-LDP.  The boundary cases
$\theta_d=0$ and $\theta_d=1$ are handled by direct continuous limits: at $\theta_d=0$ the channel
sends every input deterministically to $\circ$, which is trivially $\varepsilon_0$-LDP and gives
$\nu_{a_db_d,d}=\delta_1$ on the support $\{\circ\}$; at $\theta_d=1$ the channel reduces to
$\widetilde W_d$ on $\{1,\dots,d\}$, and the symbol $\circ$ has zero mass.  Throughout, we identify
``full support'' with positive mass on every retained output symbol after restriction to the support
of the row $W_d(\cdot\mid a)$, in line with the convention in Section~\ref{sec:estimation-lower}.

\medskip
\noindent\emph{Pairwise likelihood-ratio law.}  Under the null row $W_d(\cdot\mid a)$, the output
symbol $\circ$ has mass $1-\theta_d$ and contributes the atom $\delta_1$ since
$W_d(\circ\mid b)/W_d(\circ\mid a)=1$.  Conditional on $y\in\{1,\dots,d\}$, which has total
null-row mass $\theta_d$, the law of the half-block ratio
$\widetilde W_d(y\mid b)/\widetilde W_d(y\mid a)$ is by Theorem~\ref{thm:half-block}(iii) the
two-point law
\[
\frac{\lambda}{1+\lambda}\,\delta_{\lambda^{-1}}
+
\frac{1}{1+\lambda}\,\delta_{\lambda}.
\]
Mixing the two contributions with weights $1-\theta_d$ and $\theta_d$ yields the displayed
factorization of $\nu_{a_db_d,d}$, and \eqref{eq:theta-interp} is the corresponding second moment.
\end{proof}

\section{Universal lower bound on estimation risk}\label{sec:estimation-lower}

From this point on, \(W:[d]\to\Delta(\mathcal Y)\) denotes a channel with arbitrary finite output alphabet
\(\mathcal Y\).
As in Section~2, we discard all outputs of common zero mass, so that
\(\mu(y):=\frac{1}{d}\sum_{x=1}^d W(y\mid x)>0\) for every \(y\in\mathcal Y\).
We work temporarily in the i.i.d.\ multinomial estimation model:
\(\theta=(\theta_1,\dots,\theta_d)\in\Delta_d\),
\(X_1,\dots,X_n \iid \theta\), and
\(Y_i\sim W(\cdot\mid X_i)\) conditionally independently.
The released statistic is the shuffled histogram on \(\mathcal Y\).
Write
\[
q_\theta(y):=\sum_{x=1}^d \theta_x W(y\mid x),
\qquad
\chi_\ast(W):=\max_{a\neq b}\chi^2\!\bigl(W(\cdot\mid b)\,\|\,W(\cdot\mid a)\bigr),
\]
and let
\[
P_T:=I_d-\frac1d\,\one\one^\top
\]
denote the orthogonal projection onto
\(
T_d:=\{u\in\R^d:\sum_x u_x=0\}.
\)

Whenever the Fisher information is singular on \(T_d\), no estimator can be locally unbiased in the missing
tangent direction, so the lower bounds below are then vacuous or trivial.
Accordingly we restrict attention to channels for which \(I_\theta(W)\) is positive definite on \(T_d\).
This injectivity condition holds, for example, whenever the \(d\times |\mathcal Y|\) channel matrix \(W\)
has rank \(d\).

An estimator \(\widehat\theta\) is \emph{locally unbiased at \(\theta_0\) in tangent directions} if
\(\E_{\theta_0}[\widehat\theta]=\theta_0\) and
\[
\left.\frac{d}{dt}\E_{\theta_0+tv}[\widehat\theta]\right|_{t=0}=v
\qquad\text{for every } v\in T_d.
\]
The Fisher information on \(T_d\) is
\[
I_\theta(W)
=
n\,P_T\,W\,\diag(q_\theta)^{-1}W^\top P_T. \tag{8.1}\label{eq:fisher-general}
\]

\begin{theorem}[Near-vertex Cram\'er--Rao bound]\label{thm:near-vertex-cr}
Assume \(I_{\theta}(W)\) is positive definite on \(T_d\) for \(\theta\) near \(\theta^{(\rho)}\).
Fix \(0<\rho<1/(d-1)\) and
\[
\theta^{(\rho)}:=\bigl(1-(d-1)\rho,\rho,\dots,\rho\bigr).
\]
Let \(\widehat\theta\) be any estimator that is locally unbiased at \(\theta^{(\rho)}\) in tangent
directions.
Then
\[
\E_{\theta^{(\rho)}}\bigl\|\widehat\theta-\theta^{(\rho)}\bigr\|_2^2
\ge
\frac{(d-1)\bigl(1-(d-1)\rho\bigr)}{n\,\chi_\ast(W)}. \tag{8.2}\label{eq:near-vertex-cr}
\]
\end{theorem}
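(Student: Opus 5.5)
The plan is to combine a Cram\'er--Rao inequality on the tangent space $T_d$ with a trace bound on the Fisher information $I_{\theta^{(\rho)}}(W)$ from \eqref{eq:fisher-general}, where the trace is controlled by pairwise $\chi^2$ divergences against the dominant row $W(\cdot\mid 1)$.

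\emph{Step 1 (Cram\'er--Rao on $T_d$).} Choose an orthonormal basis $u_1,\dots,u_{d-1}$ of $T_d$ and parametrize a neighbourhood of $\theta^{(\rho)}$ in $\Delta_d$ by $t\mapsto\theta^{(\rho)}+\sum_k t_k u_k$. This is legitimate because $\theta^{(\rho)}$ lies in the relative interior of $\Delta_d$ for $0<\rho<1/(d-1)$. The likelihood of the histogram is a polynomial in $t$, so differentiation under the (finite) sum is harmless. Local unbiasedness in tangent directions says $\E_t[\widehat\theta]$ has derivative $U=[u_1,\dots,u_{d-1}]$ at $t=0$. The multivariate Cram\'er--Rao inequality, obtained by Cauchy--Schwarz against the score, then gives
\[
\Cov(\widehat\theta)\succeq U\,(U^\top I_{\theta^{(\rho)}}U)^{-1}U^\top .
\]
Taking traces and using $\E\|\widehat\theta-\theta\|^2\ge\operatorname{tr}\Cov(\widehat\theta)$ yields a lower bound of $\operatorname{tr}\bigl((I_{\theta^{(\rho)}}|_{T_d})^{-1}\bigr)$. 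Here positive definiteness on $T_d$ is used. By Cauchy--Schwarz on the $d-1$ eigenvalues (AM--HM),
\[
\operatorname{tr}\bigl((I|_{T_d})^{-1}\bigr)\ge\frac{(d-1)^2}{\operatorname{tr}(I|_{T_d})}.
\]
It therefore suffices to prove
\[
\operatorname{tr}\, I_{\theta^{(\rho)}}(W)\le\frac{n(d-1)\chi_\ast(W)}{1-(d-1)\rho}.
\]

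\emph{Step 2 (trace bound).} Write $\bar W(y):=\frac1d\sum_x W(y\mid x)$. Then
\[
\operatorname{tr}\bigl(P_TW\diag(q_\theta)^{-1}W^\top P_T\bigr)=\sum_y\frac{1}{q_\theta(y)}\sum_x\bigl(W(y\mid x)-\bar W(y)\bigr)^2 .
\]
Since the mean minimizes squared deviations, the inner sum is at most $\sum_x(W(y\mid x)-W(y\mid 1))^2$. At $\theta^{(\rho)}$ we have $q_\theta(y)\ge(1-(d-1)\rho)\,W(y\mid 1)$. Hence
\[
\operatorname{tr} I\le\frac{n}{1-(d-1)\rho}\sum_{x\ne1}\sum_y\frac{(W(y\mid x)-W(y\mid1))^2}{W(y\mid1)}=\frac{n}{1-(d-1)\rho}\sum_{x\neq1}\chi^2\bigl(W(\cdot\mid x)\,\|\,W(\cdot\mid1)\bigr),
\]
which is at most $n(d-1)\chi_\ast(W)/(1-(d-1)\rho)$. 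Combined with Step 1, this gives \eqref{eq:near-vertex-cr}.

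\emph{Degenerate outputs.} If some retained $y$ has $W(y\mid1)=0$ but $W(y\mid x)>0$ for some $x$, then $\chi^2(W(\cdot\mid x)\,\|\,W(\cdot\mid 1))=\infty$. In that case $\chi_\ast(W)=\infty$ and the bound is trivial. Under pure LDP this situation cannot arise, by Lemma~\ref{lem:common-support}.

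\emph{Main obstacle.} I expect the delicate point to be Step 1. The parameter lives on an affine slice of the simplex, so the Cram\'er--Rao bound must be stated for the restricted Fisher matrix $U^\top IU$, which is exactly $I$ on $T_d$. One must also check that the derivative condition in the definition of local unbiasedness matches the Jacobian $U$, so that no extra term from the normal direction $\one$ appears. The estimator's own covariance may have a component along $\one$, but that only increases the risk, so it does not hurt. The trace computation in Step 2 is routine once the mean-minimization trick and the vertex lower bound on $q_\theta$ are in place.
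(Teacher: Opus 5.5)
Your proposal is correct and follows essentially the same route as the paper: Cram\'er--Rao in orthonormal tangent coordinates, the AM--HM bound $\operatorname{tr}(I|_{T_d})^{-1}\ge (d-1)^2/\operatorname{tr}(I|_{T_d})$, and a trace bound using $q_{\theta^{(\rho)}}\ge(1-(d-1)\rho)\,W(\cdot\mid 1)$. Your columnwise ``mean minimizes squared deviations'' step is simply the per-output-symbol form of the paper's matrix inequality $\sum_{j\ge2}(e_j-e_1)(e_j-e_1)^\top\succeq I$ on $T_d$.
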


\begin{proof}
For \(u\in T_d\), \eqref{eq:fisher-general} gives
\[
u^\top I_{\theta^{(\rho)}}(W)u
=
n\sum_{y\in\mathcal Y}
\frac{\bigl(\sum_x u_x W(y\mid x)\bigr)^2}{q_{\theta^{(\rho)}}(y)}.
\]
For \(j=2,\dots,d\), set \(w_j:=e_j-e_1\).
Since
\(
q_{\theta^{(\rho)}}(y)\ge \bigl(1-(d-1)\rho\bigr)W(y\mid 1)
\),
we have
\[
\begin{aligned}
w_j^\top I_{\theta^{(\rho)}}(W) w_j
&=
n\sum_{y\in\mathcal Y}
\frac{\bigl(W(y\mid j)-W(y\mid 1)\bigr)^2}{q_{\theta^{(\rho)}}(y)}\\
&\le
\frac{n}{1-(d-1)\rho}\,
\chi^2\!\bigl(W(\cdot\mid j)\,\|\,W(\cdot\mid 1)\bigr)\\
&\le
\frac{n\chi_\ast(W)}{1-(d-1)\rho}.
\end{aligned} \tag{8.3}\label{eq:wj-fisher}
\]
Now let
\[
S:=\sum_{j=2}^d w_j w_j^\top.
\]
For \(u\in T_d\),
\[
u^\top S u
=
\sum_{j=2}^d (u_j-u_1)^2
=
\|u\|_2^2+d\,u_1^2
\ge
\|u\|_2^2,
\]
so \(S\succeq I\) on \(T_d\).  Therefore
\[
\operatorname{tr}_{T_d} I_{\theta^{(\rho)}}(W)
\le
\operatorname{tr}\!\bigl(I_{\theta^{(\rho)}}(W)S\bigr)
=
\sum_{j=2}^d w_j^\top I_{\theta^{(\rho)}}(W)w_j
\le
\frac{n(d-1)\chi_\ast(W)}{1-(d-1)\rho}
\]
by \eqref{eq:wj-fisher}.  Since \(\dim T_d=d-1\), the arithmetic--harmonic mean inequality gives
\[
\operatorname{tr}_{T_d} I_{\theta^{(\rho)}}(W)^{-1}
\ge
\frac{(d-1)^2}{\operatorname{tr}_{T_d} I_{\theta^{(\rho)}}(W)}
\ge
\frac{(d-1)\bigl(1-(d-1)\rho\bigr)}{n\chi_\ast(W)}.
\]

Choose an orthonormal matrix \(H\in\R^{d\times(d-1)}\) with image \(T_d\), and parametrize the simplex
locally by
\(
u\mapsto \theta^{(\rho)}+Hu
\).
For the induced \((d-1)\)-dimensional parametric submodel, the Fisher information is
\(H^\top I_{\theta^{(\rho)}}(W)H\), while local unbiasedness in tangent directions is exactly ordinary
local unbiasedness for the parameter \(u\) at \(u=0\).
The multivariate Cram\'er--Rao inequality therefore yields
\(
H^\top \Cov_{\theta^{(\rho)}}(\widehat\theta)H \succeq
\bigl(H^\top I_{\theta^{(\rho)}}(W)H\bigr)^{-1}
\),
equivalently
\(
\Cov_{\theta^{(\rho)}}(\widehat\theta)\succeq I_{\theta^{(\rho)}}(W)^{-1}
\)
on \(T_d\).
Taking traces on \(T_d\) proves \eqref{eq:near-vertex-cr}.
\end{proof}

\begin{corollary}[Locally unbiased minimax lower bound]\label{cor:local-minimax}
For every fixed \(d\) and every \(0<\rho<1/(d-1)\),
\[
\inf_{\widehat\theta\ \mathrm{loc.\ unbiased}}
\sup_{\theta\in\Delta_d}
\E_\theta\|\widehat\theta-\theta\|_2^2
\ge
\frac{(d-1)\bigl(1-(d-1)\rho\bigr)}{n\chi_\ast(W)}.
\]
In particular, for fixed \(d\),
\[
\inf_{\widehat\theta\ \mathrm{loc.\ unbiased}}
\sup_{\theta\in\Delta_d}
\E_\theta\|\widehat\theta-\theta\|_2^2
\gtrsim
\frac{d-1}{n\chi_\ast(W)}
\qquad\text{as } n\chi_\ast(W)\to\infty.
\]
\end{corollary}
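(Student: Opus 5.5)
The corollary follows from Theorem~\ref{thm:near-vertex-cr} by the elementary comparison of a supremum with a value at a single point. Fix $0<\rho<1/(d-1)$ and let $\widehat\theta$ be any estimator that is locally unbiased, understood as locally unbiased in tangent directions at $\theta^{(\rho)}$; this is exactly the hypothesis of Theorem~\ref{thm:near-vertex-cr}. Since $\theta^{(\rho)}\in\Delta_d$ (its coordinates are positive and sum to one), we have
\[
\sup_{\theta\in\Delta_d}\E_\theta\|\widehat\theta-\theta\|_2^2
\ge
\E_{\theta^{(\rho)}}\|\widehat\theta-\theta^{(\rho)}\|_2^2
\ge
\frac{(d-1)\bigl(1-(d-1)\rho\bigr)}{n\chi_\ast(W)} .
\]
The right-hand side does not depend on $\widehat\theta$, so taking the infimum over the class gives the first display. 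Here we use the standing assumption of this section that $I_\theta(W)$ is positive definite on $T_d$ near $\theta^{(\rho)}$; if it is singular, no locally unbiased estimator exists, the infimum is over the empty set, and the bound holds vacuously.

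For the asymptotic statement, observe that the first display holds for every $\rho\in(0,1/(d-1))$ with the same left-hand side. Taking the supremum over $\rho$ and letting $\rho\downarrow0$ gives
\[
\inf_{\widehat\theta\ \mathrm{loc.\ unbiased}}\ \sup_{\theta\in\Delta_d}\E_\theta\|\widehat\theta-\theta\|_2^2
\ge
\frac{d-1}{n\chi_\ast(W)} .
\]
This is stronger than the claimed $\gtrsim$, and it holds for every $n$, not only as $n\chi_\ast(W)\to\infty$. If one prefers a fixed interior point, take $\rho=1/(2(d-1))$; this yields the constant $1/2$.

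The only real subtlety is interpretive. Local unbiasedness is anchored at a point, whereas the minimax supremum ranges over all of $\Delta_d$. The class must therefore be read as estimators locally unbiased at the chosen near-vertex point $\theta^{(\rho)}$, or more broadly at every interior point; either reading contains the estimators to which Theorem~\ref{thm:near-vertex-cr} applies. Under the second reading, the limit $\rho\downarrow0$ is legitimate, since for each fixed $\rho$ the bound applies to the whole class. The phrase ``as $n\chi_\ast(W)\to\infty$'' only signals that the bound becomes informative, meaning it lies below the trivial $\ell_2^2$ diameter $2$ of the simplex, once $n\chi_\ast(W)\gtrsim d$. No argument is needed there beyond the displayed inequality.
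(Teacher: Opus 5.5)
Your argument is correct and is the paper's proof: apply Theorem~\ref{thm:near-vertex-cr} at the single point $\theta^{(\rho)}\in\Delta_d$, and bound the supremum over $\Delta_d$ below by the risk at that point. Your $\rho\downarrow0$ sharpening to $(d-1)/(n\chi_\ast(W))$ is valid only when the class consists of estimators locally unbiased at every $\theta^{(\rho)}$, as you note yourself. Under the paper's pointwise reading, your fixed choice $\rho=1/(2(d-1))$ already gives the $\gtrsim$ statement.
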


\begin{proof}
Apply Theorem~\ref{thm:near-vertex-cr} at the single point \(\theta^{(\rho)}\).
The supremum over \(\theta\in\Delta_d\) is at least the risk at that point.
\end{proof}

\begin{lemma}[Assouad cube lemma]\label{lem:assouad-cube}
Let \(\{P_v:v\in\{0,1\}^m\}\) be a family of experiments.
Assume that for every coordinate \(j\in\{1,\dots,m\}\) and every pair
\(v,v^{(j)}\) differing only in coordinate \(j\),
\[
\TV(P_v,P_{v^{(j)}})\le \eta.
\]
Then every decoder \(\widehat v\in\{0,1\}^m\) satisfies
\[
\sup_{v\in\{0,1\}^m}
\E_v\sum_{j=1}^m \one\{\widehat v_j\neq v_j\}
\ge
\frac{m}{2}(1-\eta). \tag{8.4}\label{eq:assouad-lemma}
\]
\end{lemma}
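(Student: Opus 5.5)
The plan is the standard Assouad reduction: lower-bound the supremum over $v$ by the average over the uniform prior on $\{0,1\}^m$, and then split the averaged Hamming loss into its $m$ coordinates. For each coordinate $j$, we will pair each vertex $v$ with its neighbour $v^{(j)}$ and show that the two-point testing error is at least $(1-\eta)/2$. Summing over $j$ then gives the claim. If $\widehat v$ is randomized, we simply absorb the external randomization into the experiments; the TV hypothesis still holds, so we may assume $\widehat v$ is deterministic.

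\emph{Step 1 (averaging).} Since a supremum dominates an average,
\[
\sup_{v}\E_v\sum_{j=1}^m\one\{\widehat v_j\neq v_j\}
\ge 2^{-m}\sum_{v}\sum_{j=1}^m P_v(\widehat v_j\neq v_j)
=\sum_{j=1}^m 2^{-m}\sum_{v} P_v(\widehat v_j\neq v_j).
\]

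\emph{Step 2 (pairing).} Fix $j$. The map $v\mapsto v^{(j)}$ is an involution, and it matches each $v$ with $v_j=0$ to a unique $v^{(j)}$ with $v^{(j)}_j=1$. The inner sum therefore becomes $2^{-m}\sum_{v:v_j=0}\bigl[P_v(\widehat v_j=1)+P_{v^{(j)}}(\widehat v_j=0)\bigr]$, which has $2^{m-1}$ terms. For the event $A=\{\widehat v_j=1\}$ we have $P_v(A)+P_{v^{(j)}}(A^c)=1-\bigl(P_{v^{(j)}}(A)-P_v(A)\bigr)\ge 1-\TV(P_v,P_{v^{(j)}})\ge 1-\eta$. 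This uses the definition $\TV(P,Q)=\sup_A|P(A)-Q(A)|$.

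\emph{Step 3 (summing).} Each coordinate therefore contributes at least $2^{-m}\cdot 2^{m-1}(1-\eta)=(1-\eta)/2$. Summing over the $m$ coordinates yields \eqref{eq:assouad-lemma}.

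There is no real obstacle in this argument. The only point that needs care is the bookkeeping in Step 2: checking that the pairing is a bijection between the two halves of the cube, so that the factor $2^{m-1}/2^m=1/2$ comes out exactly.
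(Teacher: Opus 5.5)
Your proof is correct and follows the same route as the paper's: average over the cube, pair each vertex with its coordinate-$j$ neighbour, lower-bound the two-point testing error via $\TV(P_v,P_{v^{(j)}})\le\eta$, and sum over $j$. Your Step 2 is the precise form of the key estimate: you bound the \emph{sum} of the two error probabilities, $P_v(A)+P_{v^{(j)}}(A^c)\ge 1-\eta$, which is what the averaging step needs. The paper phrases this step in terms of the maximal error probability, which must be read as the average of the two errors to obtain the constant $\frac m2(1-\eta)$.
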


\begin{proof}
Fix \(j\).  For each \(u\in\{0,1\}^{m-1}\), let \(P_{u,0}\) and \(P_{u,1}\) denote the two laws obtained by
freezing all coordinates except \(j\).
For testing these two simple hypotheses, every decision rule \(\psi\) has maximal error probability at least
\[
\frac12\bigl(1-\TV(P_{u,0},P_{u,1})\bigr)\ge \frac12(1-\eta).
\]
Now take \(\psi=\widehat v_j\), average over the \(2^{m-1}\) choices of \(u\), and then sum over
\(j=1,\dots,m\).
This gives \eqref{eq:assouad-lemma}.
\end{proof}

\begin{theorem}[Assouad lower bound for arbitrary estimators]\label{thm:assouad-risk}
Assume \(0<\chi_\ast(W)<\infty\).
Fix \(0<\delta\le 1/(4(d-1))\), and for \(v=(v_2,\dots,v_d)\in\{0,1\}^{d-1}\) define
\[
\theta^v_1:=1-\sum_{j=2}^d \delta(1+v_j),
\qquad
\theta^v_j:=\delta(1+v_j),\quad j=2,\dots,d.
\]
Then every estimator \(\widehat\theta\) based on the shuffled histogram satisfies
\[
\sup_{\theta\in\Delta_d}\E_\theta\|\widehat\theta-\theta\|_2^2
\ge
\frac{d-1}{8}\,\delta^2\Bigl(1-\delta\sqrt{n\chi_\ast(W)}\Bigr). \tag{8.5}\label{eq:assouad-risk}
\]
Consequently, whenever \(n\chi_\ast(W)\ge 4(d-1)^2\),
\[
\inf_{\widehat\theta}\sup_{\theta\in\Delta_d}\E_\theta\|\widehat\theta-\theta\|_2^2
\ge
\frac{d-1}{64\,n\chi_\ast(W)}. \tag{8.6}\label{eq:assouad-cor}
\]
\end{theorem}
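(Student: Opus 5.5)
The plan is a standard Assouad reduction: embed the cube $\{0,1\}^{d-1}$ into the simplex via $v\mapsto\theta^v$, turn any estimator into a coordinatewise decoder, and bound the total variation between neighbouring vertices by $\chi_\ast(W)$. Lemma~\ref{lem:assouad-cube} then does the rest.

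\emph{Admissibility of the cube.} First check that each $\theta^v$ lies in $\Delta_d$. Since $\delta\le 1/(4(d-1))$, we have $\theta^v_1\ge 1-2(d-1)\delta\ge 1/2$, and all other coordinates are positive.

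\emph{Reduction to a decoder.} Given $\widehat\theta$, define $\widehat v_j:=\one\{\widehat\theta_j>3\delta/2\}$ for $j=2,\dots,d$. Since $\theta^v_j\in\{\delta,2\delta\}$, a decoding error $\widehat v_j\neq v_j$ forces $|\widehat\theta_j-\theta^v_j|\ge\delta/2$. Dropping the first coordinate then gives
\[
\|\widehat\theta-\theta^v\|_2^2\ \ge\ \frac{\delta^2}{4}\sum_{j=2}^d\one\{\widehat v_j\neq v_j\}.
\]
Hence the supremum risk over $\Delta_d$ is at least $\tfrac{\delta^2}{4}$ times the left side of \eqref{eq:assouad-lemma} with $m=d-1$. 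It therefore suffices to verify the hypothesis of Lemma~\ref{lem:assouad-cube} with $\eta=\delta\sqrt{n\chi_\ast(W)}$. That yields $\tfrac{\delta^2}{4}\cdot\tfrac{d-1}{2}(1-\eta)$, which is exactly \eqref{eq:assouad-risk}.

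\emph{Neighbour TV bound (the main step).} Let $\theta=\theta^v$ and $\theta'=\theta^{v^{(j)}}$ be neighbours, so that $\theta'-\theta=\pm\delta(e_j-e_1)$.
\begin{itemize}
\item The shuffled histogram is a function of the i.i.d.\ sample $Y_1,\dots,Y_n\sim q_\theta$. By data processing, TV between the histogram laws is at most $\TV(q_\theta^{\otimes n},q_{\theta'}^{\otimes n})$.
\item I avoid tensorizing $\chi^2$, which grows like $(1+\chi^2)^n-1$. Instead I use Pinsker together with $\KL\le\chi^2$ and additivity of KL:
\[
\TV\le\sqrt{\tfrac{n}{2}\,\chi^2(q_{\theta'}\|q_\theta)}.
\]
\item Next, $\chi^2(q_{\theta'}\|q_\theta)=\delta^2\sum_y (W(y\mid j)-W(y\mid 1))^2/q_\theta(y)$. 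Using $q_\theta(y)\ge\theta_1W(y\mid1)\ge\tfrac12 W(y\mid 1)$, exactly as in \eqref{eq:wj-fisher}, this is at most $2\delta^2\chi^2(W(\cdot\mid j)\|W(\cdot\mid 1))\le 2\delta^2\chi_\ast(W)$.
\item Hence $\TV\le\delta\sqrt{n\chi_\ast(W)}=\eta$.
\end{itemize}
The point needing care is the direction of the $\chi^2$: the denominator must be bounded below by a multiple of the reference row $W(\cdot\mid 1)$. This is exactly what the heavy first coordinate $\theta_1\ge 1/2$ guarantees, and it holds for both $\theta$ and $\theta'$, so either can serve as the reference.

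\emph{Corollary \eqref{eq:assouad-cor}.} Choose $\delta=1/(2\sqrt{n\chi_\ast(W)})$. The constraint $\delta\le 1/(4(d-1))$ is equivalent to $n\chi_\ast(W)\ge 4(d-1)^2$. With this choice $1-\delta\sqrt{n\chi_\ast}=1/2$, so \eqref{eq:assouad-risk} gives
\[
\frac{d-1}{8}\cdot\frac{1}{4n\chi_\ast(W)}\cdot\frac12=\frac{d-1}{64\,n\chi_\ast(W)}.
\]
Since $\widehat\theta$ was arbitrary, taking the infimum over estimators finishes the argument.
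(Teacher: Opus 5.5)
Your proposal is correct and follows the paper's proof step for step. It uses the same hypercube embedding with $\theta^v_1\ge 1/2$, the same threshold decoder at $3\delta/2$, and the same neighbour bound $\chi^2\le 2\delta^2\chi_\ast(W)$ via $q_v(y)\ge\tfrac12 W(y\mid 1)$, passed through $\KL\le\chi^2$, additivity of KL, Pinsker and data processing to the histogram; it then applies Lemma~\ref{lem:assouad-cube} with $\eta=\delta\sqrt{n\chi_\ast(W)}$ and chooses $\delta=(2\sqrt{n\chi_\ast(W)})^{-1}$, so your strict threshold $\widehat\theta_j>3\delta/2$ (versus the paper's $\ge$) is the only, purely cosmetic, difference.
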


\begin{proof}
For neighboring hypercube vertices \(v\) and \(v^{(j)}\) differing only in coordinate \(j\),
\[
\theta^{v^{(j)}}-\theta^v=\delta(e_j-e_1).
\]
Let \(q_v:=q_{\theta^v}\).
Because \(\delta\le 1/(4(d-1))\), we have
\[
\theta^v_1
=
1-\delta\sum_{j=2}^d (1+v_j)
\ge
1-2(d-1)\delta
\ge \frac12,
\]
hence \(q_v(y)\ge \frac12 W(y\mid 1)\) for all \(y\in\mathcal Y\).
Therefore
\[
\chi^2(q_{v^{(j)}}\|q_v)
=
\delta^2\sum_{y\in\mathcal Y}
\frac{\bigl(W(y\mid j)-W(y\mid 1)\bigr)^2}{q_v(y)}
\le
2\delta^2\chi_\ast(W). \tag{8.7}\label{eq:assouad-chi}
\]
Since \(\KL\le \chi^2\) and \(\TV^2\le \KL/2\),
\[
\TV\!\bigl(q_{v^{(j)}}^{\otimes n},q_v^{\otimes n}\bigr)
\le
\sqrt{\frac{n}{2}\,\KL(q_{v^{(j)}}\|q_v)}
\le
\delta\sqrt{n\chi_\ast(W)}. \tag{8.8}\label{eq:assouad-tv-product}
\]

Let \(H_v\) denote the released histogram law under \(\theta^v\).
The histogram is a deterministic function of \(Y_1,\dots,Y_n\), so by data processing
\[
\TV(H_{v^{(j)}},H_v)\le
\TV\!\bigl(q_{v^{(j)}}^{\otimes n},q_v^{\otimes n}\bigr)
\le
\delta\sqrt{n\chi_\ast(W)}. \tag{8.9}\label{eq:assouad-tv-hist}
\]

For an arbitrary estimator \(\widehat\theta\), define the coordinate decoder
\[
\widehat v_j:=\one\!\left\{\widehat\theta_j\ge \frac32\delta\right\},
\qquad j=2,\dots,d.
\]
If \(\widehat v_j\neq v_j\), then \(|\widehat\theta_j-\theta^v_j|\ge \delta/2\), hence
\[
\|\widehat\theta-\theta^v\|_2^2
\ge
\frac{\delta^2}{4}\sum_{j=2}^d \one\{\widehat v_j\neq v_j\}. \tag{8.10}\label{eq:threshold-loss}
\]
Applying Lemma~\ref{lem:assouad-cube} with
\(m=d-1\) and
\(\eta=\delta\sqrt{n\chi_\ast(W)}\)
gives
\[
\sup_v \E_v\sum_{j=2}^d \one\{\widehat v_j\neq v_j\}
\ge
\frac{d-1}{2}\Bigl(1-\delta\sqrt{n\chi_\ast(W)}\Bigr).
\]
Combining this with \eqref{eq:threshold-loss} proves \eqref{eq:assouad-risk}.
Finally, if \(n\chi_\ast(W)\ge 4(d-1)^2\), the choice
\(
\delta=(2\sqrt{n\chi_\ast(W)})^{-1}
\)
satisfies
\(
\delta\le 1/(4(d-1))
\),
and substitution yields \eqref{eq:assouad-cor}.
\end{proof}

\begin{corollary}[Two-regime universal lower bound]\label{cor:assouad-two-regime}
Assume \(0<\chi_\ast(W)<\infty\).  Then every estimator based on the shuffled
histogram satisfies
\[
\inf_{\widehat\theta}\sup_{\theta\in\Delta_d}
\E_\theta\|\widehat\theta-\theta\|_2^2
\ge
\min\!\left\{
\frac{1}{256(d-1)},\,
\frac{d-1}{64\,n\chi_\ast(W)}
\right\}.
\]
In particular, once \(n\chi_\ast(W)\ge 4(d-1)^2\), this recovers
\[
\inf_{\widehat\theta}\sup_{\theta\in\Delta_d}
\E_\theta\|\widehat\theta-\theta\|_2^2
\ge
\frac{d-1}{64\,n\chi_\ast(W)}.
\]
\end{corollary}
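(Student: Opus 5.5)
Our plan is to split on the size of $n\chi_\ast(W)$ relative to $4(d-1)^2$, and in both cases to read the bound off \eqref{eq:assouad-risk} of Theorem~\ref{thm:assouad-risk} with a suitable choice of $\delta$. Since that inequality holds for every estimator and every admissible $\delta\in(0,1/(4(d-1))]$, it is enough to exhibit, in each regime, one admissible $\delta$ for which the right-hand side is at least the minimum in the statement.

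\emph{Regime 1: $n\chi_\ast(W)\ge 4(d-1)^2$.} This is exactly the hypothesis of \eqref{eq:assouad-cor}, which gives the lower bound $(d-1)/(64\,n\chi_\ast(W))$. That quantity is at least the minimum of the two terms, so the claim holds here. The same observation gives the ``in particular'' clause directly.

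\emph{Regime 2: $n\chi_\ast(W)<4(d-1)^2$.} Here the unconstrained choice $\delta=(2\sqrt{n\chi_\ast})^{-1}$ would exceed $1/(4(d-1))$, so we instead take the cap $\delta=1/(4(d-1))$. Then
\[
\delta\sqrt{n\chi_\ast(W)}<\frac{2(d-1)}{4(d-1)}=\frac12,
\]
so $1-\delta\sqrt{n\chi_\ast}\ge 1/2$. Substituting into \eqref{eq:assouad-risk} gives
\[
\frac{d-1}{8}\cdot\frac{1}{16(d-1)^2}\cdot\frac12=\frac{1}{256(d-1)}.
\]
This again dominates the minimum.

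Taking the infimum over $\widehat\theta$ in both regimes completes the proof. The only step needing care is the arithmetic in Regime 2: the cap on $\delta$ must keep the factor $1-\delta\sqrt{n\chi_\ast}$ bounded below, and the resulting constant has to come out as exactly $1/256$. No other step is a real obstacle.
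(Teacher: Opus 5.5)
Your proof is correct and is essentially the paper's argument: the paper plugs $\delta=\min\{1/(4(d-1)),\,1/(2\sqrt{n\chi_\ast(W)})\}$ into \eqref{eq:assouad-risk}, which is exactly your two-case choice of $\delta$ written as a single minimum. In each case $\delta\sqrt{n\chi_\ast(W)}\le 1/2$, and the arithmetic yields $1/(256(d-1))$ and $(d-1)/(64\,n\chi_\ast(W))$ just as in your two regimes.
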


\begin{proof}
Apply Theorem~\ref{thm:assouad-risk} with
\(
\delta=\min\{1/(4(d-1)),\,1/(2\sqrt{n\chi_\ast(W)})\}
\).
For this choice the constraint \(\delta\le 1/(4(d-1))\) holds by construction, and
\(
\delta\sqrt{n\chi_\ast(W)}\le 1/2
\),
so \eqref{eq:assouad-risk} gives
\[
\sup_{\theta\in\Delta_d}\E_\theta\|\widehat\theta-\theta\|_2^2
\ge
\frac{d-1}{8}\,\delta^2\cdot\frac12
=
\frac{d-1}{16}\,\delta^2.
\]
Substituting the two cases of the minimum yields the stated two-regime bound.
The recovery of \eqref{eq:assouad-cor} in the regime \(n\chi_\ast(W)\ge 4(d-1)^2\)
is immediate.
\end{proof}

\section{Symmetrization and reduction to equivariant channels}\label{sec:symmetrization}

Let \(W:[d]\to\Delta(\mathcal Y)\) be an arbitrary channel with finite output alphabet.
Throughout this section inverses are taken on \(T_d\), under the positive-definiteness assumption
introduced in Section~\ref{sec:estimation-lower}.
For \(\pi\in S_d\), define the relabeled channel
\[
W^\pi(y\mid x):=W(y\mid \pi^{-1}x),
\]
and the randomized symmetrization
\[
\overline W(\pi,y\mid x):=\frac{1}{d!}\,W^\pi(y\mid x)
=\frac{1}{d!}\,W(y\mid \pi^{-1}x),
\qquad
(\pi,y)\in S_d\times \mathcal Y.
\]

\begin{theorem}[Symmetrization]\label{thm:symmetrization}
For every channel \(W\) with \(\chi_\ast(W)<\infty\) and every pair \(a\neq b\),
\[
\chi^2\!\bigl(\overline W(\cdot\mid b)\,\|\,\overline W(\cdot\mid a)\bigr)
=
\frac{1}{d(d-1)}\sum_{i\neq j}\chi^2\!\bigl(W(\cdot\mid j)\,\|\,W(\cdot\mid i)\bigr)
\le
\chi_\ast(W). \tag{9.1}\label{eq:sym-chi}
\]
If in addition \(I_{\theta_{\mathrm{unif}}}(W)\) is positive definite on \(T_d\),
then at the uniform point \(\theta_{\mathrm{unif}}=(1/d,\dots,1/d)\),
\[
I_{\theta_{\mathrm{unif}}}(\overline W)
=
\frac{\operatorname{tr}_{T_d} I_{\theta_{\mathrm{unif}}}(W)}{d-1}\,P_T, \tag{9.2}\label{eq:sym-fisher}
\]
and therefore
\[
\operatorname{tr}_{T_d} I_{\theta_{\mathrm{unif}}}(\overline W)^{-1}
\le
\operatorname{tr}_{T_d} I_{\theta_{\mathrm{unif}}}(W)^{-1}. \tag{9.3}\label{eq:sym-risk}
\]
Hence, for the uniform-point Fisher criterion under a budget upper bound
\(\chi_\ast(W)\le C\), one may restrict without loss of generality to
permutation-equivariant channels: the symmetrized channel \(\overline W\) is still feasible
(by \eqref{eq:sym-chi}) and has no larger risk (by \eqref{eq:sym-risk}).
\end{theorem}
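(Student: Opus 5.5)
The plan is to compute both invariants of $\overline W$ directly from the definition, treating $(\pi,y)$ as a single output symbol. No earlier result is needed beyond the Fisher formula \eqref{eq:fisher-general} and elementary representation theory of $S_d$ acting on $T_d$.

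\textbf{Step 1: the $\chi^2$ identity \eqref{eq:sym-chi}.} Fix $a\neq b$. Since $\overline W(\pi,y\mid x)=W(y\mid\pi^{-1}x)/d!$, we have
\[
\chi^2\bigl(\overline W(\cdot\mid b)\,\|\,\overline W(\cdot\mid a)\bigr)=\frac{1}{d!}\sum_{\pi\in S_d}\chi^2\bigl(W(\cdot\mid \pi^{-1}b)\,\|\,W(\cdot\mid \pi^{-1}a)\bigr).
\]
Here common-zero outputs are discarded, and finiteness follows from $\chi_\ast(W)<\infty$. As $\pi$ ranges over $S_d$, the ordered pair $(\pi^{-1}a,\pi^{-1}b)$ is uniformly distributed over the $d(d-1)$ ordered pairs $(i,j)$ with $i\neq j$; each pair is hit by exactly $(d-2)!$ permutations. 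This gives the displayed average, and an average is at most the maximum $\chi_\ast(W)$.

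\textbf{Step 2: equivariance of the Fisher matrix at the uniform point.} At $\theta_{\mathrm{unif}}$ the mixture is $q(\pi,y)=\mu(y)/d!$, where $\mu(y)=d^{-1}\sum_x W(y\mid x)$ is independent of $\pi$. By \eqref{eq:fisher-general}, for $u\in T_d$,
\[
u^\top I_{\theta_{\mathrm{unif}}}(\overline W)u=\frac{1}{d!}\sum_\pi n\sum_y\frac{\bigl(\sum_x u_x W(y\mid \pi^{-1}x)\bigr)^2}{\mu(y)}=\frac1{d!}\sum_\pi (\Pi_\pi^\top u)^\top I_{\theta_{\mathrm{unif}}}(W)(\Pi_\pi^\top u),
\]
where $\Pi_\pi$ is the permutation matrix. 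Writing $J:=I_{\theta_{\mathrm{unif}}}(W)$, this says $I_{\theta_{\mathrm{unif}}}(\overline W)=\frac1{d!}\sum_\pi \Pi_\pi J\Pi_\pi^\top$. The resulting matrix commutes with every permutation matrix and is supported on $T_d$. Since $S_d$ acts irreducibly on $T_d$ (the standard representation), Schur's lemma forces it to equal $cP_T$. One can also see this elementarily: a matrix commuting with all permutations has the form $\alpha I+\beta\one\one^\top$, and killing $\one$ gives a multiple of $P_T$. Taking traces, $c(d-1)=\operatorname{tr}_{T_d}J$, since conjugation preserves trace. This proves \eqref{eq:sym-fisher}.

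\textbf{Step 3: risk comparison \eqref{eq:sym-risk}.} From Step 2, $\operatorname{tr}_{T_d}I_{\theta_{\mathrm{unif}}}(\overline W)^{-1}=(d-1)^2/\operatorname{tr}_{T_d}J$. The arithmetic--harmonic mean inequality on the $d-1$ positive eigenvalues of $J$ on $T_d$ gives $\operatorname{tr}_{T_d}J^{-1}\ge (d-1)^2/\operatorname{tr}_{T_d}J$. This is the same step as in Theorem~\ref{thm:near-vertex-cr}, and positive definiteness ensures the eigenvalues are positive. The closing feasibility statement then follows from Steps 1 and 3. One should also note that $\overline W$ is permutation-equivariant: relabeling $x$ by $\sigma$ corresponds to the output relabeling $(\pi,y)\mapsto(\sigma\pi,y)$.

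\textbf{Main obstacle.} The only delicate point is Step 2's identification of the averaged quadratic form. It relies on the denominator $q(\pi,y)$ being independent of $\pi$, which is exactly why the statement is restricted to the uniform point. The Schur/commutant argument is then routine.
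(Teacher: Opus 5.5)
Your proposal is correct and follows essentially the same route as the paper. The $\chi^2$ identity comes from the uniformity of the ordered pair $(\pi^{-1}a,\pi^{-1}b)$, the Fisher identity from averaging the conjugates $P_\pi J P_\pi^\top$ and the commutant form $\alpha I+\beta\one\one^\top$, and the risk comparison from the arithmetic--harmonic mean inequality. Two points you make explicit that the paper leaves implicit are useful: $q(\pi,y)=\mu(y)/d!$ at the uniform point, which is why every block is a conjugate of the same $J$, and the check that $\overline W$ is itself permutation-equivariant.
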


\begin{proof}
For the privacy statement,
\[
1+\chi^2\!\bigl(\overline W(\cdot\mid b)\,\|\,\overline W(\cdot\mid a)\bigr)
=
\frac{1}{d!}\sum_{\pi\in S_d}
\Bigl(1+\chi^2\!\bigl(W(\cdot\mid \pi^{-1}b)\,\|\,W(\cdot\mid \pi^{-1}a)\bigr)\Bigr).
\]
For fixed \(a\neq b\), the ordered pair \((\pi^{-1}a,\pi^{-1}b)\) is uniformly distributed over all
ordered pairs \((i,j)\) with \(i\neq j\), which gives \eqref{eq:sym-chi}.

For the Fisher statement, let \(P_\pi\) denote the permutation matrix of \(\pi\) on \(\R^d\).
At the uniform point,
\[
I_{\theta_{\mathrm{unif}}}(W^\pi)=P_\pi I_{\theta_{\mathrm{unif}}}(W)P_\pi^\top,
\]
hence
\[
I_{\theta_{\mathrm{unif}}}(\overline W)
=
\frac{1}{d!}\sum_{\pi\in S_d} P_\pi I_{\theta_{\mathrm{unif}}}(W)P_\pi^\top. \tag{9.4}\label{eq:fisher-average}
\]
The operator on the right commutes with every permutation, so on \(\R^d\) it has the form \(aI+bJ\).
Restricted to \(T_d\), it is therefore \(aI\).  Taking traces on \(T_d\) in \eqref{eq:fisher-average}
shows
\[
a=\frac{\operatorname{tr}_{T_d} I_{\theta_{\mathrm{unif}}}(W)}{d-1},
\]
which is exactly \eqref{eq:sym-fisher}.  Finally,
\[
\operatorname{tr}_{T_d} I_{\theta_{\mathrm{unif}}}(\overline W)^{-1}
=
\frac{(d-1)^2}{\operatorname{tr}_{T_d} I_{\theta_{\mathrm{unif}}}(W)}
\le
\operatorname{tr}_{T_d} I_{\theta_{\mathrm{unif}}}(W)^{-1}
\]
by the arithmetic--harmonic mean inequality on the eigenvalues of
\(I_{\theta_{\mathrm{unif}}}(W)\big|_{T_d}\).
\end{proof}

\section{GRR is not universally optimal. The thinning principle}\label{sec:thinning}

We now return to the fixed-composition model.
Thus \(\theta\in\Delta_d\cap n^{-1}\mathbb Z^d\) is deterministic, exactly \(n\theta_x\) users hold input \(x\),
and the released statistic is the shuffled histogram.
Throughout this section we assume \(d\ge 3\).
The case \(d=2\) reduces to binary randomized response, and the low-budget branch below becomes vacuous because
\(C_\ast(2)=0\).
The trivial case \(C=0\) corresponds to the all-null channel and carries no estimable signal.
If one additionally fixes a local-privacy cap \(\varepsilon_0\) with \(\lambda_0=e^{\varepsilon_0}\), then
Theorem~\ref{thm:chi2-bound} implies the ambient bound
\[
0\le C\le C_{\max}(\lambda_0):=\frac{(\lambda_0-1)^2}{\lambda_0},
\]
but the analysis below itself is purely budget-parametric.
For the symmetric families below we analyze the exact affine-projected unbiased inverse estimators.

\begin{definition}[Augmented GRR]\label{def:aug-grr}
Fix \(0\le p\le 1\) and \(\lambda>1\).
The augmented GRR channel is the kernel
\[
W^{\mathrm{aug}}_{d,p,\lambda}:[d]\to\Delta([d]\cup\{z\})
\]
given by
\[
\begin{aligned}
W^{\mathrm{aug}}_{d,p,\lambda}(y\mid x)
&=
p\left(
\frac{\lambda}{\lambda+d-1}\,\one\{y=x\}
+
\frac{1}{\lambda+d-1}\,\one\{y\neq x,\ y\in[d]\}
\right)\\
&\qquad +(1-p)\,\one\{y=z\}.
\end{aligned} \tag{10.1}\label{eq:aug-grr}
\]
Write
\[
\beta_\lambda:=\frac{1}{\lambda+d-1},
\qquad
\eta_\lambda:=\frac{\lambda-1}{\lambda+d-1},
\qquad
C_\lambda:=\frac{(\lambda-1)^2(\lambda+1)}{\lambda(\lambda+d-1)}.
\]
Thus \(C_\lambda\) is the canonical pairwise \(\chi^2\) divergence of ordinary GRR from
Proposition~\ref{prop:grr}.
\end{definition}

\begin{theorem}[GRR mixtures and null refinements]\label{thm:grr-mixtures}
Fix \(m\ge 1\).  Let
\[
\mathcal Y=\bigsqcup_{i=1}^m [d]_i \sqcup \mathcal Z,
\]
where \(\mathcal Z\) is an arbitrary finite set of null symbols.
For \(i=1,\dots,m\), choose \(p_i\ge 0\), \(\lambda_i>1\), and define
\[
W((i,y)\mid x)
=
p_i\left(
\beta_{\lambda_i}+\eta_{\lambda_i}\,\one\{y=x\}
\right),
\qquad
y\in[d]_i,
\]
while \(W(z\mid x)=r_z\) for \(z\in\mathcal Z\), with \(\sum_i p_i+\sum_{z\in\mathcal Z}r_z=1\).

Then the following hold.

\begin{enumerate}[label=\textup{(\roman*)}]
\item For every pair \(a\neq b\),
\[
\chi^2\!\bigl(W(\cdot\mid b)\,\|\,W(\cdot\mid a)\bigr)
=
\sum_{i=1}^m p_i C_{\lambda_i}. \tag{10.2}\label{eq:mix-chi}
\]

\item Assume \(S>0\) (equivalently, at least one block has \(p_i>0\)).
Let \(N^{(i)}\in\N^d\) be the block histogram on \([d]_i\), let
\[
M_i:=\sum_{y=1}^d N_y^{(i)},
\qquad
S:=\sum_{i=1}^m p_i\eta_{\lambda_i}^2,
\]
and define the affine-projected inverse estimator
\[
\widetilde\theta
:=
\frac{\one}{d}
+
\frac{1}{S}\sum_{i=1}^m \eta_{\lambda_i}
\left(
\frac{N^{(i)}}{n}-\frac{M_i}{nd}\,\one
\right). \tag{10.3}\label{eq:mix-estimator}
\]
Then \(\widetilde\theta\) is unbiased, \(\widetilde\theta-\theta\in T_d\), and its exact
fixed-composition risk is constant in \(\theta\):
\[
\E_\theta\|\widetilde\theta-\theta\|_2^2
=
\frac{d-1}{nd}\left(\frac{1}{S}-1\right). \tag{10.4}\label{eq:mix-risk}
\]

\item Among all channels of this form with prescribed canonical budget
\[
\sum_{i=1}^m p_i C_{\lambda_i}=C,
\]
the maximal signal coefficient \(S\) is
\[
S_{\mathrm{opt}}(C)
=
\begin{cases}
\displaystyle \frac{C}{d+2\sqrt{d-1}}, & 0\le C\le C_\ast(d),\\[1.2ex]
\displaystyle \eta_{\lambda(C)}^2, & C\ge C_\ast(d),
\end{cases} \tag{10.5}\label{eq:S-frontier}
\]
where
\[
\lambda(C)\ \text{ is the unique solution of }\ C_{\lambda(C)}=C,
\qquad
C_\ast(d):=C_{\sqrt{d-1}}.
\]
Consequently the optimal fixed-composition risk within this class is
\[
R^{\mathrm{fc}}_{\mathrm{opt}}(C)
=
\frac{d-1}{nd}\left(\frac{1}{S_{\mathrm{opt}}(C)}-1\right). \tag{10.6}\label{eq:mix-opt-risk}
\]
For \(0\le C\le C_\ast(d)\) it is attained by one-level augmented GRR with
\[
\lambda_\ast=\sqrt{d-1},
\qquad
p=\frac{C}{C_\ast(d)}. \tag{10.7}\label{eq:opt-low-budget}
\]
For \(C\ge C_\ast(d)\) it is attained by the calibrated ordinary GRR channel with \(p=1\) and
\(\lambda=\lambda(C)\).
\end{enumerate}
\end{theorem}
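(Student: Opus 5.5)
For (i) we compute the pairwise likelihood-ratio law, as in Proposition~\ref{prop:grr}. Fix $a\neq b$. In block $i$, the symbol $(i,a)$ has null mass $p_i(\beta_{\lambda_i}+\eta_{\lambda_i})$ and ratio $\lambda_i^{-1}$. The symbol $(i,b)$ has null mass $p_i\beta_{\lambda_i}$ and ratio $\lambda_i$. Every other symbol in $[d]_i$, and every null symbol $z\in\mathcal Z$, has ratio $1$. Hence $\chi^2=\sum_i p_i\bigl[(\beta_{\lambda_i}+\eta_{\lambda_i})(\lambda_i^{-1}-1)^2+\beta_{\lambda_i}(\lambda_i-1)^2\bigr]$. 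The bracket is exactly the GRR value $C_{\lambda_i}$ computed in \eqref{eq:grr-chi}, which gives \eqref{eq:mix-chi}.

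For (ii) we work user by user in the fixed-composition model. A user with input $x$ produces the vector $\xi:=\eta_{\lambda_i}P_T e_y$ if its output is $(i,y)$, and $\xi:=0$ if its output is a null symbol. Then $\widetilde\theta=\one/d+(nS)^{-1}\sum_{\text{users}}\xi$, because $N^{(i)}/n-M_i\one/(nd)$ is the sum of these projected block indicators.

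\emph{Mean.} $\E\xi=\sum_i p_i\eta_{\lambda_i}P_T(\beta_{\lambda_i}\one+\eta_{\lambda_i}e_x)=S\,P_Te_x$. Averaging over the fixed composition gives $\E\widetilde\theta=\one/d+P_T\theta=\theta$. The difference $\widetilde\theta-\theta$ lies in $T_d$ by construction.

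\emph{Second moment.} $\E\|\xi\|^2=\sum_ip_i\eta_{\lambda_i}^2\|P_Te_y\|^2=S(d-1)/d$, since $\|P_Te_y\|^2=(d-1)/d$ for every $y$. Also $\|\E\xi\|^2=S^2(d-1)/d$.

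\emph{Risk.} Users are independent, so the risk is $(nS)^{-2}\cdot n\,\frac{d-1}{d}(S-S^2)=\frac{d-1}{nd}(S^{-1}-1)$. This proves \eqref{eq:mix-risk}; it does not depend on $\theta$.

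For (iii) we treat the optimization as a linear program over the weights $(p_i)$. We maximize $\sum_ip_i\eta_{\lambda_i}^2$ subject to $\sum_ip_iC_{\lambda_i}=C$ and $\sum_ip_i\le 1$ (the remaining mass goes to the null symbols).

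\emph{Low budget.} The key ratio is $g(\lambda):=\eta_\lambda^2/C_\lambda=\lambda/\bigl((\lambda+d-1)(\lambda+1)\bigr)$. Its derivative has the sign of $d-1-\lambda^2$, so $g$ is maximized at $\lambda_\ast=\sqrt{d-1}$ with value $(1+\sqrt{d-1})^{-2}=1/(d+2\sqrt{d-1})$. Hence $S\le C/(d+2\sqrt{d-1})$ for every channel in the class. Equality is attained by the single block $\lambda_\ast$ with $p=C/C_\ast(d)$, which is admissible exactly when $C\le C_\ast(d)$. Uniqueness follows because $g$ has a strict maximum, so any mass placed at $\lambda\neq\lambda_\ast$ strictly lowers $S$.

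\emph{High budget.} For $C\ge C_\ast(d)$, $S_{\mathrm{opt}}(C)$ is the upper concave envelope, evaluated at $C$, of the set $\{(0,0)\}\cup\{(C_\lambda,\eta_\lambda^2):\lambda>1\}$. The mixture constraint has total weight at most $1$, and the null symbol supplies the origin. Write $\varphi(c):=\eta^2_{\lambda(c)}$ for the curve as a function of the budget. We first show $\lambda\mapsto C_\lambda$ is strictly increasing, so $\lambda(C)$ is well defined. We then show $\varphi$ is concave on $[C_\ast(d),\infty)$. The chord from the origin meets the curve tangentially at $\lambda_\ast$, because $g=\varphi(c)/c$ is maximized there. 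It follows that the envelope is the chord on $[0,C_\ast]$ and the curve itself beyond $C_\ast$, which gives \eqref{eq:S-frontier}, and \eqref{eq:mix-opt-risk} then follows from (ii).

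\emph{Main obstacle.} The hard step is the concavity of $\varphi$ on $[C_\ast(d),\infty)$. I would attempt it through the parametric criterion $\frac{d}{d\lambda}\bigl[(\eta^2)'/C'\bigr]\le0$ for $\lambda\ge\sqrt{d-1}$. If the explicit rational expression turns out messy, the fallback is to prove only what is needed: for $\lambda\ge\lambda_\ast$, the curve lies above every chord between two of its points and above the chord from the origin. This is enough to rule out any mixture of two or more levels, or any mixture with the null symbol, beating a single calibrated GRR level.
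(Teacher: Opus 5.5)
Your treatment of (i), (ii) and the low-budget half of (iii) is correct and follows the paper. In (ii) you compute the trace directly through the vector $\xi=\eta_{\lambda_i}P_Te_y$. This uses two facts: $\|P_Te_y\|^2=(d-1)/d$ is constant, and each block carries total mass $p_i(d\beta_{\lambda_i}+\eta_{\lambda_i})=p_i$ under every input. You use the second fact silently and should state it. The paper instead computes the full quadratic form $u^\top\Gamma_\theta u$, which does depend on $\theta$, and then takes its trace on $T_d$. Your version is shorter and makes the $\theta$-independence of the risk immediate.

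The gap is the high-budget half of (iii). Everything there rests on concavity of $\varphi(c)=\eta^2_{\lambda(c)}$ on $[C_\ast(d),\infty)$, and you only announce a method for proving it. Your fallback does not help. ``The curve lies above every chord between two of its points'' for $\lambda\ge\lambda_\ast$ is just the definition of concavity there, so it does not avoid the computation.

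The criterion you propose does work, and it is what the paper carries out. Write $s=\eta_\lambda^2$ and $c=C_\lambda$. Then $c'(\lambda)=(\lambda-1)Q(\lambda)/\bigl(\lambda^2(\lambda+d-1)^2\bigr)$, where $Q(\lambda)=\lambda^3+(2d-1)\lambda^2+(d+1)\lambda+d-1>0$. Hence $ds/dc=2d\lambda^2/\bigl((\lambda+d-1)Q(\lambda)\bigr)$. The paper shows $d^2s/dc^2=2d\lambda^3P_d(\lambda)/\bigl((\lambda-1)Q(\lambda)^3\bigr)$ for an explicit quartic $P_d$. It then checks $P_d(\sqrt{d-1})=2(d-1)^{3/2}(2-d)\le0$ and $P_d'<0$ on $[\sqrt{d-1},\infty)$.

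A slightly quicker route to the same sign is to write $ds/dc=2d/F(\lambda)$ with $F(\lambda)=\lambda^2+(3d-2)\lambda+2d^2-2d+2+(d^2+d-2)\lambda^{-1}+(d-1)^2\lambda^{-2}$. This $F$ is convex on $(0,\infty)$ and satisfies $F'(\sqrt{d-1})=2(d-2)\ge0$. So $ds/dc$ is nonincreasing for $\lambda\ge\sqrt{d-1}$, which is the required concavity.

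Once concavity is in place, finish as you indicate. Let $\psi$ equal $c/(d+2\sqrt{d-1})$ on $[0,C_\ast]$ and $\varphi$ beyond. Tangency gives $\varphi'(C_\ast)=1/(d+2\sqrt{d-1})$, so the slopes match at the junction and $\psi$ is concave. Moreover $\psi$ dominates the origin and every generator $(C_\lambda,\eta_\lambda^2)$, including those with $\lambda<\lambda_\ast$. Jensen then bounds every mixture by $\psi(C)$.
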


\begin{proof}
For \eqref{eq:mix-chi}, fix \(a\neq b\).
In block \(i\), the only nontrivial likelihood-ratio values are
\[
\lambda_i^{-1}\ \text{ on } (i,a),\qquad
\lambda_i\ \text{ on } (i,b),\qquad
1\ \text{ elsewhere},
\]
while every null symbol has ratio \(1\).  Summing the blockwise GRR contributions gives
\eqref{eq:mix-chi}.

For \eqref{eq:mix-risk}, fix \(u\in T_d\), and write
\[
\varphi_u((i,y)):=\frac{\eta_{\lambda_i}}{S}\,u_y,
\qquad
\varphi_u(z):=0,\quad z\in\mathcal Z.
\]
Because \(\sum_{y=1}^d u_y=0\),
\[
\langle u,\widetilde\theta\rangle
=
\frac{1}{n}\sum_{t=1}^n \varphi_u(Y_t).
\]
For an input \(x\),
\[
\E_x[\varphi_u(Y)]
=
\frac{1}{S}\sum_{i=1}^m p_i\eta_{\lambda_i}
\Bigl(\beta_{\lambda_i}\sum_{y=1}^d u_y+\eta_{\lambda_i}u_x\Bigr)
=
u_x
\]
because \(S=\sum_i p_i\eta_{\lambda_i}^2\).
Similarly,
\[
\E_x[\varphi_u(Y)^2]
=
\frac{1}{S^2}\sum_{i=1}^m p_i\eta_{\lambda_i}^2
\Bigl(\beta_{\lambda_i}\|u\|_2^2+\eta_{\lambda_i}u_x^2\Bigr).
\]
Hence
\[
\Var_x(\varphi_u(Y))
=
\frac{1}{S^2}\sum_{i=1}^m p_i\eta_{\lambda_i}^2
\Bigl(\beta_{\lambda_i}\|u\|_2^2+\eta_{\lambda_i}u_x^2\Bigr)-u_x^2.
\]
Averaging over the fixed composition \(\theta\) and writing
\(\Gamma_\theta:=\Cov_\theta(\widetilde\theta)\),
we obtain
\[
u^\top\Gamma_\theta u
=
\frac{1}{n}
\left[
\frac{\sum_i p_i\eta_{\lambda_i}^2\beta_{\lambda_i}}{S^2}\|u\|_2^2
+
\left(
\frac{\sum_i p_i\eta_{\lambda_i}^3}{S^2}-1
\right)
\sum_{x=1}^d \theta_x u_x^2
\right].
\]
Taking traces on \(T_d\) and using
\[
\operatorname{tr}_{T_d}(I)=d-1,
\qquad
\operatorname{tr}_{T_d}\!\bigl(P_T\diag(\theta)P_T\bigr)=\frac{d-1}{d},
\qquad
\beta_{\lambda_i}+\frac{\eta_{\lambda_i}}{d}=\frac1d,
\]
gives
\[
\operatorname{tr}_{T_d}\Gamma_\theta
=
\frac{d-1}{n}
\left[
\frac{1}{S^2}\sum_{i=1}^m p_i\eta_{\lambda_i}^2
\left(\beta_{\lambda_i}+\frac{\eta_{\lambda_i}}{d}\right)
-\frac1d
\right]
=
\frac{d-1}{nd}\left(\frac{1}{S}-1\right).
\]
Since \(\widetilde\theta-\theta\in T_d\), the total squared error equals its \(T_d\)-trace:
\[
\E_\theta\|\widetilde\theta-\theta\|_2^2
=
\operatorname{tr}_{T_d}\Gamma_\theta,
\]
which is exactly \eqref{eq:mix-risk}.

It remains to optimize \(S\) at fixed budget \(C\).
Define
\[
s(\lambda):=\eta_\lambda^2=\frac{(\lambda-1)^2}{(\lambda+d-1)^2},
\qquad
c(\lambda):=C_\lambda=\frac{(\lambda-1)^2(\lambda+1)}{\lambda(\lambda+d-1)}.
\]
Since
\[
c'(\lambda)
=
\frac{(\lambda-1)\bigl(2d\lambda^2+d\lambda+d+\lambda^3-\lambda^2+\lambda-1\bigr)}
{\lambda^2(\lambda+d-1)^2}
>0,
\]
the equation \(c(\lambda)=C\) has a unique solution \(\lambda(C)\) for every \(C>0\).

Every channel in the theorem yields a convex combination
\[
(C,S)=\sum_{i=1}^m p_i\bigl(c(\lambda_i),s(\lambda_i)\bigr)+(1-\sum_i p_i)(0,0).
\]
Hence the optimal frontier is the upper concave envelope of the planar curve
\(\lambda\mapsto (c(\lambda),s(\lambda))\) together with the origin.

First,
\[
\frac{s(\lambda)}{c(\lambda)}
=
\frac{\lambda}{(\lambda+1)(\lambda+d-1)}. \tag{10.8}\label{eq:ratio-sc}
\]
Its derivative is
\[
\frac{d}{d\lambda}\frac{s(\lambda)}{c(\lambda)}
=
\frac{d-1-\lambda^2}{(\lambda+1)^2(\lambda+d-1)^2}, \tag{10.9}\label{eq:ratio-derivative}
\]
so the slope from the origin is maximized uniquely at
\(
\lambda_\ast=\sqrt{d-1}
\).
Thus the line from the origin tangent to the curve meets it at
\[
\bigl(c(\lambda_\ast),s(\lambda_\ast)\bigr)
=
\left(
C_\ast(d),\frac{C_\ast(d)}{d+2\sqrt{d-1}}
\right).
\]

Second,
\[
\frac{ds}{dc}
=
\frac{2d\lambda^2}{(\lambda+d-1)\bigl(2d\lambda^2+d\lambda+d+\lambda^3-\lambda^2+\lambda-1\bigr)}, \tag{10.10}\label{eq:dsdc}
\]
and therefore
\[
\frac{d^2s}{dc^2}
=
\frac{2d\lambda^3 P_d(\lambda)}
{(\lambda-1)\bigl(2d\lambda^2+d\lambda+d+\lambda^3-\lambda^2+\lambda-1\bigr)^3}, \tag{10.11}\label{eq:d2sdc2}
\]
where
\[
P_d(\lambda)
=
d^2\lambda+2d^2-3d\lambda^3+d\lambda-4d-2\lambda^4+2\lambda^3-2\lambda+2.
\]
For \(\lambda\ge \sqrt{d-1}\),
\[
P_d'(\lambda)
=
d^2+d-2-(9d-6)\lambda^2-8\lambda^3
\le
-8(d-1)^2-8(d-1)^{3/2}<0, \tag{10.12}\label{eq:P-decreasing}
\]
and
\[
P_d(\sqrt{d-1})=2(d-1)^{3/2}(2-d)\le 0. \tag{10.13}\label{eq:P-at-star}
\]
Hence \(P_d(\lambda)\le 0\) for every \(\lambda\ge \sqrt{d-1}\), so
\[
\frac{d^2s}{dc^2}\le 0
\qquad\text{for } \lambda\ge \sqrt{d-1}. \tag{10.14}\label{eq:curve-concave}
\]
Thus the upper concave envelope consists exactly of the tangent line from the origin up to
\(C_\ast(d)\), followed by the original curve for \(C\ge C_\ast(d)\).
This proves \eqref{eq:S-frontier}, and \eqref{eq:mix-opt-risk} follows from \eqref{eq:mix-risk}.
The descriptions of the extremizers are immediate from the two pieces of the concave envelope.
\end{proof}

\begin{corollary}[Augmented GRR beats calibrated GRR at low budget]\label{cor:aug-better}
Let \(0<C<C_\ast(d)\), and let \(\lambda(C)\) be the unique solution of \(C_{\lambda(C)}=C\).
Then the optimal projected inverse-estimator risk is
\[
R^{\mathrm{fc}}_{\mathrm{opt}}(C)
=
\frac{d-1}{nd}\left(\frac{d+2\sqrt{d-1}}{C}-1\right), \tag{10.15}\label{eq:aug-risk-low}
\]
whereas calibrated GRR has projected risk
\[
R^{\mathrm{fc}}_{\mathrm{GRR}}(C)
=
\frac{d-1}{nd}\left(\frac{d+\lambda(C)+(d-1)/\lambda(C)}{C}-1\right). \tag{10.16}\label{eq:grr-risk-cal}
\]
Hence
\[
\frac{R^{\mathrm{fc}}_{\mathrm{opt}}(C)}{R^{\mathrm{fc}}_{\mathrm{GRR}}(C)}
=
\frac{d+2\sqrt{d-1}-C}{d+\lambda(C)+(d-1)/\lambda(C)-C}
<1. \tag{10.17}\label{eq:improvement-factor}
\]
\end{corollary}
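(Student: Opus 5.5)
The plan is to read both risks off the constant-risk formula \eqref{eq:mix-risk} of Theorem~\ref{thm:grr-mixtures}(ii), so that everything reduces to comparing the signal coefficient $S$ for the two channels at the common budget $C$.

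\textbf{Optimal risk.} For $0<C<C_\ast(d)$, Theorem~\ref{thm:grr-mixtures}(iii) gives $S_{\mathrm{opt}}(C)=C/(d+2\sqrt{d-1})$. Substituting into \eqref{eq:mix-opt-risk} yields \eqref{eq:aug-risk-low} directly.

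\textbf{Calibrated GRR risk.} Calibrated GRR is the one-block member with $p=1$ and $\lambda=\lambda(C)$, so its coefficient is $S=\eta_\lambda^2=s(\lambda)$. By the ratio identity \eqref{eq:ratio-sc}, at $c(\lambda)=C$,
\[
\frac{1}{S}=\frac{(\lambda+1)(\lambda+d-1)}{\lambda\,C}=\frac{\lambda^2+d\lambda+d-1}{\lambda C}=\frac{d+\lambda+(d-1)/\lambda}{C}.
\]
Inserting this into \eqref{eq:mix-risk} gives \eqref{eq:grr-risk-cal}.

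\textbf{Ratio.} Multiply numerator and denominator of the ratio of the two risks by $nd\,C/(d-1)$; this produces the displayed fraction in \eqref{eq:improvement-factor}. For the strict inequality, AM--GM gives $\lambda+(d-1)/\lambda\ge 2\sqrt{d-1}$, with equality only at $\lambda=\sqrt{d-1}$. Since $c$ is strictly increasing and $C<C_\ast(d)=c(\sqrt{d-1})$, we have $\lambda(C)<\sqrt{d-1}$, so the inequality is strict.

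It remains to check that both numerator and denominator are positive. We have $C<C_\ast(d)$, and
\[
C_\ast(d)=\frac{d+2\sqrt{d-1}}{(\sqrt{d-1}+1)^2}\,\eta_{\sqrt{d-1}}^{2}\cdot(\ldots),
\]
so it is simplest to use $S\le 1$. Indeed $S_{\mathrm{opt}}=C/(d+2\sqrt{d-1})\le \eta^2<1$, because $\eta_\lambda<1$. This gives $C<d+2\sqrt{d-1}$, and the denominator is then larger still. Both are therefore positive, and the ratio is strictly below $1$.

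\textbf{Main obstacle.} The only delicate point is the positivity bookkeeping, which ensures that the strict AM--GM inequality transfers to the ratio. It is handled by $S<1$ as above.
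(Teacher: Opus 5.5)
Your proposal is correct and is essentially how the paper gets this corollary (it gives no separate proof): both risks are read off the constant-risk formula \eqref{eq:mix-risk}, using the low-budget branch of \eqref{eq:S-frontier} for the optimum and the slope identity \eqref{eq:ratio-sc} for calibrated GRR, and strictness follows from AM--GM together with $\lambda(C)<\sqrt{d-1}$. The one blemish is the garbled display for $C_\ast(d)$ with the ellipsis, which you should delete or replace by the exact identity $C_\ast(d)=(d+2\sqrt{d-1})\,\eta_{\sqrt{d-1}}^2=(1-1/\sqrt{d-1})^2<1$; this gives positivity of the numerator at once, though your $S_{\mathrm{opt}}<1$ argument already does the job.
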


\begin{example}[Two numerical instances]\label{ex:aug-grr-numbers}
For \(d=3\) and \(C=0.05\),
\[
C_\ast(3)=\tfrac{3-2\sqrt2}{2}\approx 0.085786,
\qquad
p=\frac{0.05}{C_\ast(3)}\approx 0.582843,
\]
and
\[
R^{\mathrm{fc}}_{\mathrm{opt}}(0.05)\approx \frac{77.0457}{n},
\qquad
R^{\mathrm{fc}}_{\mathrm{GRR}}(0.05)\approx \frac{77.1653}{n}.
\]
For \(d=10\) and \(C=0.1\),
\[
C_\ast(10)=\frac49,
\qquad
p=\frac{0.1}{4/9}=0.225,
\]
and
\[
R^{\mathrm{fc}}_{\mathrm{opt}}(0.1)=\frac{143.1}{n},
\qquad
R^{\mathrm{fc}}_{\mathrm{GRR}}(0.1)\approx \frac{149.7150}{n}.
\]
\end{example}

\begin{remark}[Null refinements]\label{rem:null-refinement}
The theorem depends on the null part only through its total mass.
In particular, replacing the single null symbol \(z\) by any finite family
\(z_1,\dots,z_k\) with probabilities \((1-p)/k\) leaves both the canonical budget and the
projected fixed-composition risk unchanged.
\end{remark}

\begin{remark}[Scope of Section~\ref{sec:thinning}]\label{rem:not-all-equivariant}
The class of Theorem~\ref{thm:grr-mixtures} consists of disjoint unions of copies of \([d]\)
together with null orbits.
Section~\ref{sec:equivariant-opt} shows that in the low-budget regime \(0<C\le C_\ast(d)\)
these already attain the full permutation-equivariant optimum.
What remains open is the global symmetric frontier for \(C>C_\ast(d)\), where ordered-pair
and higher orbits may in principle matter.
\end{remark}

\section{Low-budget optimality among all permutation-equivariant channels}\label{sec:equivariant-opt}

We now allow an arbitrary finite output alphabet \(\mathcal Y\) equipped with an action of \(S_d\),
and we assume throughout this section that
\[
W(\pi y\mid \pi x)=W(y\mid x)
\qquad
\text{for every } \pi\in S_d,\ x\in[d],\ y\in\mathcal Y.
\]
Write the orbit decomposition as
\(
\mathcal Y=\bigsqcup_{\mathcal O}\mathcal O
\),
and let
\[
\mu(y):=\frac1d\sum_{x=1}^d W(y\mid x)
\]
be the row-average output law.
By equivariance, \(\mu\) is \(S_d\)-invariant, hence constant on each orbit.

For every orbit \(\mathcal O\), choose a representative \(y_{\mathcal O}\in\mathcal O\),
let \(p_{\mathcal O}:=\mu(\mathcal O)\), and define the orbit template
\[
a^{(\mathcal O)}_x
:=
\frac{W(y_{\mathcal O}\mid x)}{\mu(y_{\mathcal O})},
\qquad x=1,\dots,d.
\]
If the canonical pairwise \(\chi^2\)-budget is finite, then every entry of \(a^{(\mathcal O)}\) is strictly
positive, and
\[
\sum_{x=1}^d a^{(\mathcal O)}_x=d. \tag{11.1}\label{eq:orbit-template-sum}
\]
Write
\[
t^{(\mathcal O)}:=a^{(\mathcal O)}-\one\in T_d,
\qquad
A_{\mathcal O}:=\sum_{x=1}^d \frac{1}{a_x^{(\mathcal O)}},
\qquad
B_{\mathcal O}:=\sum_{x=1}^d \bigl(a_x^{(\mathcal O)}\bigr)^2.
\]
Every symbol in \(\mathcal O\) carries a permutation of this template.

\begin{proposition}[Orbit templates, projected estimator, and additive invariants]\label{prop:orbit-template}
For a permutation-equivariant channel \(W\) with finite output alphabet, define
\[
S(W):=\sum_{\mathcal O} p_{\mathcal O}\,\frac{B_{\mathcal O}-d}{d(d-1)}. \tag{11.2}\label{eq:orbit-signal}
\]
Then:

\begin{enumerate}[label=\textup{(\roman*)}]
\item For every pair \(a\neq b\),
\[
\chi^2\!\bigl(W(\cdot\mid b)\,\|\,W(\cdot\mid a)\bigr)
=
\sum_{\mathcal O}
p_{\mathcal O}\,
\frac{A_{\mathcal O}B_{\mathcal O}-d^2}{d(d-1)}. \tag{11.3}\label{eq:orbit-budget}
\]

\item Assume \(S(W)>0\) (equivalently, at least one orbit is informative).
Let \(t(y)\in T_d\) denote the score vector attached to \(y\), obtained by permuting the relevant
template \(t^{(\mathcal O)}\) on the orbit containing \(y\).
Then the estimator
\[
\widetilde\theta
=
\frac{\one}{d}
+
\frac{1}{nd\,S(W)}
\sum_{y\in\mathcal Y} N_y\,t(y) \tag{11.4}\label{eq:orbit-estimator}
\]
is unbiased, satisfies \(\one^\top\widetilde\theta=1\), and has exact fixed-composition risk
\[
\E_\theta\|\widetilde\theta-\theta\|_2^2
=
\frac{d-1}{nd}\left(\frac{1}{S(W)}-1\right), \tag{11.5}\label{eq:orbit-risk}
\]
constant in \(\theta\).

\item If \(W\) is supported on a single informative orbit with template \(a\) (so \(B>d\)), then
\[
\frac{S(W)}{\chi^2(W(\cdot\mid b)\,\|\,W(\cdot\mid a))}
=
\frac{B-d}{AB-d^2}. \tag{11.6}\label{eq:orbit-slope}
\]
\end{enumerate}
\end{proposition}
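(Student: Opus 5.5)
The whole proof runs orbit by orbit, using one structural identity. On an orbit $\mathcal O$ with representative $y_{\mathcal O}$, every symbol $y=\pi y_{\mathcal O}$ satisfies $W(y\mid x)=\mu(y_{\mathcal O})\,a^{(\mathcal O)}_{\pi^{-1}x}$, and $\mu(y)=p_{\mathcal O}/|\mathcal O|$. Summing over the orbit, the stabilizer structure gives
\[
\sum_{y\in\mathcal O}\mu(y)\,f\bigl(a^{(y)}\bigr)=\frac{p_{\mathcal O}}{d!}\sum_{\pi\in S_d} f\bigl(a^{(\mathcal O)}\circ\pi^{-1}\bigr)
\]
for any function $f$ of the permuted template. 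This converts every orbit sum into an average over $S_d$ of the template. Averages of $a_{\pi^{-1}x}$ or of products $a_{\pi^{-1}x}a_{\pi^{-1}x'}$ over $\pi$ then reduce to elementary symmetric data: $\sum_x a_x=d$, $B=\sum_x a_x^2$, $A=\sum_x a_x^{-1}$, and the off-diagonal pair averages $\frac{1}{d(d-1)}\sum_{i\neq j}(\cdot)$.

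\textbf{Part (i).} Write
\[
\chi^2=\sum_y \frac{W(y\mid b)^2}{W(y\mid a)}-1 .
\]
On orbit $\mathcal O$ this becomes $p_{\mathcal O}\cdot\frac{1}{d(d-1)}\sum_{i\neq j}a_j^2/a_i$. The double sum over $i\neq j$ equals $\sum_{i,j}a_j^2/a_i-\sum_i a_i=AB-d$. Summing over orbits, and using $\sum_{\mathcal O}p_{\mathcal O}=1$ to absorb the $-1$, gives
\[
\sum_{\mathcal O} p_{\mathcal O}\,\frac{AB-d-d(d-1)}{d(d-1)}=\sum_{\mathcal O} p_{\mathcal O}\,\frac{AB-d^2}{d(d-1)},
\]
which is (11.3). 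Finiteness of the budget forces $a_x>0$ by Lemma~\ref{lem:common-support}.

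\textbf{Part (ii).} I would copy the template of the proof of Theorem~\ref{thm:grr-mixtures}(ii). Fix $u\in T_d$ and set $\varphi_u(y)=\langle u,t(y)\rangle/(dS)$, so that $\langle u,\widetilde\theta\rangle=\frac1n\sum_t\varphi_u(Y_t)$, using $\langle u,\one\rangle=0$. The statement $\one^\top\widetilde\theta=1$ holds because $t(y)\in T_d$.

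\emph{Unbiasedness.} Compute $\E_x[\langle u,t(Y)\rangle]=\sum_y\mu(y)a^{(y)}_x\langle u,a^{(y)}-\one\rangle$. The orbit average of $a_x a_{x'}$ over $\pi$ is $\alpha\,\one\{x=x'\}+\beta$ with $\alpha=\frac{dB-d^2}{d(d-1)}$. Hence this expectation equals $\sum_{\mathcal O}p_{\mathcal O}\,\alpha_{\mathcal O}u_x=d\,S(W)\,u_x$, so $\E_x\varphi_u=u_x$.

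\emph{Second moment.} This needs the orbit average of $a_x\langle u,a\rangle^2$, a cubic symmetric average. I expect it to be the main obstacle, since it generically involves $\sum a^3$. The trick is to average over $x$ with weights $\theta_x$ only after taking the $T_d$-trace, mirroring the GRR computation. Summing over an orthonormal basis $\{u\}$ of $T_d$ gives $\sum_u\langle u,t\rangle^2=\|t\|^2=B-d$, which does not depend on $x$ once $x$ is averaged over $\mu$. More precisely, the total error is
\[
\frac1{n^2}\sum_t \E\|t(Y_t)\|^2/(dS)^2-\frac1n\textstyle\sum_x\theta_x\|P_Te_x\|^2 .
\]
Here $\sum_y W(y\mid x)\|t(y)\|^2=\sum_{\mathcal O}p_{\mathcal O}(B_{\mathcal O}-d)$ for every $x$. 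This holds because $\|t\|^2$ is constant on the orbit and $\sum_{y\in\mathcal O}W(y\mid x)=p_{\mathcal O}$ by equivariance, so no cubic term survives. Using $\|P_Te_x\|^2=(d-1)/d$, the risk is
\[
\frac1n\left[\frac{d(d-1)S}{d^2S^2}-\frac{d-1}{d}\right],
\]
which is (11.5) and is constant in $\theta$.

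\textbf{Part (iii).} This is immediate: divide (11.2) by (11.3) with a single orbit and $p_{\mathcal O}=1$.
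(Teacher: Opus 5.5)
Your proposal is correct and follows essentially the same route as the paper. Each orbit sum is reduced to a uniform average over $S_d$ of the template, so the pair $(a,b)$ sees a uniformly random ordered pair of template coordinates. Unbiasedness then comes from the second-moment structure of the template, and the exact risk from the per-user decomposition: $\E_x\|t(Y)\|_2^2=\sum_{\mathcal O}p_{\mathcal O}(B_{\mathcal O}-d)$ minus the squared mean $d(d-1)S(W)^2$. The differences are cosmetic: you use the $\sum_y W(y\mid b)^2/W(y\mid a)-1$ form of $\chi^2$ and compute the pair averages of $a_xa_{x'}$ explicitly, whereas the paper expands $\sum_{j\neq i}(a_j-a_i)^2$ and uses a commutation argument to get $\sum_y\mu(y)\,t(y)t(y)^\top=d\,S(W)\,P_T$.
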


\begin{proof}
Fix an orbit \(\mathcal O\), and let \(m_{\mathcal O}:=|\mathcal O|\).
Since \(\mu\) is uniform on \(\mathcal O\), we have
\(
\mu(y_{\mathcal O})=p_{\mathcal O}/m_{\mathcal O}
\).
For \(y=\pi y_{\mathcal O}\in\mathcal O\), equivariance gives
\[
W(y\mid x)
=
\frac{p_{\mathcal O}}{m_{\mathcal O}}\,
a^{(\mathcal O)}_{\pi^{-1}x}
=
\frac{p_{\mathcal O}}{m_{\mathcal O}}\,
\bigl(1+t^{(\mathcal O)}_{\pi^{-1}x}\bigr). \tag{11.7}\label{eq:orbit-kernel}
\]
Hence each row places total mass \(p_{\mathcal O}\) on \(\mathcal O\), independently of \(x\).

For \eqref{eq:orbit-budget}, fix the pair \((1,2)\).
Using \eqref{eq:orbit-kernel}, the orbitwise contribution to the \(\chi^2\)-budget is
\[
\frac{p_{\mathcal O}}{m_{\mathcal O}}
\sum_{y\in\mathcal O}
\frac{\bigl(t_2(y)-t_1(y)\bigr)^2}{1+t_1(y)}.
\]
Averaging uniformly over the orbit is the same as averaging over all \(d!\) permutations of the template,
because each distinct orbit point arises the same number of times from labeled permutations.
Therefore the ordered pair of template coordinates seen at \((1,2)\) is uniformly distributed over all
ordered pairs \((i,j)\) with \(i\neq j\), and the orbit contribution equals
\[
p_{\mathcal O}\,
\frac{1}{d(d-1)}
\sum_{i\neq j}
\frac{(a_j^{(\mathcal O)}-a_i^{(\mathcal O)})^2}{a_i^{(\mathcal O)}}.
\]
Now
\[
\sum_{j\neq i}(a_j-a_i)^2
=
\sum_{j=1}^d a_j^2-2a_i\sum_{j=1}^d a_j+d a_i^2
=
B-2d a_i+d a_i^2
\]
by \eqref{eq:orbit-template-sum}.
Summing over \(i\) gives exactly \eqref{eq:orbit-budget}.

For \eqref{eq:orbit-estimator}--\eqref{eq:orbit-risk}, define
\[
M:=\sum_{y\in\mathcal Y}\mu(y)\,t(y)t(y)^\top.
\]
Because \(t(\pi y)=P_\pi t(y)\) and \(\mu\) is invariant, \(M\) commutes with every permutation matrix.
Hence
\(
M=\alpha P_T
\)
for some \(\alpha\ge 0\).
Taking traces on \(T_d\) gives
\[
\alpha
=
\frac{1}{d-1}\sum_{y\in\mathcal Y}\mu(y)\|t(y)\|_2^2
=
\frac{1}{d-1}\sum_{\mathcal O} p_{\mathcal O}\|t^{(\mathcal O)}\|_2^2
=
d\,S(W), \tag{11.8}\label{eq:orbit-matrix}
\]
so
\[
M=d\,S(W)\,P_T.
\]
For an input \(x\),
\[
\begin{aligned}
\E_x[t(Y)]
&=
\sum_{y\in\mathcal Y}\mu(y)\bigl(1+t_x(y)\bigr)t(y)\\
&=
\sum_{y\in\mathcal Y}\mu(y)t_x(y)t(y)
=
M e_x\\
&=
d\,S(W)\left(e_x-\frac{\one}{d}\right),
\end{aligned}
\]
where the second equality uses \(\sum_{y\in\mathcal Y}\mu(y)t(y)=0\), which holds because
\(\mu\) is \(S_d\)-invariant and \(t(y)\in T_d\), so the sum is an \(S_d\)-invariant vector in \(T_d\)
and must vanish.
Therefore \(\widetilde\theta\) in \eqref{eq:orbit-estimator} is unbiased,
and it takes values in the affine hyperplane
\(
\{\vartheta:\one^\top\vartheta=1\}.
\)

Next, \(\|t(y)\|_2=\|t^{(\mathcal O)}\|_2\) on each orbit \(\mathcal O\), and each row assigns total mass
\(p_{\mathcal O}\) to that orbit, so for every input \(x\),
\[
\E_x\|t(Y)\|_2^2
=
\sum_{\mathcal O} p_{\mathcal O}\,\|t^{(\mathcal O)}\|_2^2
=
d(d-1)\,S(W). \tag{11.9}\label{eq:orbit-second-moment}
\]
The mean contribution of one user with input \(x\) is
\(
d\,S(W)(e_x-\one/d)
\),
whose squared norm is
\[
d^2 S(W)^2
\left\|e_x-\frac{\one}{d}\right\|_2^2
=
d(d-1)\,S(W)^2.
\]
Since users are independent under fixed composition,
\[
\begin{aligned}
\E_\theta\|\widetilde\theta-\theta\|_2^2
&=
\frac{1}{nd^2 S(W)^2}
\left(
d(d-1)\,S(W)-d(d-1)\,S(W)^2
\right)\\
&=
\frac{d-1}{nd}\left(\frac{1}{S(W)}-1\right),
\end{aligned}
\]
which proves \eqref{eq:orbit-risk}.
Finally, \eqref{eq:orbit-slope} is just the quotient of \eqref{eq:orbit-signal} and \eqref{eq:orbit-budget}
for a single orbit.
\end{proof}

\begin{lemma}[Two-level reduction at fixed second moment]\label{lem:two-level-template}
Fix \(d\ge 2\) and \(B>d\).  Suppose that
\[
\mathcal K_B
:=
\left\{
a\in(0,\infty)^d:
\sum_{i=1}^d a_i=d,\quad
\sum_{i=1}^d a_i^2=B
\right\}
\]
is nonempty.  Then the functional
\[
A(a):=\sum_{i=1}^d \frac{1}{a_i}
\]
attains its minimum on \(\mathcal K_B\), and every minimizer has at most two
distinct coordinate values.  Consequently, for every \(a\in\mathcal K_B\),
\[
\sum_{i=1}^d \frac{1}{a_i}
\ge
\frac{s}{\alpha}+\frac{d-s}{\beta}
\]
for some \(s\in\{1,\dots,d-1\}\) and some \(\alpha>\beta>0\) satisfying
\[
s\alpha+(d-s)\beta=d,
\qquad
s\alpha^2+(d-s)\beta^2=B.
\]
\end{lemma}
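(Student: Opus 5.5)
Existence of a minimizer comes first. On \(\mathcal K_B\) the map \(a\mapsto A(a)\) is continuous, but \(\mathcal K_B\) need not be closed because coordinates may approach \(0\). We handle this with a sublevel-set argument. Pick any \(a^0\in\mathcal K_B\) and set \(c:=A(a^0)\). Every \(a\in\mathcal K_B\) with \(A(a)\le c\) satisfies \(a_i\ge 1/c\) for all \(i\). Moreover \(a_i\le d\) because \(\sum a_i=d\). The set
\[
\mathcal K_B\cap\{A\le c\}
\]
is therefore closed and bounded inside \([1/c,d]^d\), hence compact, and \(A\) attains its minimum on it. That minimum is also the global minimum over \(\mathcal K_B\).

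Next we restrict the structure of minimizers via Lagrange multipliers. The constraint gradients are \(\one\) and \(2a\). These are linearly dependent only when \(a\) is constant. But \(a\equiv 1\) gives \(B=d\), which is excluded because \(B>d\). So linear independence holds and the constraint set is a smooth codimension-two manifold near any minimizer. The minimizer lies in the open orthant, so the Lagrange conditions apply and give
\[
-\frac{1}{a_i^2}=\lambda+2\kappa a_i\qquad(i=1,\dots,d).
\]
Each coordinate is thus a positive root of the cubic \(2\kappa t^3+\lambda t^2+1=0\). To cap the number of roots, set \(g(t):=\lambda+2\kappa t+t^{-2}\) on \((0,\infty)\). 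Its derivative \(g'(t)=2\kappa-2t^{-3}\) is strictly increasing, so \(g\) is strictly convex and has at most two zeros. Hence a minimizer has at most two distinct coordinate values. If \(\kappa=0\), the equation \(\lambda t^2=-1\) forces a single value, which again contradicts \(B>d\).

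The minimizer therefore has exactly two distinct values \(\alpha>\beta>0\), taken with multiplicities \(s\) and \(d-s\) respectively, where \(1\le s\le d-1\). The two constraints then read
\[
s\alpha+(d-s)\beta=d,\qquad s\alpha^2+(d-s)\beta^2=B.
\]
For any \(a\in\mathcal K_B\) we have \(A(a)\ge A(a^\ast)=s/\alpha+(d-s)/\beta\), which is the stated bound.

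The main obstacle is the non-compactness of \(\mathcal K_B\), and the sublevel-set trick resolves it. The only other delicate point is the constraint qualification, which is exactly where the hypothesis \(B>d\) is used.
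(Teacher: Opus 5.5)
Your proof is correct and follows the paper's argument essentially step for step: compactness on the sublevel set $\{A\le c\}$ inside the box $[1/c,d]^d$ (the paper phrases this with a minimizing sequence), then Lagrange conditions justified by the linear independence of $\one$ and $2a$ (which is exactly where $B>d$ enters), then a count of positive roots. The only cosmetic difference is in that count: you bound the positive zeros of $g(t)=\lambda+2\kappa t+t^{-2}$ in one stroke by strict convexity ($g''(t)=6t^{-4}>0$), whereas the paper splits into the cases $\beta_0\ge 0$ (monotone, one zero) and $\beta_0<0$ (unimodal, at most two zeros); this makes your separate remark about $\kappa=0$ redundant but harmless.
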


\begin{proof}
Let \(a^{(0)}\in\mathcal K_B\) and set \(M_0=A(a^{(0)})\).  Take a minimizing
sequence \(a^{(m)}\in\mathcal K_B\) with \(A(a^{(m)})\le M_0+1\) eventually.
Since every summand \(1/a^{(m)}_i\) is positive, we have
\[
a^{(m)}_i\ge \frac{1}{M_0+1}
\qquad
\text{for every }i
\]
along the tail of the sequence.  Also \(a^{(m)}_i\le d\), because
\(\sum_i a^{(m)}_i=d\).  Hence a subsequence converges to some
\(a^\star\in[1/(M_0+1),d]^d\).  Passing to the limit in the two constraints
shows \(a^\star\in\mathcal K_B\), and continuity of \(A\) on this compact
box shows that \(a^\star\) is a minimizer.

Since \(B>d\), the minimizer is not the constant vector \(\one\).  Therefore
the gradients of the two equality constraints,
\(
\nabla(\sum_i a_i)=\one
\)
and
\(
\nabla(\sum_i a_i^2)=2a^\star
\),
are linearly independent at \(a^\star\).  The equality-constrained KKT
conditions are therefore necessary.  Thus there exist real multipliers
\(\alpha_0,\beta_0\) such that
\[
-\frac{1}{(a_i^\star)^2}+\alpha_0+2\beta_0 a_i^\star=0,
\qquad i=1,\dots,d.
\]
Equivalently, every coordinate \(a_i^\star\) is a positive zero of
\[
q(x):=\alpha_0+2\beta_0 x-x^{-2},
\qquad x>0.
\]

If \(\beta_0\ge 0\), then \(q'(x)=2\beta_0+2x^{-3}>0\) for every \(x>0\), so
\(q\) has at most one positive zero.  This would force all coordinates of
\(a^\star\) to be equal, impossible when \(B>d\).
Thus the nontrivial case has \(\beta_0<0\).  Then \(q'\) has exactly one
positive zero, namely \(x=(-1/\beta_0)^{1/3}\), and \(q\) is increasing before
this point and decreasing after it.  Hence \(q\) has at most two positive
zeros.  Therefore the coordinates of \(a^\star\) take at most two distinct
positive values.

Because \(B>d\), the one-level case is impossible.  Writing the two values as
\(\alpha>\beta>0\), with multiplicities \(s\) and \(d-s\), gives
\(s\in\{1,\dots,d-1\}\) and the two displayed constraints.  Since
\(a^\star\) minimizes \(A\) over \(\mathcal K_B\), the final inequality follows.
\end{proof}

\begin{theorem}[Low-budget optimality among all permutation-equivariant channels]\label{thm:equivariant-opt}
Assume \(d\ge 3\) and \(0<C\le C_\ast(d)\).
Among all permutation-equivariant channels \(W:[d]\to\Delta(\mathcal Y)\) with arbitrary finite output
alphabet and canonical pairwise \(\chi^2\)-budget \(C\), the minimum exact fixed-composition risk of the
projected unbiased inverse estimator is
\[
R^{\mathrm{fc}}_{\mathrm{sym}}(C)
=
\frac{d-1}{nd}\left(\frac{d+2\sqrt{d-1}}{C}-1\right). \tag{11.10}\label{eq:equivariant-opt-risk}
\]
It is attained by augmented GRR with
\[
\lambda_\ast=\sqrt{d-1},
\qquad
p=\frac{C}{C_\ast(d)}. \tag{11.11}\label{eq:equivariant-opt-channel}
\]
More generally, equality holds if and only if every informative orbit is, up to a permutation of the input
labels, the singleton GRR template
\[
a_\ast
=
\left(
\frac{d\sqrt{d-1}}{\sqrt{d-1}+d-1},
\frac{d}{\sqrt{d-1}+d-1},
\dots,
\frac{d}{\sqrt{d-1}+d-1}
\right), \tag{11.12}\label{eq:grr-template}
\]
and the remaining mass lies on null refinements.
Equivalently, every optimizer is an augmented-GRR channel up to splitting the informative GRR symbols
and/or the null symbol into finitely many conditionally identical refinements.
\end{theorem}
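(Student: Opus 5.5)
The plan is to reduce the problem, via Proposition~\ref{prop:orbit-template}, to a per-orbit slope inequality. By \eqref{eq:orbit-risk}, the risk equals $\frac{d-1}{nd}\bigl(\frac1{S(W)}-1\bigr)$, which is decreasing in $S(W)$. So it suffices to show that every permutation-equivariant $W$ with budget $C$ satisfies
\[
S(W)\le \frac{C}{d+2\sqrt{d-1}},
\]
with equality exactly in the stated cases. By \eqref{eq:orbit-signal} and \eqref{eq:orbit-budget}, both $S(W)$ and $C$ are sums over orbits with common weights $p_{\mathcal O}/(d(d-1))$. Null orbits have template $\one$, so $B_{\mathcal O}=A_{\mathcal O}=d$ and they contribute $0$ to both sums. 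Hence it is enough to prove the single-orbit bound
\[
\frac{B-d}{AB-d^2}\le \frac{1}{d+2\sqrt{d-1}}
\]
for every informative template ($B>d$), which is \eqref{eq:orbit-slope}, and to identify its equality cases. Summing with the weights $p_{\mathcal O}$ then gives the inequality. Attainment follows because augmented GRR with $\lambda_\ast=\sqrt{d-1}$ and $p=C/C_\ast(d)$ is equivariant and, by Theorem~\ref{thm:grr-mixtures}, has budget $C$ and $S=C/(d+2\sqrt{d-1})$.

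For the single-orbit bound, fix $B>d$. The numerator $B-d$ depends only on $B$, and the denominator $AB-d^2$ is increasing in $A$, so we need only consider templates minimizing $A$ on $\mathcal K_B$. Lemma~\ref{lem:two-level-template} reduces these to two-level templates: $s$ coordinates equal to $\alpha$ and $d-s$ equal to $\beta$, with $\alpha>\beta>0$. I would parametrize them as
\[
\alpha=1+(d-s)u,\qquad \beta=1-su,\qquad 0<u<1/s .
\]
This gives
\[
B-d=s(d-s)d\,u^2,\qquad A=\frac{s}{1+(d-s)u}+\frac{d-s}{1-su}.
\]
The target becomes the two-variable inequality
\[
(d+2\sqrt{d-1})\,s(d-s)d\,u^2\le AB-d^2 .
\]
For $s=1$ this is a GRR template with ratio $\lambda=\alpha/\beta$. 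In that case the slope equals $s(\lambda)/c(\lambda)=\lambda/((\lambda+1)(\lambda+d-1))$ from \eqref{eq:ratio-sc}. By \eqref{eq:ratio-derivative} this is uniquely maximized at $\lambda=\sqrt{d-1}$, where it equals $1/(d+2\sqrt{d-1})$, and that maximizer is exactly the template $a_\ast$ of \eqref{eq:grr-template}.

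The main obstacle is the remaining cases $2\le s\le d-1$. Here one must show that the slope is strictly below $1/(d+2\sqrt{d-1})$. Note that $s=d-1$ (a single low value, i.e.\ an inverted GRR) must also be strictly worse. My first attempt would be to clear denominators and write $AB-d^2$ as $u^2$ times a rational function of $u$. Then I would show that the maximum over $u$ of the slope $g_s(u)$ is a decreasing function of $s$. If that comparison is messy, I would instead compute $\sup_u g_s(u)$ explicitly: the critical-point equation is quadratic after substitution, as in the GRR case. This should give a closed form like $1/(d+2\sqrt{s(d-s)}+\dots)$, which can then be compared to the $s=1$ value using $s(d-s)\ge d-1$ for $1\le s\le d-1$, with equality only at $s\in\{1,d-1\}$. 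The case $s=d-1$ then needs a separate direct check.

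For the equality characterization, equality in the summed bound forces every informative orbit to attain the maximal slope. By the strictness above and the uniqueness in \eqref{eq:ratio-derivative}, its template must be a permutation of $a_\ast$. Any equivariant orbit carrying a permutation of $a_\ast$ consists of symbols whose conditional row-laws coincide with those of a GRR symbol. Such an orbit is therefore a conditionally identical refinement of the GRR block, and by Remark~\ref{rem:null-refinement} the null orbits are refinements of the null symbol. This yields the stated description of all optimizers.
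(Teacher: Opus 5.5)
Your proposal follows the paper's proof essentially step for step: you reduce to the orbitwise slope $(B-d)/(AB-d^2)$ via Proposition~\ref{prop:orbit-template}, minimize $A$ at fixed $B$ via Lemma~\ref{lem:two-level-template}, maximize the slope separately for each $s$ over two-level templates, and conclude with the comparison $s(d-s)\ge d-1$. The step you flag as the main obstacle and leave as a plan is exactly what the paper computes: with $\lambda=\alpha/\beta$ the slope is $\lambda/\bigl((\lambda+1)(d+s(\lambda-1))\bigr)$, whose maximum over $\lambda>0$ is $1/(d+2\sqrt{s(d-s)})$ at $\lambda=\sqrt{(d-s)/s}$, so your predicted closed form holds with no extra terms; your separate $s=d-1$ check succeeds because the critical point is then $1/\sqrt{d-1}<1$, outside $\lambda>1$ (the paper instead handles all $s\ge d/2$ at once via the boundary value $1/(2d)$).
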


\begin{proof}
By Proposition~\ref{prop:orbit-template}, minimizing the projected risk is equivalent to maximizing the
total signal coefficient \(S(W)\) at fixed total budget \(C\).
Because both \(S(W)\) and the budget in \eqref{eq:orbit-budget} add over orbits, it suffices to bound the
orbitwise slope \eqref{eq:orbit-slope}.

Fix one informative orbit and abbreviate \(A=A_{\mathcal O}\), \(B=B_{\mathcal O}\), \(a=a^{(\mathcal O)}\).
Since the orbit is informative, \(B>d\).  For fixed \(B\), the numerator \(B-d\) in
\eqref{eq:orbit-slope} is fixed, while the denominator \(AB-d^2\) is strictly increasing in \(A\).
Thus the orbitwise slope is maximized, at fixed \(B\), when
\[
A=\sum_{i=1}^d \frac{1}{a_i}
\]
is minimized subject to
\(
a_i>0
\),
\(
\sum_i a_i=d
\),
and
\(
\sum_i a_i^2=B
\).
By Lemma~\ref{lem:two-level-template}, every such minimizer is two-level.  Hence it has the form
\(
a_i\in\{\alpha,\beta\}
\),
\(
\alpha>\beta>0
\),
with multiplicities \(s\) and \(d-s\), where \(s\in\{1,\dots,d-1\}\).  Since
\(s\alpha+(d-s)\beta=d\), setting \(\lambda:=\alpha/\beta>1\) gives
\[
\alpha=\frac{d\lambda}{d+s(\lambda-1)},
\qquad
\beta=\frac{d}{d+s(\lambda-1)}.
\]
This is the subset-selection template of size \(s\), possibly with conditionally identical refinements
of the corresponding orbit.
A direct substitution into \eqref{eq:orbit-signal} and \eqref{eq:orbit-budget} yields
\[
S_s(\lambda)
=
\frac{s(d-s)(\lambda-1)^2}{(d-1)\bigl(d+s(\lambda-1)\bigr)^2},
\qquad
C_s(\lambda)
=
\frac{s(d-s)(\lambda-1)^2(\lambda+1)}
{\lambda(d-1)\bigl(d+s(\lambda-1)\bigr)}, \tag{11.13}\label{eq:two-level-sc}
\]
(we write \(C_s\) for \(C_{d,s}\) when \(d\) is fixed),
and therefore
\[
\frac{S_s(\lambda)}{C_s(\lambda)}
=
\frac{\lambda}{(\lambda+1)\bigl(d+s(\lambda-1)\bigr)}. \tag{11.14}\label{eq:two-level-slope}
\]
For fixed \(s\), differentiation gives
\[
\frac{d}{d\lambda}\frac{S_s(\lambda)}{C_s(\lambda)}
=
\frac{d-s-s\lambda^2}{(\lambda+1)^2\bigl(d+s(\lambda-1)\bigr)^2}. \tag{11.15}\label{eq:two-level-derivative}
\]
If \(1\le s< d/2\), the unique critical point on \((1,\infty)\) is
\(
\lambda_s^\ast=\sqrt{(d-s)/s}>1
\),
and
\[
\max_{\lambda>1}\frac{S_s(\lambda)}{C_s(\lambda)}
=
\frac{1}{d+2\sqrt{s(d-s)}}. \tag{11.16}\label{eq:two-level-max}
\]
This maximum is largest when \(s=1\), since \(s(d-s)\ge d-1\) for \(1\le s\le d-1\).

If \(s\ge d/2\), then \(d-s-s\lambda^2\le 0\) for every \(\lambda\ge 1\), so the slope in
\eqref{eq:two-level-slope} is non-increasing on \([1,\infty)\) and its supremum is the boundary value
\(1/(2d)\), which is strictly smaller than \(1/(d+2\sqrt{d-1})\) for \(d\ge 3\).
Thus every informative orbit satisfies
\[
\frac{S_{\mathcal O}}{C_{\mathcal O}}
\le
\frac{1}{d+2\sqrt{d-1}}, \tag{11.17}\label{eq:orbit-global-slope}
\]
with equality if and only if \(s=1\) and \(\lambda=\sqrt{d-1}\).

Summing \eqref{eq:orbit-global-slope} over the orbits and using Proposition~\ref{prop:orbit-template} gives
\[
S(W)\le \frac{C}{d+2\sqrt{d-1}}.
\]
Hence every permutation-equivariant channel with budget \(C\) obeys
\[
\E_\theta\|\widetilde\theta-\theta\|_2^2
\ge
\frac{d-1}{nd}\left(\frac{d+2\sqrt{d-1}}{C}-1\right).
\]
This is \eqref{eq:equivariant-opt-risk}.

Finally, augmented GRR with \(\lambda_\ast=\sqrt{d-1}\) and \(p=C/C_\ast(d)\) is feasible when
\(0<C\le C_\ast(d)\), and has
\[
S(W)
=
p\,\eta_{\lambda_\ast}^2
=
\frac{C}{d+2\sqrt{d-1}},
\]
so it attains \eqref{eq:equivariant-opt-risk}.
Equality throughout forces every informative orbit to attain equality in
\eqref{eq:orbit-global-slope}, hence to be the singleton template \eqref{eq:grr-template} up to a
permutation of the coordinates.
All remaining mass must lie on null orbits.
This is exactly the stated refinement-class characterization.
\end{proof}

\begin{remark}[Ordered-pair orbits]\label{rem:ordered-pairs}
The ordered-pair orbit \(\{(i,j):i\neq j\}\) corresponds to a three-level template
\(
a=(u,v,w,\dots,w)
\)
with \(u+v+(d-2)w=d\).
Here
\[
\begin{aligned}
S_{\mathrm{op}}
&=
\frac{u^2+v^2+(d-2)w^2-d}{d(d-1)},\\
C_{\mathrm{op}}
&=
\frac{\bigl(1/u+1/v+(d-2)/w\bigr)\bigl(u^2+v^2+(d-2)w^2\bigr)-d^2}{d(d-1)}.
\end{aligned}
\]
Thus the ordered-pair slope is exactly \(S_{\mathrm{op}}/C_{\mathrm{op}}=(B-d)/(AB-d^2)\) with
\(A=1/u+1/v+(d-2)/w\) and \(B=u^2+v^2+(d-2)w^2\).
The proof of Theorem~\ref{thm:equivariant-opt} shows that, at fixed \(B\), the minimizing \(A\) must be
two-level.
Hence an ordered-pair orbit can only attain the optimal slope on the boundary \(u=w\) or \(v=w\),
where it collapses to a refined GRR orbit.
In particular, ordered pairs never improve on augmented GRR in the low-budget regime.
\end{remark}

\section{GRR optimal within the subset-selection family}\label{sec:ss}

For \(1\le s\le d-1\) and \(\lambda>1\), let \(SS(d,s,\lambda)\) denote the subset-selection channel on
\(\binom{[d]}{s}\):
\[
W_{d,s,\lambda}(S\mid x)
=
\frac{d}{\binom ds\,(\lambda s+d-s)}
\Bigl(\lambda\,\one\{x\in S\}+\one\{x\notin S\}\Bigr). \tag{12.1}\label{eq:ss-kernel}
\]
Write
\[
C_{d,s}(\lambda)
:=
\frac{s(d-s)(\lambda-1)^2(\lambda+1)}
{\lambda(d-1)(\lambda s+d-s)} \tag{12.2}\label{eq:ss-budget}
\]
for the canonical pairwise \(\chi^2\) divergence, and
\[
R^{\mathrm{fc}}_{d,s}(\lambda)
:=
\frac{d-1}{n}\,
\frac{\lambda^2s(s-1)+2\lambda s(d-s)+(d-s)(d-s-1)}
{s(d-s)(\lambda-1)^2} \tag{12.3}\label{eq:ss-risk}
\]
for the exact projected fixed-composition risk.

Formula \eqref{eq:ss-budget} is immediate from a four-type count.
Fix \(a\neq b\).
Subsets \(S\in\binom{[d]}{s}\) split into the four classes
\[
a\in S,\ b\notin S;\qquad
b\in S,\ a\notin S;\qquad
a,b\in S;\qquad
a,b\notin S.
\]
The first two classes each have cardinality \(\binom{d-2}{s-1}\).
Under input \(a\), their likelihood-ratio values relative to input \(b\) are
\(\lambda^{-1}\) and \(\lambda\), respectively, while the last two classes are neutral.
Using
\[
\frac{\binom{d-2}{s-1}}{\binom ds}
=
\frac{s(d-s)}{d(d-1)},
\]
one obtains \eqref{eq:ss-budget}.

For the inverse estimator, let
\[
C_j:=\sum_{i=1}^n \one\{j\in S_i\},
\qquad j=1,\dots,d,
\]
be the coordinate-inclusion counts.
A direct count gives
\[
\begin{aligned}
p_s&:=\PP_x(x\in S)=\frac{\lambda s}{d+s(\lambda-1)},\\
r_s&:=\PP_x(j\in S\mid j\neq x)
=
\frac{s\bigl(\lambda(s-1)+d-s\bigr)}
{(d-1)\bigl(d+s(\lambda-1)\bigr)}.
\end{aligned} \tag{12.4}\label{eq:ss-pr}
\]
Hence
\[
\E_\theta\!\left[\frac{C_j}{n}\right]
=
r_s+(p_s-r_s)\theta_j,
\qquad
p_s-r_s
=
\frac{s(d-s)(\lambda-1)}
{(d-1)\bigl(d+s(\lambda-1)\bigr)}. \tag{12.5}\label{eq:ss-gap}
\]
Therefore
\[
\widehat\theta_j
:=
\frac{C_j/n-r_s}{p_s-r_s},
\qquad j=1,\dots,d, \tag{12.6}\label{eq:ss-inverse}
\]
is unbiased, and it already satisfies \(\sum_j \widehat\theta_j=1\) because \(\sum_j C_j=ns\).

Equivalently, \(SS(d,s,\lambda)\) is the two-level orbit template from
Section~\ref{sec:equivariant-opt} with
\[
\alpha=\frac{d\lambda}{d+s(\lambda-1)},
\qquad
\beta=\frac{d}{d+s(\lambda-1)},
\]
where \(\alpha\) appears on \(s\) coordinates and \(\beta\) on the remaining \(d-s\) coordinates.
Substituting this template into Proposition~\ref{prop:orbit-template} yields
\eqref{eq:ss-risk}, and the estimator \eqref{eq:ss-inverse} is exactly the projected orbit estimator
\eqref{eq:orbit-estimator} written in inclusion-count coordinates.

\begin{theorem}[GRR is the unique optimizer in the subset-selection family]\label{thm:ss-opt}
Fix \(d\ge 2\) and \(C>0\).
For each \(s\in\{1,\dots,d-1\}\), let \(\lambda_s(C)>1\) be the unique solution of
\[
C_{d,s}\bigl(\lambda_s(C)\bigr)=C. \tag{12.7}\label{eq:ss-lam}
\]
Define the matched-budget risk
\[
\mathfrak R_{d,s}(C):=R^{\mathrm{fc}}_{d,s}\bigl(\lambda_s(C)\bigr). \tag{12.8}\label{eq:ss-matched-risk}
\]
Then
\[
\mathfrak R_{d,1}(C)<\mathfrak R_{d,2}(C)<\cdots<\mathfrak R_{d,d-1}(C). \tag{12.9}\label{eq:ss-monotone}
\]
Equivalently, GRR (\(s=1\)) is the unique matched-budget minimizer in the entire subset-selection family.
\end{theorem}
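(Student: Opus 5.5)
By Proposition~\ref{prop:orbit-template}(ii) applied to the two-level template of $SS(d,s,\lambda)$, the exact projected fixed-composition risk has the form
\[
R^{\mathrm{fc}}_{d,s}(\lambda)=\frac{d-1}{nd}\left(\frac{1}{S_s(\lambda)}-1\right),
\]
with $S_s$ as in \eqref{eq:two-level-sc}. The first step is to check that this matches \eqref{eq:ss-risk}, which should be a routine simplification. Given this, \eqref{eq:ss-monotone} is equivalent to the statement that the matched-budget signal
\[
\sigma(s,C):=S_s\bigl(\lambda_s(C)\bigr)
\]
is strictly decreasing in $s\in\{1,\dots,d-1\}$. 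Well-posedness of $\lambda_s(C)$ comes from strict monotonicity of $C_{d,s}$ in $\lambda$, exactly as for $c'(\lambda)$ in the proof of Theorem~\ref{thm:grr-mixtures}. I will recheck that derivative for general $s$, since in the variable below it reduces to a positivity statement.

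I will relax $s$ to a real parameter $t\in[1,d-1]$ and work in the variable $u:=\lambda-1>0$. Writing $D:=d+tu$, the formulas \eqref{eq:two-level-sc} become
\[
\log S=\log t+\log(d-t)+2\log u-2\log D+\mathrm{const},
\]
\[
\log C=\log t+\log(d-t)+2\log u+\log(u+2)-\log(u+1)-\log D+\mathrm{const}.
\]
By the implicit function theorem on the level set $\{C=\text{const}\}$,
\[
\frac{\partial \log\sigma}{\partial t}\Big|_{C}
=\partial_t\log S-\partial_u\log S\cdot\frac{\partial_t\log C}{\partial_u\log C},
\]
and $\partial_u\log C>0$ (this is the monotonicity of $C$ in $\lambda$). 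Set $g:=1/t-1/(d-t)$. The partial derivatives are
\[
\partial_t\log S=g-\frac{2u}{D},\qquad
\partial_t\log C=g-\frac{u}{D},
\]
\[
\partial_u\log S=\frac{2d}{uD},\qquad
\partial_u\log C=\frac2u+\frac1{u+2}-\frac1{u+1}-\frac tD.
\]
So the goal reduces to the strict negativity of the determinant
\[
\Delta:=\partial_t\log S\,\partial_u\log C-\partial_u\log S\,\partial_t\log C
\]
on $(t,u)\in[1,d-1]\times(0,\infty)$. Strict decrease in the continuous variable $t$ then gives strict decrease at integer $s$, since $\lambda_t(C)$ is a smooth curve in $t$ by the implicit function theorem.

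The main obstacle is the sign analysis of $\Delta$. Expanding,
\[
\Delta=g\Bigl(\partial_u\log C-\frac{2d}{uD}\Bigr)
-\frac{2u}{D}\,\partial_u\log C+\frac{2d}{D^2}.
\]
The term $\partial_u\log C-2d/(uD)$ simplifies, using $2/u-2d/(uD)=2t/D$, to
\[
\frac{t}{D}+\frac1{u+2}-\frac1{u+1}.
\]
This expression can have either sign, and $g$ also changes sign at $t=d/2$. I would therefore clear the denominators $tD^2(d-t)u(u+1)(u+2)$ and aim to show the resulting polynomial in $(t,u,d)$ is negative by grouping terms. The case split $t\le d/2$ versus $t>d/2$ suggests using the bound $t(d-t)\ge d-1$, as in the proof of Theorem~\ref{thm:equivariant-opt}, for $t\le d/2$.

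If the polynomial grouping proves unwieldy, I will fall back on a discrete comparison. For consecutive $s$ and $s+1$ I would instead compare the curves $u\mapsto(C_s,S_s)$ and $u\mapsto(C_{s+1},S_{s+1})$ directly. The aim is to show that at each common budget $C$, the $(s+1)$-curve lies strictly below the $s$-curve, via the slope identity \eqref{eq:two-level-slope} together with the ordering $\lambda_{s+1}(C)$ versus $\lambda_s(C)$.
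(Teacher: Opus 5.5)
Your reduction is correct and is, up to coordinates, the paper's own argument. The paper writes the matched-budget risk as $\frac{d-1}{ndC}\bigl(h-C\bigr)$ with $h=(\lambda+1)(d+s(\lambda-1))/\lambda=C/S$, and differentiates $h$ implicitly along the level set $C_{d,s}(\lambda)=C$. That is the same computation as your $-\frac{d}{dt}\log\sigma$, since $\log h=\log C-\log S$. The gap is that you stop at the only step with content. The negativity of $\Delta$ is named as the goal and flagged as the main obstacle, then deferred to an unspecified grouping of terms with a case split at $t=d/2$, or to a fallback. As written, the proposal does not yet prove the theorem.

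That step closes exactly, with no case split and no use of $t(d-t)\ge d-1$. Set $D=d+tu$ and $L:=u^2+2u+2=\lambda^2+1$. Your two pieces simplify to
\[
\partial_u\log C-\frac{2d}{uD}=\frac{tL-d}{D(u+1)(u+2)},
\qquad
-\frac{2u}{D}\,\partial_u\log C+\frac{2d}{D^2}=-\frac{2L}{D(u+1)(u+2)},
\]
where the second uses $2tu+2d=2D$. Inserting $g=(d-2t)/\bigl(t(d-t)\bigr)$, the numerator collapses: $(d-2t)(tL-d)-2t(d-t)L=-d\bigl(d+t(L-2)\bigr)$. Hence
\[
\Delta=-\frac{d\,\bigl(d+tu(u+2)\bigr)}{t(d-t)\,(d+tu)\,(u+1)(u+2)}<0
\qquad\text{for all }0<t<d,\ u>0.
\]
Your other ingredient is also immediate:
\[
\partial_u\log C=\frac{2d+tu}{uD}-\frac{1}{(u+1)(u+2)}>\frac1u-\frac{1}{(u+1)(u+2)}>0.
\]
Together these give strict decrease of $\sigma$ in $t$, which is exactly the positivity of \eqref{eq:ss-derivative} in the paper. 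The sign-indefinite factors you worried about, $g$ and $tL-d$, enter only through their product. That product is always dominated by the $-2L$ term.

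Your fallback would not work as sketched. $\partial_s\log C_{d,s}$ has the sign of $d^2-2ds-s^2(\lambda-1)$, so whether $\lambda_{s+1}(C)$ lies above or below $\lambda_s(C)$ depends on $(d,s,C)$. Moreover, for $s<d/2$ the slope \eqref{eq:two-level-slope} is not monotone in $\lambda$. So the ordering of the calibrated $\lambda$'s alone cannot decide the comparison; you would need quantitative information equivalent to the derivative computation above.
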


\begin{proof}
If \(d=2\), then \(s=1\) is the only admissible subset size, so the claim is immediate.
Assume henceforth that \(d\ge 3\).
Extend \(C_{d,s}(\lambda)\) and \(R^{\mathrm{fc}}_{d,s}(\lambda)\) to real \(s\in[1,d-1]\) by the same formulas.
For fixed \(s\in(1,d-1)\) and \(\lambda>1\),
\[
\partial_\lambda C_{d,s}(\lambda)
=
\frac{s(d-s)(\lambda-1)\bigl(2d\lambda^2+d\lambda+d+\lambda^3 s-\lambda^2 s+\lambda s-s\bigr)}
{\lambda^2(d-1)\bigl(d+s(\lambda-1)\bigr)^2}
>0. \tag{12.10}\label{eq:ss-budget-derivative}
\]
Hence, for every fixed \(C>0\), the equation \(C_{d,s}(\lambda)=C\) has a unique solution
\(\lambda_s(C)>1\), and by the implicit function theorem the map
\(s\mapsto \lambda_s(C)\) is \(C^1\) on \((1,d-1)\).
Since monotonicity on the real interval \([1,d-1]\) implies monotonicity on the integer points,
it suffices to differentiate the real-valued function \(s\mapsto \mathfrak R_{d,s}(C)\).

Using \eqref{eq:ss-budget} and \eqref{eq:ss-risk},
\[
\mathfrak R_{d,s}(C)
=
\frac{d-1}{ndC}
\left(
\frac{(\lambda_s(C)+1)\bigl(d+s(\lambda_s(C)-1)\bigr)}{\lambda_s(C)}-C
\right). \tag{12.11}\label{eq:ss-risk-simple}
\]
Write \(\lambda=\lambda_s(C)\) for brevity.
Since \(C_{d,s}(\lambda)=C\), implicit differentiation gives
\[
\frac{d\lambda}{ds}
=
-\frac{\partial_s C_{d,s}(\lambda)}{\partial_\lambda C_{d,s}(\lambda)}
=
-\frac{\lambda(\lambda-1)(\lambda+1)\bigl(d^2-2ds-s^2(\lambda-1)\bigr)}
{s(d-s)\bigl(2d\lambda^2+d\lambda+d+\lambda^3 s-\lambda^2 s+\lambda s-s\bigr)}. \tag{12.12}\label{eq:implicit-lambda}
\]
Also,
\[
\begin{aligned}
\partial_s\!\left(
\frac{(\lambda+1)(d+s(\lambda-1))}{\lambda}
\right)
&=
\frac{\lambda^2-1}{\lambda},\\
\partial_\lambda\!\left(
\frac{(\lambda+1)(d+s(\lambda-1))}{\lambda}
\right)
&=
\frac{\lambda^2 s+s-d}{\lambda^2}.
\end{aligned} \tag{12.13}\label{eq:ss-partials}
\]
Substituting \eqref{eq:implicit-lambda} and \eqref{eq:ss-partials} into the derivative of
\eqref{eq:ss-risk-simple} and simplifying yields
\[
\frac{\partial}{\partial s}\mathfrak R_{d,s}(C)
=
\frac{d-1}{nC}\,
\frac{
(\lambda-1)(\lambda+1)\bigl(d+\lambda s-s\bigr)\bigl(d+\lambda^2 s-s\bigr)
}{
\lambda s(d-s)\bigl(2d\lambda^2+d\lambda+d+\lambda^3 s-\lambda^2 s+\lambda s-s\bigr)
}. \tag{12.14}\label{eq:ss-derivative}
\]
The denominator satisfies
\[
2d\lambda^2+d\lambda+d+\lambda^3 s-\lambda^2 s+\lambda s-s
=
2d\lambda^2+d\lambda+d+s(\lambda-1)(\lambda^2+1)>0. \tag{12.15}\label{eq:ss-den-positive}
\]
Every other factor in \eqref{eq:ss-derivative} is also strictly positive for
\(1<s<d-1\) and \(\lambda>1\).
Hence \(s\mapsto \mathfrak R_{d,s}(C)\) is strictly increasing on \((1,d-1)\).
Since the formulas \eqref{eq:ss-budget} and \eqref{eq:ss-risk} extend continuously to real \(s\in[1,d-1]\)
and \(\lambda_s(C)\) is smooth on this interval (by the implicit function theorem and
\eqref{eq:ss-budget-derivative}), strict monotonicity on the open interval implies
\(\mathfrak R_{d,1}(C)<\mathfrak R_{d,s}(C)\) for every integer \(s\ge 2\),
proving \eqref{eq:ss-monotone}.
\end{proof}

\begin{remark}[Thinned subset selection]\label{rem:ss-thinned}
If one allows an additional null symbol and puts only a fraction \(p\) of the mass on
\(SS(d,s,\lambda)\), then Proposition~\ref{prop:orbit-template} shows that the exact
fixed-composition risk of the affine-projected inverse estimator has the form
\[
\frac{d-1}{nd}\left(\frac{1}{p\kappa_{d,s}(\lambda)}-1\right),
\qquad
\kappa_{d,s}(\lambda)=
\frac{s(d-s)(\lambda-1)^2}{(d-1)\bigl(d+s(\lambda-1)\bigr)^2},
\]
while the canonical budget becomes \(p\,C_{d,s}(\lambda)\).
The ratio
\[
\frac{\kappa_{d,s}(\lambda)}{C_{d,s}(\lambda)}
=
\frac{\lambda}{(\lambda+1)\bigl(d+s(\lambda-1)\bigr)}
\]
is maximized at \(s=1\) and \(\lambda=\sqrt{d-1}\).
Thus even after adding null mass, the subset-selection family never beats the low-budget
augmented-GRR frontier of Sections~\ref{sec:thinning} and~\ref{sec:equivariant-opt}.
\end{remark}

\section{Discussion}\label{sec:new-discussion}

The results above separate four layers.

First, Theorem~\ref{thm:near-vertex-cr} and Theorem~\ref{thm:assouad-risk} show that the worst pairwise
canonical quantity
\(
\chi_\ast(W)
\)
already forces a universal estimation lower bound of order
\(
(d-1)/(n\chi_\ast(W))
\).
Thus \(\chi_\ast\) is not only the canonical privacy invariant of the present paper; it is also a universal
statistical obstruction.

Second, Theorem~\ref{thm:symmetrization} shows that, for the uniform-point Fisher criterion, asymmetry never
helps.  One may average over \(S_d\), decrease the pairwise \(\chi^2\) budget, and equalize the Fisher
eigenvalues.

Third, Theorem~\ref{thm:grr-mixtures} identifies the exact optimal frontier inside the natural class of GRR
blocks and null refinements.  The optimal projected inverse-estimator risk is attained by the thinned GRR
mechanism with
\(
\lambda_\ast=\sqrt{d-1}
\)
up to the threshold \(C_\ast(d)\), followed by calibrated GRR.
Theorem~\ref{thm:ss-opt} shows that within the subset-selection family GRR is again the unique
matched-budget optimizer.

Fourth, Theorem~\ref{thm:equivariant-opt} resolves the full permutation-equivariant problem in the
low-budget regime \(0<C\le C_\ast(d)\).
Every permutation-equivariant channel decomposes into orbit templates.
For the projected unbiased inverse-estimator risk, the signal coefficient and the canonical budget are
additive over orbits, and the orbitwise signal-to-budget slope is maximized only by the singleton GRR
template with \(\lambda_\ast=\sqrt{d-1}\).
Thus low-budget optimality is not merely a feature of the GRR-block class: it is the exact symmetric
optimum, up to conditionally identical refinements of the informative or null symbols.

What remains open is the global symmetric frontier for \(C>C_\ast(d)\).
The low-budget argument is linear in the orbitwise slope and therefore closes the problem only up to the
first tangency point of the GRR curve.
Beyond that threshold, ordered-pair orbits and higher orbit types may still contribute to the true upper
concave envelope, and a complete classification of the high-budget permutation-equivariant frontier remains
to be done.

Several further directions are immediate.
One can ask for sharp nonasymptotic minimax constants beyond the projected unbiased class, for exact
finite-\(n\) comparisons between the privacy curve and the estimation frontier, and for analogous orbit
classifications under additional structural constraints such as bounded message length or sparse outputs.
More broadly, the mechanism-design problem in the shuffle model appears to be governed by a finite-dimensional
orbit geometry rather than by the local-DP extremizers themselves.

\end{document}